\DeclareFontFamily{OMX}{MnSymbolE}{}
\DeclareSymbolFont{MnLargeSymbols}{OMX}{MnSymbolE}{m}{n}
\DeclareFontShape{OMX}{MnSymbolE}{m}{n}{
    <-6>  MnSymbolE5
   <6-7>  MnSymbolE6
   <7-8>  MnSymbolE7
   <8-9>  MnSymbolE8
   <9-10> MnSymbolE9
  <10-12> MnSymbolE10
  <12->   MnSymbolE12
}{}
\DeclareFontShape{OMX}{MnSymbolE}{b}{n}{
    <-6>  MnSymbolE-Bold5
   <6-7>  MnSymbolE-Bold6
   <7-8>  MnSymbolE-Bold7
   <8-9>  MnSymbolE-Bold8
   <9-10> MnSymbolE-Bold9
  <10-12> MnSymbolE-Bold10
  <12->   MnSymbolE-Bold12
}{}
\let\llangle\@undefined
\let\rrangle\@undefined
\DeclareMathDelimiter{\llangle}{\mathopen}%
                     {MnLargeSymbols}{'164}{MnLargeSymbols}{'164}
\DeclareMathDelimiter{\rrangle}{\mathclose}%
                     {MnLargeSymbols}{'171}{MnLargeSymbols}{'171}
\newcommand{\pconst}{k}
\newcommand{\pconsts}{k_\sys}
\newcommand{\pconstse}{k_\both}
\newcommand{\pconsta}{k_\env}
\newcommand{\pconstb}{k_\both}
\newcommand{\cut}{\hat\nproc}
\newcommand{\mconf}{\mathsf{m}}
\newcommand{\FClass}{\mathfrak{F}}
\newcommand{\Synthesis}[4]{\textsc{Synth}(#1,#2,#3,#4)}
\newcommand{\Gameproblem}[3]{\textsc{Game}(#1,#2,#3)}
\newcommand{\confset}{\C}
\newcommand{\myparagraph}[1]{\noindent\emph{#1.}~}
\DeclareMathAlphabet{\mymathbb}{U}{BOONDOX-ds}{m}{n}
\DeclareSymbolFont{greeksymbols}{U}{pxmia}{m}{it}
\newcommand\myshade{85}
\colorlet{mylinkcolor}{violet}
\colorlet{mycitecolor}{YellowOrange}
\colorlet{myurlcolor}{Aquamarine}
\newcommand{\C}{\mathbb{C}}
\newcommand{\G}{\mathcal{G}}
\newcommand{\Zero}{\mymathbb{0}}
\newcommand{\N}{\mathbb{N}}
\newcommand{\X}{\mathcal{X}}
\newcommand{\Win}[3]{\mathit{Win}(#3)}
\newcommand{\NWin}[3]{\mathit{Win}_{\mathsf{norm}}(#3)}
\renewcommand{\phi}{\varphi}
\newcommand{\conff}{\conf}
\newcommand{\confp}{\conf'}
\newcommand{\confpp}{\conf''}
\newcommand{\stratf}{f}
\newcommand{\stratpp}{f''}
\newcommand{\updp}{\upd'}
\newcommand{\updpp}{\upd''}
\newcommand{\hatconf}{D}
\newcommand{\hatconff}{D}
\newcommand{\hatconfp}{D'}
\newcommand{\hatconfpp}{D''}
\newcommand{\hatstrat}{g}
\newcommand{\hatstratf}{g}
\newcommand{\hatstratpp}{g''}
\newcommand{\hatupd}{\eta}
\newcommand{\hatupdp}{\eta'}
\newcommand{\hatupdpp}{\eta''}
\newcommand{\stratnorm}{\strat_N}
\newcommand{\strataux}{\strat_\textup{aux}}
\newcommand{\df}{=}
\newcommand{\exend}{\hfill \ensuremath{\lhd}}
\newcommand{\ttrue}{\mathit{true}}
\newcommand{\ffalse}{\mathit{false}}
\newcommand{\ftrans}[3]{\llbracket {#1} \rrbracket_{#2,#3}}
\newcommand{\bound}{B}
\newcommand{\mem}{\text{mem}}
\newcommand{\tvec}{\vec{k}}
\newcommand{\locAcc}{\mathfrak{C}}
\newcommand{\game}{\mathcal{G}}
\newcommand{\Acc}{\mathcal{F}}
\newcommand{\Loc}{L}
\newcommand{\loc}{\ell}
\newcommand{\upd}{\tau}
\newcommand{\conf}{C}
\newcommand{\play}{\pi}
\newcommand{\AllConf}{\mathit{Conf}}
\newcommand{\Structure}{\mathcal{S}}
\newcommand{\Updates}{T}
\newcommand{\sysUpdates}{\Updates_\sys}
\newcommand{\envUpdates}{\Updates_\env}
\newcommand{\cutoff}{\vec{k_0}}
\newcommand{\outm}[1]{\mathit{out}_{#1}}
\newcommand{\inm}[1]{\mathit{in}_{#1}}
\newcommand{\Plays}{\mathit{Plays}}
\newcommand{\cPlays}[1]{\Plays_{#1}}
\newcommand{\dist}{d}
\newcommand{\ConstF}{K}
\newcommand{\maxdist}{\mathit{Max}}
\newcommand{\dec}[2]{#1[#2\scalebox{0.8}{\textup{--\,--}}]}
\newcommand{\inc}[2]{#1[#2\scalebox{0.8}{\textup{++}]}}
\newcommand{\locsetgezero}[1]{\left(#1\right) \ge 0}
\newcommand{\wlocsetgezero}[1]{#1 \ge 0}
\newcommand{\tcm}{M}
\newcommand{\tcmQ}{Q}
\newcommand{\tcmq}{q}
\newcommand{\tcminit}{\tcmq_0}
\newcommand{\tcmF}{F}
\newcommand{\tcmfinal}{\tcmq_h}
\newcommand{\tcmcounter}{\mathsf{c}}
\newcommand{\tcmT}{\Delta}
\newcommand{\tcmt}{t}
\newcommand{\Locok}{\Loc_\checkmark}
\newcommand{\loclet}[1]{\llangle#1\rrangle}
\newcommand{\accCond}[1]{\Acc_{#1}}
\newcommand{\accConde}[1]{\Acc_{#1}^{\,\env}}
\newcommand{\Alpha}{A}
\newcommand{\Events}{\Sigma}
\newcommand{\sEvents}{\Sigma_\sys}
\newcommand{\eEvents}{\Sigma_\env}
\newcommand{\Procs}{\mathbb{P}}
\newcommand{\DS}{\Events}
\newcommand{\strat}{f}
\newcommand{\fo}{\textup{FO}}
\newcommand{\fod}{\textup{FO}^2}
\renewcommand{\succ}{\text{+1}}
\newcommand{\dataeq}{\sim}
\newcommand{\dataord}{\lesssim}
\newcommand{\dataordinv}{\gtrsim}
\newcommand{\env}{\mathsf{e}}
\newcommand{\sys}{\mathsf{s}}
\newcommand{\both}{{\sys\env}}
\newcommand{\nproc}{N}
\newcommand{\Pos}[1]{\mathit{Pos}(#1)}
\newcommand{\synNpos}{\mathbb{N}}
\newcommand{\proc}{p}
\newcommand{\wordLetter}[2]{#1[#2]}
\newcommand{\event}{\sigma}
\newcommand{\ev}{\sigma}
\newcommand{\sProcs}{\mathbb{P}_{\!\sys}}
\newcommand{\eProcs}{\mathbb{P}_{\!\env}}
\newcommand{\seProcs}{\mathbb{P}_{\!\both}}
\newcommand{\sN}{\mathcal{N}_\sys}
\newcommand{\eN}{\mathcal{N}_\env}
\newcommand{\seN}{\mathcal{N}_\both}
\newcommand{\Types}{\mathbb{T}}
\newcommand{\xRightarrow}[2][]{\ext@arrow 0359\Rightarrowfill@{#1}{#2}}
\newcommand{\problemtitle}[1]{\gdef\@problemtitle{#1}}
\newcommand{\probleminput}[1]{\gdef\@probleminput{#1}}
\newcommand{\problemquestion}[1]{\gdef\@problemquestion{#1}}
  \par\addvspace{.2\baselineskip}
    \normalsize \textup{\textbf{Input:}} & \normalsize \@probleminput \\[0.5ex]
    \normalsize \textup{\textbf{Question:}} & \normalsize \@problemquestion
  \par\addvspace{.2\baselineskip}
\newif\iflong
\begin{document}

\title{Parameterized Synthesis for\\Fragments of First-Order Logic over Data Words\thanks{Partly supported by ANR FREDDA (ANR-17-CE40-0013).}}

\author{B{\'e}atrice B{\'e}rard\inst{1} \and Benedikt Bollig\inst{2} \and Mathieu Lehaut\inst{1} \and Nathalie Sznajder\inst{1}}

\institute{
Sorbonne Universit{\'e}, CNRS, LIP6, F-75005 Paris, France
\and
CNRS, LSV \& ENS Paris-Saclay, Universit{\'e} Paris-Saclay, France
}

\maketitle

\begin{abstract}
We study the synthesis problem for systems with a parameterized number of processes. As in the classical case due to Church, the system selects actions depending on the program run so far, with the aim of fulfilling a given specification. The difficulty is that, at the same time, the environment executes actions that the system cannot control. In contrast to the case of fixed, finite alphabets, here we consider the case of parameterized alphabets. An alphabet reflects the number of processes that are static but unknown. The synthesis problem then asks whether there is a finite number of processes for which the system can satisfy the specification. This variant is already undecidable for very limited logics. Therefore, we consider a first-order logic without the order on word positions. We show that even in this restricted case synthesis is undecidable if both the system and the environment have access to all processes. On the other hand, we prove that the problem is decidable if the environment only has access to a bounded number of processes. In that case, there is even a cutoff meaning that it is enough to examine a bounded number of process architectures to solve the synthesis problem.
\end{abstract}

\section{Introduction}

Synthesis deals with the problem of automatically generating a program that satisfies a given specification. The problem goes back to Church \cite{sisl1957-Chu}, who formulated it as follows: The environment and the system alternately select an input symbol and an output symbol from a finite alphabet, respectively, and in this way generate an infinite sequence. The question now is whether the system has a \emph{winning strategy}, which guarantees that the resulting infinite run is contained in a given ($\omega$)-regular language representing the specification, no matter how the environment behaves. This problem is decidable and very well understood \cite{TAMS138-BL,AIOCT1972-Rab}, and it has been extended in several different ways (e.g., \cite{JacobsTZ18,HornTW015,PnueliR90,JenkinsORW11,VelnerR11}).

In this paper, we consider a variant of the synthesis problem that allows us to model programs with a variable number of processes. As we then deal with an unbounded number of process identifiers, a fixed finite alphabet is not suitable anymore. It is more appropriate to use an infinite alphabet, in which every letter contains a process identifier and a program action. One can distinguish two cases here. In \cite{FigueiraP18}, a potentially infinite number of data values are involved in an infinite program run (e.g. by dynamic process generation). In a \emph{parameterized} system \cite{Esparza14,Bloem:2015}, on the other hand, one has an unknown but \emph{static} number of processes so that, along each run, the number of processes is finite. In this paper, we are interested in the latter, i.e., parameterized case. Parameterized programs are ubiquitous and occur, e.g., in distributed algorithms, ad-hoc networks, telecommunication protocols, cache-coherence protocols, swarms robotics, and biological systems. The synthesis question asks whether the system has a winning strategy for some number of processes (existential version) or no matter how many processes there are (universal version).

Over infinite alphabets, there are a variety of different specification languages (e.g., \cite{Bojanczy06,DL-tocl08,Kaminski1994,DemriDG12,LibkinTV15,FrenkelGS19,SchroderKMW17}). Unlike in the case of finite alphabets, there is no canonical definition of regular languages. In fact, the synthesis problem has been studied for N-memory automata \cite{BrutschT16}, the Logic of Repeating Values \cite{FigueiraP18}, and register automata \cite{khalimov_et_al:concur:2019,exibard_et_al:concur:2019,KhalimovMB18}. Though there is no agreement on a ``regular'' automata model, first-order (FO) logic over data words can be considered as a canonical logic, and this is the specification language we consider here. In addition to classical FO logic on words over finite alphabets, it provides a predicate $x \sim y$ to express that two events $x$ and $y$ are triggered by the same process. Its two-variable fragment FO$^2$ has a decidable emptiness and universality problem \cite{Bojanczy06} and is, therefore, a promising candidate for the synthesis problem.

Previous generalizations of Church's synthesis problem to infinite alphabets were generally \emph{synchronous} in the sense that the system and the environment perform their actions in strictly alternating order. This assumption was made, e.g., in the above-mentioned recent papers \cite{BrutschT16,FigueiraP18,khalimov_et_al:concur:2019,exibard_et_al:concur:2019,KhalimovMB18}. If there are several processes, however, it is realistic to relax this condition, which leads us to an \emph{asynchronous} setting in which the system has no influence on when the environment acts. Like in \cite{GS-tocl12}, where the asynchronous case for a fixed number of processes was considered, we only make the reasonable fairness assumption that the system is not blocked forever.

In summary, the synthesis problem over infinite alphabets can be classified as $(i)$ parameterized vs. dynamic, $(ii)$ synchronous vs. asynchronous, and $(iii)$ according to the specification language (register automata, Logic of Repeating Values, FO logic, etc.). As explained above, we consider here the \emph{parameterized asynchronous case for specifications written in FO logic}. To the best of our knowledge, this combination has not been considered before. For flexible modeling, we also distinguish between three types of processes: those that can only be controlled by the system; those that can only be controlled by the environment; and finally those that can be triggered by both. A partition into system and environment processes is also made in \cite{FinkbeinerO17,beutner_et_al:concur:2019}, but for a fixed number of processes and in the presence of an arena in terms of a Petri net.

Let us briefly describe our results. We show that the general case of the synthesis problem is undecidable for FO$^2$ logic. This follows from an adaptation of an undecidability result from \cite{figueira2018playing,FigueiraP18} for a fragment of the Logic of Repeating Values \cite{DemriDG12}. We therefore concentrate on an orthogonal logic, namely FO without the order on the word positions. First, we show that this logic can essentially count processes and actions of a given process up to some threshold. Though it has limited expressive power (albeit orthogonal to that of FO$^2$), it leads to intricate behaviors in the presence of an uncontrollable environment. In fact, we show that the synthesis problem is still undecidable. Due to the lack of the order relation, the proof requires a subtle reduction from the reachability problem in 2-counter Minsky machines. However, it turns out that the synthesis problem is decidable if the number of processes that are controllable by the environment is bounded, while the number of system processes remains unbounded. In this case, there is even a cutoff $k$, an important measure for parameterized systems (cf.\ \cite{Bloem:2015} for an overview): If the system has a winning strategy for $k$ processes, then it has one for any number of processes greater than $k$, and the same applies to the environment. The proofs of both main results rely on a reduction of the synthesis problem to \emph{parameterized vector games}, certain turn-based games in which, similar to Petri nets, tokens corresponding to the processes are moved around between states.

The paper is structured as follows. In Section~\ref{sec;preliminaries}, we define FO logic (especially FO without word order), and in Section~\ref{sec:synthesis}, we present the parameterized synthesis problem. In Section~\ref{sec:games}, we transform a given formula into a normal form and finally into a parameterized vector game. Based on this reduction, we investigate cutoff properties and show our (un)decidability results in Section~\ref{sec:results}. We conclude in Section~\ref{sec:conclusion}. Missing proof details are available in the appendix.


\section{Preliminaries}\label{sec;preliminaries}

For a finite or infinite alphabet $\Events$, let $\Sigma^\ast$ and
$\Sigma^\omega$ denote the sets of finite and, respectively, infinite words over $\Sigma$.
The empty word is $\varepsilon$.
Given $w \in \Events^\ast \cup \Events^\omega$,
let $|w|$ denote the length of $w$ and $\Pos{w}$ its set of positions:
$|w| = n$ and $\Pos{w} = \{1,\ldots,n\}$ if $w = \event_1\event_2 \ldots \event_n \in \Events^\ast$, and
$|w| = \omega$ and $\Pos{w} = \{1,2,\ldots\}$ if $w \in \Events^\omega$.
Let $\wordLetter{w}{i}$ be the $i$-th letter of $w$ for all $i \in \Pos{w}$.

\paragraph{\textup{\textbf{Executions.}}}

We consider programs involving a finite (but not fixed) number of processes.
Processes are controlled by antagonistic protagonists, System and Environment.
Accordingly, each process has a \emph{type} among $\Types = \{\sys,\env,\both\}$,
and we let $\sProcs$, $\eProcs$, and $\seProcs$ denote the
pairwise disjoint finite sets of processes controlled by System,
by Environment, and by both System and Environment, respectively.
We let $\Procs$ denote the triple $(\sProcs,\eProcs,\seProcs)$.
Abusing notation, we sometimes refer to $\Procs$ as the disjoint
union $\sProcs \cup \eProcs \cup \seProcs$.

For a set $S$, vectors $s \in S^\Types$ are usually referred to as triples
$s = (s_\sys, s_\env, s_\both)$. Moreover, for $s,s' \in \N^\Types$, we write
$s \le s'$ if $s_\theta \le s'_\theta$ for all $\theta \in \Types$.
Finally, let $s + s' = (s_\sys + s_\sys',\, s_\env+s_\env',\, s_\both+s_\both')$.

Processes can execute actions from a finite alphabet $\Alpha$.
Whenever an action is executed, we would like to know whether
it was triggered by System or by Environment. Therefore,
$\Alpha$ is partitioned into
$\Alpha = \Alpha_\sys \uplus \Alpha_\env$.
Let $\sEvents = \Alpha_\sys \times (\sProcs \cup \seProcs)$ and
$\eEvents = \Alpha_\env \times (\eProcs \cup \seProcs)$.
Their union $\Events = \sEvents \cup \eEvents$ is the set of \emph{events}.
A word $w \in \Events^\ast \cup \Events^\omega$ is called a $\Procs$-\emph{execution}.

\paragraph{\textup{\textbf{Logic.}}}

\newcommand{\FO}{\fo_{\!\Alpha}}
\newcommand{\FOfull}{\fo_{\!\Alpha}[\sim,<,+1]}
\newcommand{\FOtwo}{\fo_{\!\Alpha}^2[\sim,<,+1]}
\newcommand{\FOtwoorder}{\fo_{\!\Alpha}^2[\sim,<]}
\newcommand{\FOdata}{\fo_{\!\Alpha}[\sim]}
\newcommand{\FOtwodata}{\fo_{\!\Alpha}^2[\sim]}

\newcommand{\FOtwoR}{\fo_{\!\Alpha}^2[R]}
\newcommand{\mFOtwoR}{\fo_{\!\Alpha}_-^2[R]}

\newcommand{\mFOtwo}{\fo_-^2[<,+1]}
\newcommand{\mFOtwoorder}{\fo_-^2[<]}
\newcommand{\mFOdata}{\fo[\,]}
\newcommand{\mFOtwodata}{\fo_-^2[\,]}

\newcommand{\wFO}{\fo}
\newcommand{\wFOfull}{\fo[\sim,<,+1]}
\newcommand{\wFOtwo}{\fo^2[\sim,<,+1]}
\newcommand{\wFOtwoorder}{\fo^2[\sim,<]}
\newcommand{\wFOdata}{\fo[\sim]}
\newcommand{\wFOtwodata}{\fo^2[\sim]}

\newcommand{\wFOtwoR}{\fo^2[R]}

\newcommand{\ProcVar}{\mathcal{V}_\mathsf{proc}}
\newcommand{\PosVar}{\mathcal{V}}
\newcommand{\pv}{p}
\newcommand{\pva}{p}
\newcommand{\pvb}{q}
\newcommand{\modelspi}[4]{(#1,#2)\models_#3 #4}
\newcommand{\modelsp}[3]{(#1,#2)\models #3}
\newcommand{\succrel}[2]{{+}1(#1,#2)}
\newcommand{\succrelwp}{{+}1}
\newcommand{\Int}{\mathcal{I}}
\newcommand{\Free}{\mathit{Free}}
\newcommand{\procform}{\mathit{proc}}
\newcommand{\UR}[1]{R_{#1}}

\newcommand{\Inter}{I}
\newcommand{\dom}{\textup{dom}}
\newcommand{\subf}{\xi}

Formulas of our logic are evaluated over
$\Procs$-executions. We fix an infinite supply
$\PosVar = \{x,y,z,\ldots\}$
of variables, which are interpreted as processes
from $\Procs$ or positions of the execution.
The logic $\FOfull$ is given by the grammar
\begin{align*}
\varphi ~::=~ \theta(x)~|~a(x)~|~ x = y~|~x \sim y~|~x < y~|~\succrel{x}{y}~|~\neg \varphi~|~\varphi \lor \varphi~|~\exists x. \varphi
 \end{align*}
where $x, y \in \PosVar$, $\theta \in \Types$, and $a \in \Alpha$.
Conjunction ($\land$), universal quantification ($\forall$),
implication (${\Longrightarrow}$), $\ttrue$, and $\ffalse$ are obtained as
abbreviations as usual.

Let $\phi \in \FOfull$. By $\Free(\phi) \subseteq \PosVar$,
we denote the set of variables that occur free in $\phi$.
If $\Free(\phi) = \emptyset$, then we call $\phi$ a \emph{sentence}.
We sometimes write $\phi(x_1,\ldots,x_n)$ to emphasize the fact
that
$\Free(\phi) \subseteq \{x_1,\ldots,x_n\}$.

To evaluate $\varphi$ over a $\Procs$-execution $w = (a_1,p_1)(a_2,p_2) \ldots$,
we consider $(\Procs,w)$ as a structure
$\Structure_{(\Procs,w)} = (\Procs \mathrel{\uplus} \Pos{w},\sProcs, \eProcs, \seProcs,(\UR{a})_{a \in \Alpha},\sim,<,\succrelwp)$
where
$\Procs \uplus \Pos{w}$ is the universe,
$\sProcs$ $\eProcs$, and $\seProcs$ are interpreted as unary relations,
$\UR{a}$ is the unary relation $\{i \in \Pos{w} \mid a_i = a\}$,
${<} = \{(i,j) \in \Pos{w} \times \Pos{w} \mid i < j\}$,
${\succrelwp} = \{(i,i+1) \mid 1 \le i < |w|\}$, and
${\sim}$ is the least equivalence relation over $\Procs \mathrel{\uplus} \Pos{w}$ containing
\begin{itemize}[topsep=0.6ex]
\item $(p,i)$ for all $p \in \Procs$ and $i \in \Pos{w}$ such that $p = p_i$, and
\item $(i,j)$ for all $(i,j) \in \Pos{w} \times \Pos{w}$ such that $p_i = p_j$.
\end{itemize}
An equivalence class of $\sim$ is often simply referred to as a \emph{class}.

\begin{example}\label{ex:execution}
Suppose $\Alpha_\sys = \{a,b\}$ and $\Alpha_\env = \{c,d\}$.
Let the set of processes $\Procs$ be given by $\sProcs = \{1,2,3\}$, $\eProcs = \{4,5\}$, and $\seProcs = \{6,7,8\}$.
Moreover, let \[w = (a,1)(b,8)(d,7)(c,4)(a,6)(c,6)(a,7)(d,6)(b,2)(d,7)(a,7) \in \Events^\ast\,.\]
Figure \ref{fig:execution} illustrates $\Structure_{(\Procs,w)}$. The edge relation represents $\succrelwp$, its transitive closure is $<$.
\exend
\end{example}

\begin{figure}[t]
\centering
\includegraphics[width=0.92\textwidth]{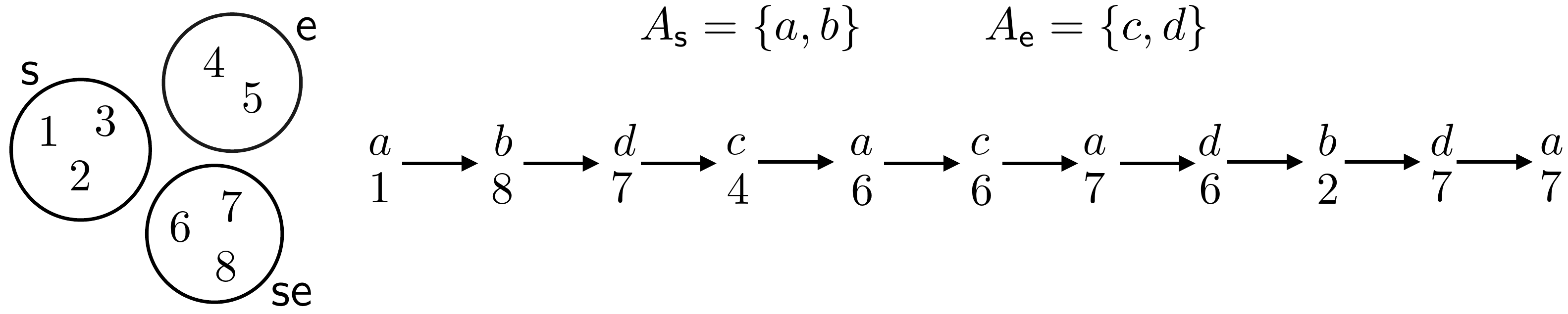}
\caption{Representation of $\Procs$-execution as a mathematical structure\label{fig:execution}}
\end{figure}

\newcommand{\cbound}{m}

An \emph{interpretation} for $(\Procs,w)$ is a partial mapping
$\Inter: \PosVar \to \Procs \cup \Pos{w}$.
Suppose $\phi \in \FOfull$ such that $\Free(\phi) \subseteq \dom(\Inter)$.
The satisfaction relation
$(\Procs,w),\Inter \models \phi$ is then defined as expected,
based on the structure $\Structure_{(\Procs,w)}$ and
interpreting free variables according to $\Inter$.
For example, let $w = (a_1,p_1)(a_2,p_2) \ldots$ and $i \in \Pos{w}$.
Then, for $\Inter(x) = i$, we have $(\Procs,w),\Inter \models a(x)$ if $a_i = a$.

We identify some fragments of $\FOfull$.
For $R \subseteq \{\sim,<,{+}1\}$, let
$\FO[R]$ denote the set of formulas that do not use symbols in $\{\sim,<,{+}1\} \setminus R$.
Moreover, $\FOtwoR$ denotes the fragment of $\FO[R]$ that uses only two (reusable) variables.

Let $\phi(x_1,\ldots,x_n,y) \in \FOfull$ and $\cbound \in \N$.
We use $\exists^{\ge \cbound}y.
\phi(x_1,\ldots,x_n,y)$ as an abbreviation for
\[
\exists y_1 \ldots \exists y_\cbound.
\bigwedge_{1 \le i < j \le \cbound} \neg{(y_i = y_j)} \wedge \bigwedge_{1 \le i \le \cbound} \phi(x_1,\ldots,x_n,y_i)\,,
\]
if $\cbound > 0$, and $\exists^{\ge 0}y. \phi(x_1, \ldots, x_n, y) = \ttrue$.
Thus, $\exists^{\ge \cbound}y.\phi$ says that there are at least $\cbound$
distinct elements that verify $\phi$. 
We also use
$\exists^{= \cbound} y. \phi$ as an abbreviation for
$\exists^{\ge \cbound} y. \phi \wedge \neg \exists^{\ge \cbound+1} y. \phi$.
Note that $\phi \in \FO[R]$ implies that $\exists^{\ge \cbound} y. \phi \in \FO[R]$
and $\exists^{= \cbound} y. \phi \in \FO[R]$.

\begin{example}\label{ex:formulas}
Let $\Alpha$, $\Procs$, and $w$ be like in Example~\ref{ex:execution} and Figure~\ref{fig:execution}.
\begin{itemize}\itemsep=0.5ex
\item $\varphi_1 = \forall x.\bigl((\sys(x) \vee \both(x)) \implies \exists y. (x \sim y \wedge (a(y) \vee b(y)))\bigr)$ says that each process that System can control executes at least one system action. We have $\phi_1 \in \FOtwodata$ and $(\Procs,w) \not\models \varphi_1$, as process $3$ is idle.

\item $\varphi_2 = \forall x.\bigl(d(x) \implies \exists y.(x \sim y \wedge a(y))\bigr)$ says that, for every $d$, there is an $a$ on the same process. We have $\phi_2 \in \FOtwodata$ and $(\Procs,w) \models \varphi_2$.

\item $\varphi_3 = \forall x.\bigl(d(x) \implies \exists y.(x \sim y \wedge x < y \wedge a(y))\bigr)$ says that every $d$ is \emph{eventually} followed by an $a$ executed by the same process. We have $\varphi_3 \in \FOtwoorder$ and $(\Procs,w) \not\models \varphi_3$: The event $(d,6)$ is not followed by some $(a,6)$.

\item $\varphi_4 = \forall x. \bigl(\bigl(\exists^{= 2} y.(x \sim y \wedge a(y))\bigr)
\Longleftrightarrow \bigl(\exists^{= 2} y.(x \sim y \wedge d(y))\bigr)\bigr)$ says that
each class contains exactly two occurrences of $a$
iff it contains exactly two occurrences of $d$.
Moreover, $\varphi_4 \in \FOdata$ and $(\Procs,w) \models \varphi_4$.
Note that $\varphi_4 \not\in \FOtwodata$, as
$\exists^{= 2} y$ requires the use of three different variable names.
\exend
\end{itemize}
\end{example}

\newcommand{\Partitions}[1]{\textup{Part}(#1)}
\newcommand{\Abstractions}[1]{\mathfrak{C}_{\!#1}}
\newcommand{\Class}{\mathcal{C}}
\newcommand{\eclass}{E}
\newcommand{\class}{X}
\newcommand{\Part}{P}
\newcommand{\vect}{\mathbf{v}}
\newcommand{\pequiv}[1]{E_{#1}}
\renewcommand{\phi}{\varphi}
\newcommand{\ctype}[1]{\psi_{#1}}

\section{Parameterized Synthesis Problem}\label{sec:synthesis}

We define an asynchronous synthesis problem.
A $\Procs$-\emph{strategy} (for System) is a mapping $\strat: \Events^\ast \to \sEvents \cup \{\varepsilon\}$.
A $\Procs$-execution $w = \ev_1\ev_2\ldots \in \DS^\ast \cup \DS^\omega$ is $\strat$-\emph{compatible} if, for all $i \in \Pos{w}$ such that $\ev_i \in \DS_\sys$, we have $f(\ev_1 \ldots \ev_{i-1}) = \ev_i$.
We call $w$ $\strat$-\emph{fair} if the following hold:
$(i)$ If $w$ is finite, then $f(w) = \varepsilon$, and
$(ii)$ if $w$ is infinite and $f(\ev_1 \ldots \ev_{i-1}) \neq \varepsilon$
for infinitely many $i \ge 1$, then $\ev_j \in \DS_\sys$ for infinitely many $j \ge 1$.

Let $\varphi \in \FOfull$ be a sentence. We say that $f$ is $\Procs$-\emph{winning} for $\varphi$
if, for every $\Procs$-execution $w$ that is $\strat$-compatible and $\strat$-fair, we have
$\modelsp{\Procs}{w}{\varphi}$.

The existence of a $\Procs$-strategy that is $\Procs$-winning for a given formula
does not depend on the concrete process identities but only on
the cardinality of the sets $\sProcs$, $\eProcs$, and $\seProcs$.
This motivates the following definition of winning triples for a formula.
Given $\varphi$, let
$\Win{\Alpha_\sys}{\Alpha_\env}{\varphi}$ be the set of
triples $(\pconsts,\pconsta,\pconstb) \in \N^\Types$
for which there is $\Procs = (\sProcs,\eProcs,\seProcs)$
such that $|\Procs_\theta| = \pconst_\theta$ for all $\theta \in \Types$
and there is a $\Procs$-strategy that is $\Procs$-winning for $\varphi$.

Let $\Zero = \{0\}$ and $\pconsta,\pconstb \in \N$.
In this paper, we focus on the intersection of $\Win{}{}{\varphi}$ with the sets
$\synNpos \times \Zero \times \Zero$ (which corresponds to the usual satisfiability problem);
$\synNpos \times \{\pconsta\} \times \{\pconstb\}$ (there is a constant number of environment and mixed processes);
$\synNpos \times \synNpos \times \{\pconstse\}$ (there is a constant number of mixed processes);
$\Zero \times \Zero \times \synNpos$ (each process is controlled by both System and Environment).

\begin{definition}[synthesis problem]\label{def:synthesis}
For fixed $\FClass \in \{\textup{FO},\textup{FO}^2\}$, $R \subseteq \{\sim,<,{+}1\}$, and
sets $\sN,\eN,\seN \subseteq \N$, the (parameterized) synthesis problem is given as follows:
\begin{center}
\begin{decproblem}
  \problemtitle{$\Synthesis{\FClass[R]}{\sN}{\eN}{\seN}$}
  \probleminput{$\Alpha = \Alpha_\sys \uplus \Alpha_\env$ and a sentence $\varphi \in \FClass_A[R]$}
  \problemquestion{$\Win{}{}{\varphi} \cap (\sN \times \eN \times \seN) \neq \emptyset$\,\textup{?}}  
\end{decproblem}
\end{center}
The \emph{satisfiability problem} for $\FClass[R]$ is defined as $\Synthesis{\FClass[R]}{\N}{\Zero}{\Zero}$.
\end{definition}

\newcommand{\enqueue}[2]{#1 \odot #2}

\begin{example}\label{ex:synthesis}
Suppose $\Alpha_\sys = \{a,b\}$ and $\Alpha_\env = \{c,d\}$, and consider
the formulas $\phi_1$--$\phi_4$ from Example~\ref{ex:formulas}.

First, we have $\Win{}{}{\varphi_1} = \N^\Types$. Given an arbitrary
$\Procs$ and any total order ${\sqsubseteq}$ over $\sProcs \cup
\seProcs$, a possible $\Procs$-strategy $\strat$ that is
$\Procs$-winning for $\varphi_1$ maps $w \in \Sigma^\ast$ to $(a,p)$
if $p$ is the smallest process from $\sProcs \cup \seProcs$ wrt.\
$\sqsubseteq$ that does not occur in $w$, and
that returns $\varepsilon$ for $w$ if all
processes from $\sProcs \cup \seProcs$ already occur in $w$.

For the three formulas $\phi_2$, $\phi_3$, and $\phi_4$, observe that,
since $d$ is an environment action, if there is at least one process
that is exclusively controlled by Environment, then there is no
winning strategy. Hence we must have $\eProcs = \emptyset$.  In fact,
this condition is sufficient in the three cases and the strategies
described below show that all three sets $\Win{}{}{\varphi_2}$,
$\Win{}{}{\varphi_3}$, and $\Win{}{}{\varphi_4}$ are equal to $\N
\times \Zero \times \N$.

\begin{itemize}\itemsep=0.5ex
\item For $\phi_2$, the very same strategy as for $\phi_1$ also works
  in this case, producing an $a$ for every process in $\sProcs \cup
  \seProcs$, whether there is a $d$ or not.
\item For $\phi_3$, a winning strategy $\strat$ will apply the
  previous mechanism iteratively, performing $(a,p)$ for $p \in
  \seProcs = \{p_0,\ldots,p_{n-1}\}$ over and over again: $f(w) =
  (a,p_i)$ where $i$ is the number of occurrences of letters from
  $\Sigma_\sys$ modulo $n$.  By the fairness assumption, this
  guarantees satisfaction of $\phi_3$.  A more ``economical'' winning
  strategy $\strat$ may organize pending requests in terms of $d$ in a
  queue and acknowledge them successively. More precisely, given $u
  \in \Procs^\ast$ and $\sigma \in \Sigma$, we define another word
  $\enqueue{u}{\sigma} \in \Procs^\ast$ by $\enqueue{u}{(d,p)} = u
  \cdot p$ (inserting $p$ in the queue) and $\enqueue{(p \cdot
    u)}{(a,p)} = u$ (deleting it).  In all other cases,
  $\enqueue{u}{\event} = u$.  Let $w = \event_1 \ldots \event_n \in
  \Sigma^\ast$, with queue
  $\enqueue{(\enqueue{(\enqueue{\varepsilon}{\event_1})}{\event_2}
    \ldots)}{\event_n} = p_1 \ldots p_k$.  We let $\strat(w) = \varepsilon$
    if $k = 0$, and $\strat(w) = (a,p_1)$ if $k \ge 1$.
\item For $\phi_4$, the same strategy as for $\phi_3$ ensures that
  every $d$ has a \emph{corresponding} $a$ so that, in the long run,
  there are as many $a$'s as $d$'s in every class.
\exend
\end{itemize}
\end{example}

Another interesting question is whether System (or Environment) has
a winning strategy as soon as the number of processes is big enough.
This leads to the notion of a cutoff (cf.\ \cite{Bloem:2015} for an overview):
Let $\sN,\eN,\seN \subseteq \N$ and $W \subseteq \N^\Types$. 
We call $\cutoff \in \N^\Types$ a \emph{cutoff} of $W$
wrt.\ $(\sN,\eN,\seN)$ if $\cutoff \in \sN \times \eN \times \seN$ and either
\begin{itemize}\itemsep=0.5ex
\item for all $\tvec \in \sN \times \eN \times \seN$ such that $\tvec \ge \cutoff$, we have $\tvec \in W$, or
\item for all $\tvec \in \sN \times \eN \times \seN$ such that $\tvec \ge \cutoff$, we have  $\tvec \not\in W$.
\end{itemize}

Let $\FClass \in \{\textup{FO},\textup{FO}^2\}$ and $R \subseteq \{\sim,<,{+}1\}$.
If, for every alphabet $\Alpha = \Alpha_\sys \uplus \Alpha_\env$ and every sentence
$\varphi \in \FClass_A[R]$, the set $\Win{}{}{\varphi}$ has a computable cutoff wrt.\ $(\sN,\eN,\seN)$,
then we know that $\Synthesis{\FClass[R]}{\sN}{\eN}{\seN}$ is decidable, as it can be reduced to
a finite number of simple synthesis problems over a finite alphabet.
This is how we will show decidability of
$\Synthesis{\wFOdata}{\N}{\{\pconsta\}}{\{\pconstb\}}$ for all $\pconsta, \pconstb \in \N$.

\medskip

Our contributions are summarized in Table~\ref{table:summary}.
Note that known satisfiability results for data logic
apply to our logic, as processes can be simulated by treating
every $\theta \in \Types$ as an ordinary letter.
Let us first state undecidability of the general synthesis problem,
which motivates the study of other FO fragments.

\newcommand{\status}[2]{\begin{tabular}{c}#1\end{tabular}}
\newcommand{\qstatus}[2]{\begin{tabular}{c}\textcolor{orange}{#1}\\\textcolor{orange}{#2}\end{tabular}}
\newcommand{\cwidth}[1]{\parbox[b]{6em}{\centering #1}}
\newcommand{\ourresult}[1]{\textcolor{blue}{#1}}

\begin{table}[t]
\caption{Summary of results. Our contributions are highlighted in \ourresult{blue}.\label{table:summary}}
\centering
\begin{tabular}{lcccc}
\toprule
Synthesis~~~~~~~~~ & \cwidth{$(\synNpos, \Zero, \Zero)$} & \cwidth{$(\synNpos,\{\pconsta\},\{\pconstb\})$}  & \cwidth{$(\synNpos, \synNpos, \Zero)$} & \cwidth{$(\Zero, \Zero, \synNpos)$}\\
 \midrule
$\wFOtwo$ & \status{decidable \cite{Bojanczy06}}{no cutoff} & \status{?}{\textcolor{gray}{no cutoff}} & \status{?}{\textcolor{gray}{no cutoff}} & \status{\ourresult{undecidable}}{\textcolor{gray}{no cutoff}} \\
 \midrule
$\wFOtwoorder$ & \status{NEXPTIME-c. \cite{Bojanczy06}}{cutoff} & \status{?}{?} & \status{?}{?} & \status{?}{?}\\
 \midrule
$\wFOdata$ & \status{\ourresult{decidable}}{\ourresult{cutoff}}  & \status{\ourresult{decidable}}{\ourresult{cutoff}} & \status{?$^*$\!\!\!}{\ourresult{no cutoff}} & \status{\ourresult{undecidable}}{\ourresult{no cutoff}}\\
 \bottomrule
\end{tabular}\\[0.5ex]
\scalebox{0.85}{$^*$We show, however, that there is no cutoff.}
\end{table}

\begin{theorem}\label{thm:undecFOtwo}
The problem $\Synthesis{\wFOtwo}{\Zero}{\Zero}{\N}$ is undecidable.
\end{theorem}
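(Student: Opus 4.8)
�The plan is to reduce from a known undecidable problem about data words. The excerpt itself points the way: it says this follows from ``an adaptation of an undecidability result from \cite{figueira2018playing,FigueiraP18} for a fragment of the Logic of Repeating Values.'' So the first step is to recall that the satisfiability (or, more precisely, a suitable game/synthesis variant) for $\fo^2[\sim,<,+1]$ over data words — or for the relevant fragment of LRV — is undecidable, and to see that the game in those references is essentially a turn-based word-building game between two players that can be cast in our asynchronous synthesis framework once every process is of type $\both$. The restriction to $\Synthesis{\wFOtwo}{\Zero}{\Zero}{\N}$ means $\sProcs = \eProcs = \emptyset$ and only mixed processes exist, so both System and Environment act on the same set of data values; this is exactly the setting in which the LRV-style game lives.

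Concretely, I would proceed as follows. First, fix the two-counter machine $\tcm$ (or directly the LRV formula) whose halting/satisfiability is undecidable. Second, design an alphabet $\Alpha = \Alpha_\sys \uplus \Alpha_\env$ where System's letters encode ``the system claims the next configuration/counter value is such-and-such'' and Environment's letters encode challenges/acknowledgements; the data value (process identity) attached to each event is what lets the $\sim$ predicate link an event to a later event on the same value, simulating the ``repeating value'' mechanism. Third, write an $\FOtwo$ sentence $\varphi$ that says: the infinite word produced is a faithful, fair encoding of a halting run — using only two variables and the predicates $\sim$, $<$, $+1$. The key gadget is the standard one: to express ``counter value at step $i$ equals counter value at step $i+1$ plus/minus one'' one threads the encoding through fresh data values and uses $x \sim y \wedge x < y$ together with the successor relation to compare adjacent blocks; two variables suffice because at each point one only needs to relate one position to the next matching position. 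Fourth, argue the correspondence: System has a $\Procs$-winning strategy for some number of mixed processes iff the machine halts, where the asynchrony and the fairness assumption are used to ensure Environment cannot stall the simulation forever and System cannot cheat by refusing to produce the next block.

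The main obstacle, and the part that needs genuine care, is handling the \emph{asynchronous} and \emph{parameterized} aspects, which are not present in the cited synchronous, fixed-alphabet-over-data results. In \cite{FigueiraP18} the two players strictly alternate and there is in effect one ``data track''; here Environment may interleave its moves arbitrarily and there is an unknown number of processes. So I would need to (a) add to $\varphi$ well-formedness conjuncts that force the relevant events to occur in the intended order on the relevant process — essentially serializing the simulation onto a protocol that is robust to Environment's extra moves, using $+1$ to pin down immediate succession where it matters; and (b) handle the parameter: show that a bounded number of mixed processes already suffices to carry the encoding (or that extra processes are harmless because $\varphi$ can quantify them away / force them idle), so that ``$\Win{}{}{\varphi} \cap (\Zero \times \Zero \times \N) \neq \emptyset$'' is equivalent to halting. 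The fairness clause is what prevents the trivial non-winning escape where System stops producing output, and the antagonistic use of Environment's uncontrollable letters is what forces System's claimed run to withstand arbitrary verification challenges — mirroring the alternation in the source proof. Once these adaptations are in place, undecidability transfers, and since only type-$\both$ processes are used, the result is exactly $\Synthesis{\wFOtwo}{\Zero}{\Zero}{\N}$ undecidable.
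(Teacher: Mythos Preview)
Your plan is on the right track—reduce from deterministic 2CM halting, encode counters via processes, write an $\FOtwo$ sentence—but two concrete pieces are missing, and one of your design choices is inverted relative to what actually makes the reduction go through.

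First, the role assignment. In the paper it is \emph{Environment} that drives the simulation: Environment's alphabet contains the transitions of $\tcm$ and Environment chooses the next transition, while System's sole letter is used to cooperate on counter updates and, crucially, to \emph{refute} an illegal zero-test. The specification then has the shape $\Phi^\env \Rightarrow (\Phi^\sys \wedge \exists x.\,\bigvee_{t \text{ halting}} t(x))$. This implication structure is the mechanism that tames asynchrony: if Environment deviates from strict alternation or plays an incompatible transition, $\Phi^\env$ fails and System wins vacuously; so Environment is forced to simulate faithfully, and then System wins iff the (deterministic) run halts. Your proposal has System claiming configurations and Environment challenging, which is the satisfiability-style direction; making that work in a game where System must \emph{win} (not merely exhibit a model) requires a different and more delicate formula, and you have not supplied it.

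Second, the counter encoding. You gesture at ``threading through fresh data values,'' but the actual invariant is specific: the value of $\tcmcounter_1$ is the number of processes on which \emph{only System} has played so far, and $\tcmcounter_2$ is the number on which \emph{only Environment} has played. Increments pick a fresh process, decrements have the other player touch a previously unique process, and a zero-test is refuted by System exhibiting a still-unique process. This is what the two-variable predicate $\old_\theta(x) = \exists y.(y < x \wedge y \sim x \wedge \bigvee_{b \in \Alpha_\theta} b(y))$ captures. Without this (or an equivalent) device you cannot express ``counter equals $k$'' in $\FOtwo$.

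Finally, your remark that ``a bounded number of mixed processes already suffices'' is not quite right: the number of processes needed equals the total number of increments along the halting run, so it depends on $\tcm$. The correct statement is that $\tcm$ halts iff \emph{some} $k$ works, which is exactly the existential quantifier in $\Synthesis{\wFOtwo}{\Zero}{\Zero}{\N}$.
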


\newcommand{\Op}{\mathsf{Op}}
\newcommand{\op}{\mathsf{op}}
\newcommand{\incc}[1]{\tcmcounter_{#1}{\scalebox{0.8}{\textup{++}}}}
\newcommand{\decc}[1]{\tcmcounter_{#1}{\scalebox{0.8}{\textup{--\,--}}}}
\newcommand{\zeroc}[1]{\tcmcounter_{#1}{\scalebox{0.8}{==}}\hspace{0.06em}0}

\begin{proof}[sketch]
We adapt the proof from \cite{figueira2018playing,FigueiraP18} reducing the halting problem for 2-counter machines. 
We show that their encoding can be expressed in our logic, even if we restrict it to two variables, and can also be adapted to the
asynchronous setting. Details are given in Appendix~\ref{app:undecFOtwo}.
\qed
\end{proof}


\newcommand{\AccConf}{\AllConf_{\!\Acc}}
\newcommand{\RejConf}{\AllConf_{\!\neg \Acc}}

\section{$\wFOdata$ and Parameterized Vector Games}\label{sec:games}

Due to the undecidability result of Theorem~\ref{thm:undecFOtwo},
one has to switch to other fragments of first-order logic.
We will henceforth focus on the logic $\wFOdata$ and establish
some important properties, such as a normal form, that
will allow us to deduce a couple of results, both
positive and negative.

\subsection{Satisfiability and Normal Form for $\wFOdata$}

We first show that $\wFOdata$ logic essentially allows
one to count letters in a class up to some threshold,
and to count such classes up to some other threshold.
For $\bound \in \N$ and
$\loc \in \{0,\ldots,\bound\}^\Alpha$, we first
define an $\FOdata$-formula $\ctype{\bound,\loc}(y)$
verifying that, in the class defined by $y$, the number of occurrences of each letter
$a \in A$, counted up to $\bound$, is $\loc(a)$:
\[
\ctype{\bound,\loc}(y) ~=~
\bigwedge_{\substack{a \in \Alpha\\[0.3ex]\loc(a) < \bound}} \exists^{=\loc(a)} z.\bigl(y \sim z \wedge a(z)\bigr)
\wedge
\bigwedge_{\substack{a \in \Alpha\\[0.3ex]\loc(a) = \bound}} \exists^{\ge \loc(a)} z.\bigl(y \sim z \wedge a(z)\bigr)
\]

\begin{theorem}[normal form for $\boldsymbol{\wFOdata}$]\label{thm:normalform}
Let $\phi \in \FOdata$ be a sentence. There is $\bound \in \N$ such that
$\phi$ is effectively equivalent to a
disjunction of conjunctions of
formulas of the form
$\exists^{\bowtie \cbound} y. \bigl(\theta(y) \wedge \ctype{\bound,\loc}(y)\bigr)$
where ${\bowtie} \in \{{\ge}, {=}\}$, $\cbound \in \N$, $\theta \in \Types$,
and $\loc \in \{0,\ldots,\bound\}^\Alpha$.
\end{theorem}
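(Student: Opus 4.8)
The plan is to prove the normal form by a standard Ehrenfeucht--Fra\"\i ss\'e--style quantifier-elimination argument adapted to the structure of $\FOdata$, exploiting the fact that the only atomic predicates available are $\theta(x)$, $a(x)$, $x = y$, and $x \sim y$ --- in particular, there is no ordering on positions and no way to distinguish two positions in the same class that carry the same letter. The key observation is that the relevant ``local'' information about a position $x$ is its letter $a$, its process type $\theta$, and --- because $\sim$ is an equivalence relation --- the multiset of letters occurring in the class of $x$, counted up to some threshold $\bound$. Correspondingly, the relevant information about a whole execution is, for each type $\theta$ and each ``class-profile'' $\loc \in \{0,\dots,\bound\}^\Alpha$, the number of classes of type $\theta$ with that profile, again counted up to a threshold. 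I would first make precise this notion of \emph{$\bound$-type of a class} (captured by $\ctype{\bound,\loc}(y)$, already defined in the excerpt) and \emph{$\bound$-type of an execution} (the tuple of counts $\exists^{= \cbound} y.(\theta(y) \wedge \ctype{\bound,\loc}(y))$ for $\cbound$ below a threshold, and the overflow versions $\exists^{\ge \cbound} y.(\dots)$), and observe that an execution's $\bound$-type is definable by exactly a conjunction of formulas of the target shape, so that a \emph{disjunction} over all realizable $\bound$-types has the required normal form.

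The main work is then to show: for every sentence $\phi \in \FOdata$ there is a $\bound$ (depending only on the quantifier rank of $\phi$ and on $|\Alpha|$) such that two executions with the same $\bound$-type agree on $\phi$. I would prove this by induction on quantifier rank via an EF game: I claim that if $(\Procs, w)$ and $(\Procs', w')$ have the same $\bound$-type, where $\bound$ is chosen at least $2^r$ (or some similar bound in terms of the rank $r$ and the number of letters), then Duplicator wins the $r$-round EF game on the two structures $\Structure_{(\Procs,w)}$ and $\Structure_{(\Procs',w')}$. The strategy for Duplicator is the familiar one for equivalence-structure-like settings: maintain the invariant that, after each round, the chosen positions/processes are matched so that corresponding elements have the same type and same letter (and same $\sim$-relationship among the already-chosen elements), and moreover the ``remaining budget'' of unused classes of each relevant profile on both sides is either equal or both exceed the threshold needed for the remaining rounds; when Spoiler picks an element, Duplicator either reuses an already-matched class (if Spoiler's move lands in one) or picks a fresh class of the same profile on the other side --- the threshold counting guarantees such a fresh class exists. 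One must be slightly careful because a single EF move can reveal at most a bounded amount of new information about a class (one more position in it), so the threshold $\bound$ on letters-per-class need only grow linearly, while the threshold on number-of-classes-per-profile need only dominate $r$; a clean way is to take $\bound = r$ for both and note $r$ rounds can probe at most $r$ distinct classes and at most $r$ positions within any fixed class.

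From the EF claim the theorem follows immediately: fix $\phi$ of quantifier rank $r$, set $\bound = r$, and let $\phi$ be equivalent to the disjunction, over all $\bound$-types $\mathfrak{t}$ such that $\phi$ holds in some (equivalently, every) execution realizing $\mathfrak{t}$, of the sentence $\delta_{\mathfrak{t}}$ defining $\mathfrak{t}$; each $\delta_{\mathfrak{t}}$ is a (finite) conjunction of formulas $\exists^{= \cbound} y.(\theta(y) \wedge \ctype{\bound,\loc}(y))$ and $\exists^{\ge \bound'} y.(\theta(y) \wedge \ctype{\bound,\loc}(y))$, which is exactly the claimed form (after noting $\exists^{= \cbound}$ is itself an allowed primitive in the statement). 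Effectiveness is clear: there are only finitely many $\bound$-types, and for each one can decide whether $\phi$ holds in a canonical small representative. I expect the main obstacle to be the bookkeeping in the EF argument --- precisely stating the invariant so that it is preserved both when Spoiler plays inside an already-touched class and when Spoiler opens a fresh class, and verifying that the chosen thresholds survive all $r$ rounds; once the invariant is right, each case is routine.
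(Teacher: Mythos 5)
Your proposal is correct in its overall approach, but it takes a genuinely different route from the paper. The paper does not play an Ehrenfeucht--Fra\"\i ss\'e game directly: it first invokes the Schwentick--Barthelmann normal form to rewrite $\phi$ as $\exists x_1\ldots\exists x_n\forall y.\,\varphi$ with all further quantification relativized to the class of $y$, then eliminates the variables $x_1,\ldots,x_n$ by guessing a ``class abstraction'' (the $\sim$- and equality-pattern and labels of these witnesses), and finally applies Hanf's theorem to each class viewed as a multiset over $\Types\cup\Alpha$ (a degree-$0$ structure) before recombining the per-class statements into the global counting formulas. Your plan instead argues directly on the whole structure $\Structure_{(\Procs,w)}$, exploiting that it is a disjoint union of ``colored cliques'' (one process plus its positions per class), and shows by a composition/EF argument with thresholds that the truncated profile data (letter counts per class up to $\bound$, and class counts per $(\theta,\loc)$ up to a threshold) determines the rank-$r$ theory; note that Hanf locality could not be applied to the whole structure as a black box since classes make the degree unbounded, so your direct game argument is doing real work there, and it is sound because the components are just colored sets for which threshold counting characterizes $\equiv_r$. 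What your route buys is an elementary bound ($\bound$ linear in the quantifier rank) and a one-step construction --- precisely the kind of ``direct transformation'' the paper's conclusion says would be desirable in place of its potentially non-elementary detour through Schwentick--Barthelmann and Hanf; what the paper's route buys is that it reuses off-the-shelf normal-form theorems and avoids having to state and maintain the EF invariant. The remaining work on your side is exactly where you locate it: a precise invariant covering moves inside already-touched classes versus freshly opened classes (and processes versus positions), verification that the chosen thresholds survive all $r$ rounds, and, for effectiveness, the small point that the disjunction should range over the (syntactically checkable) realizable profile types, each evaluated on its canonical finite representative.
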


\begin{proof}[sketch]
We use two known normal-form constructions for general FO logic. Due to Schwentick and Barthelmann \cite{schwentick1998local}, any $\FOdata$ formula is effectively
equivalent to a formula of the form $\exists x_1 \dots \exists x_n \forall y. \varphi(x_1, \dots, x_n,y)$ where, in $\phi(x_1,\ldots,x_n,y)$, quantification is always of the form
$\exists z. (z \sim y \mathrel{\wedge} \ldots)$ or $\forall z. (z \sim y \implies \ldots)$.
By guessing the exact relation between the variables $x_1,\ldots,x_n$, one can eliminate these ending up with formulas that only talk about the class of a given event $y$. Those formulas are then evaluated over multi-sets over the alphabet $\Types \cup \Alpha$. According to Hanf's theorem \cite{Hanf1965,BolligK12}, they are effectively equivalent to statements counting elements up to some threshold. This finally leads to the desired normal form.
The details can be found in Appendix~\ref{app:normalform}.
\qed
\end{proof}

\begin{example}\label{ex:normalform}
Recall the formula $\varphi_4 = \forall x. \bigl(\bigl(\exists^{= 2} y.(x \sim y \wedge a(y))\bigr) \Longleftrightarrow \bigl(\exists^{= 2} y.(x \sim y \wedge d(y))\bigr)\bigr) \in \FOdata$ from Example~\ref{ex:formulas}, over $\Alpha_\sys = \{a,b\}$ and $\Alpha_\env = \{c,d\}$.
An equivalent formula in normal form is
$
\varphi_4' = \bigwedge_{{\theta \in \Types\text{, }\loc \in Z}}
\exists^{= 0} y. \bigl(\theta(y) \wedge \ctype{3,\loc}(y)\bigr)
$
where $Z$ is the set of vectors $\loc \in \{0,\ldots,3\}^\Alpha$ such that
$\loc(a) = 2 \neq \loc(d)$ or
$\loc(d) = 2 \neq \loc(a)$.
The formula indeed says that there is no class with ${=}2$ occurrences of $a$ and ${\neq}2$ occurrences of $d$ or vice versa, which is equivalent to $\varphi_4$.
\exend
\end{example}

\begin{corollary}
The satisfiability problem for $\wFOdata$ is decidable.
Moreover, if an $\FOdata$ formula has an infinite model, then it also has a finite one.
\end{corollary}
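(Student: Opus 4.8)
The plan is to derive both parts of the corollary directly from the normal form of Theorem~\ref{thm:normalform}. First I would rewrite a given sentence $\phi \in \FOdata$ as a disjunction $\bigvee_i \bigwedge_j \chi_{i,j}$ of conjunctions of formulas $\chi_{i,j}$ of the shape $\exists^{\bowtie \cbound} y.(\theta(y) \wedge \ctype{\bound,\loc}(y))$, with a single global threshold $\bound$. Since $\phi$ is satisfiable iff at least one of the disjuncts is, it suffices to decide satisfiability of a single conjunction of such counting constraints, and to show such a conjunction has a finite model whenever it has any model.

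The key observation is that a formula $\chi_{i,j}$ of the form $\exists^{\bowtie \cbound} y.(\theta(y) \wedge \ctype{\bound,\loc}(y))$ depends on a $\Procs$-execution $w$ only through the numbers $n_{\theta,\loc}$ counting how many classes of type $\theta$ realize the letter-count profile $\loc \in \{0,\ldots,\bound\}^\Alpha$ (where a class of type $\theta$ means the class of a process in $\Procs_\theta$). Thus, the truth of the whole conjunction is a Boolean combination of linear threshold conditions on the finitely many nonnegative integers $(n_{\theta,\loc})_{\theta \in \Types,\, \loc}$. Concretely, for the disjunct indexed by $i$, I would collect, for each pair $(\theta,\loc)$, the constraint imposed by the conjuncts $\chi_{i,j}$ mentioning $(\theta,\loc)$: an equality conjunct forces $n_{\theta,\loc}$ to a precise value, and each $\ge$-conjunct forces a lower bound; these are satisfiable together iff the forced value (if any) meets all lower bounds, in which case a canonical witness value $N_{\theta,\loc} \le \max(\cbound\text{'s})$ exists. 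One must also check the mild consistency condition that $n_{\theta,\loc}$ can be $0$ exactly when no conjunct forces it positive, and that a profile $\loc$ with $\loc(a)>0$ for some $a \in \Alpha_\env$ is only required on classes of a type $\theta \neq \sys$ — but this is automatic since the formula's semantics do not constrain which actions appear in which type of class; any multiset of actions can occur in any class.

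Given a satisfying assignment of witness values $N_{\theta,\loc}$, I would build an explicit finite model: take $N_{\theta,\loc}$ processes of type $\theta$, and for each such process emit a finite block of events realizing exactly $\min(\loc(a),\bound)$… more precisely $\loc(a)$ occurrences of each $a \in \Alpha$ when $\loc(a)<\bound$, and exactly $\bound$ occurrences when $\loc(a)=\bound$ (recall $\ctype{\bound,\loc}$ only demands ``at least $\bound$'' in the saturated case, so $\bound$ occurrences suffice). Concatenating these blocks in any order yields a finite $\Procs$-execution $w$ whose class-profile counts are precisely the $N_{\theta,\loc}$, hence $(\Procs,w) \models \phi$. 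This simultaneously shows decidability — one enumerates the finitely many disjuncts and, for each, checks the elementary arithmetic satisfiability of the induced threshold system — and the finite-model property, since from any (possibly infinite) model one reads off the profile counts $n_{\theta,\loc} \in \N$ (finiteness of each count follows because $\loc$ with some $\loc(a)<\bound$ pins down the exact finite number of $a$'s in a class, and there are only finitely many profiles; if infinitely many classes shared a saturated profile we replace that count by its required threshold, still a model), feeds them into the above construction, and obtains a finite model of the same disjunct.

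The main obstacle, and the only genuinely nontrivial point, is bookkeeping the interaction of equality and inequality counting constraints across the (finitely many) profiles $(\theta,\loc)$: one has to verify that the conjunction of constraints within a disjunct is satisfiable over $\N^{\Types \times \{0,\ldots,\bound\}^\Alpha}$ and, when it is, that a bounded witness exists — this is where a small amount of care is needed, but it reduces to solving a finite system of equalities and lower-bound inequalities over $\N$, which is immediate. Everything else (the block construction, reading profiles off an arbitrary model) is routine once the normal form of Theorem~\ref{thm:normalform} is in hand.
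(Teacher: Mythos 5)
Your proposal follows essentially the same route as the paper's own argument: bring $\phi$ into the normal form of Theorem~\ref{thm:normalform}, observe that satisfaction only depends on the finitely many counts $n_{\theta,\loc}$ of classes of type $\theta$ realizing each profile $\loc \in \{0,\ldots,\bound\}^\Alpha$, check the (decidable) consistency of the equality/lower-bound constraints within each disjunct, and build a finite witness execution by concatenating one block per pair $(\theta,\loc)$; the paper's proof idea is exactly this, only stated more tersely.

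One concrete point in your write-up is wrong, though, and it does affect the step where you build the witness: the claim that ``any multiset of actions can occur in any class'' does not hold in this model. Events are typed: a letter $a \in \Alpha_\env$ can only be paired with a process in $\eProcs \cup \seProcs$, and a letter $a \in \Alpha_\sys$ only with a process in $\sProcs \cup \seProcs$. Hence a class of a process of type $\sys$ can never contain an environment letter (and dually for type $\env$), so a disjunct that forces $n_{\sys,\loc} \ge 1$ for a profile $\loc$ with $\loc(a) > 0$ for some $a \in \Alpha_\env$ is unsatisfiable, whereas your consistency check would accept it and your block construction would then try to emit illegal events such as $(a,p)$ with $a \in \Alpha_\env$ and $p \in \sProcs$. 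The fix is a one-line addition to the consistency check (reject positively forced pairs $(\theta,\loc)$ whose profile uses letters that processes of type $\theta$ cannot perform); with that, your argument matches the paper's. It is fair to note that the paper's own proof sketch is silent on this point as well, but it does not assert the false claim explicitly.
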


\begin{proof}[idea]
Take a formula in normal form with its associated threshold $\bound$. A~formula of the form
$\varphi_{\theta,\loc}^{\bowtie \cbound} = \exists^{\bowtie \cbound} y. \bigl(\theta(y) \wedge \ctype{\bound,\loc}(y)\bigr)$
is satisfied by the execution
\[\prod_{a \in \Alpha} \prod_{j\in\{1,\ldots,\loc(a)\}} (a, \proc_1) \ldots (a, \proc_n)\]
where $\proc_1, \dots, \proc_n \in \Procs_\theta$ are pairwise distinct and $n \in \N$ is such that $n \bowtie \cbound$.
As long as there are no two inconsistent formulas for the same pair $(\theta, \loc)$ such as $\varphi_{\theta,\loc}^{=k_1} \land \varphi_{\theta,\loc}^{=k_2}$ with $k_1 \neq k_2$ or $\varphi_{\theta,\loc}^{=k_1} \land \varphi_{\theta,\loc}^{\ge k_2}$ with $k_1 < k_2$, any conjunction of such formulas can also be satisfied by concatenating one satisfying execution for each pair $(\theta, \loc)$, which gives a finite model.
\qed
\end{proof}

Note that satisfiability for $\wFOtwodata$ is already NEXPTIME-hard, which
even holds in the presence of unary relations only \cite{Furer83,GradelKV97}.
It is NEXPTIME-complete due to the upper bound for $\wFOtwoorder$ \cite{Bojanczy06}.


\newcommand{\bcounter}{\textup{BinaryCounter}}
\newcommand{\init}{\text{Init}}
\newcommand{\eqform}[4]{\textup{Eq}^{#1,#2}(#3,#4)}
\newcommand{\firstzero}{\textup{FirstZero}}
\newcommand{\successor}{\textup{Successor}}
\newcommand{\rec}{\textup{Rec}}
\newcommand{\unicity}{\textup{Unicity}}
\newcommand{\circuit}{\textup{Circuit}}
\newcommand{\gate}{\textup{Gate}}
\renewcommand{\output}{\textup{Output}}
\newcommand{\match}{\textup{MatchVal}}
\newcommand{\sat}{\textup{Sat}}

\newcommand{\succsat}{\textsc{Succinct-3-SAT}\xspace}
\newcommand{\NEXPTIME}{\textup{NEXPTIME}\xspace}
\newcommand{\literal}{l}
\newcommand{\propform}{\Psi}
\newcommand{\variable}{V}
\newcommand{\inp}{\mathit{in}}
\newcommand{\outp}{\mathit{out}}
\newcommand{\gatel}{\mathit{gate}}
\newcommand{\varl}{\mathit{var}}

\newcommand{\IAlpha}{\mathit{In}}
\newcommand{\GAlpha}{\mathit{Gates}}
\newcommand{\OAlpha}{\mathit{Out}}
\newcommand{\VAlpha}{\mathit{Var}}
\newcommand{\cntform}[1]{\bcounter[#1]}
\newcommand{\ngates}{r}

\subsection{From Synthesis to Parameterized Vector Games}

\newcommand{\accfunction}{\kappa}
\newcommand{\memG}{\mem}

Exploiting the normal form for $\FOdata$,
we now present a reduction of the synthesis problem
to a strictly turn-based two-player game.
This game is conceptually simpler and easier to reason about.
The reduction works in both directions, which will allow us
to derive both decidability and undecidability results.

Note that, given a formula $\varphi \in \FOdata$
(which we suppose to be in normal form with threshold $\bound$),
the order of letters in an
execution does not matter.
Thus, given some $\Procs$, a reasonable strategy for Environment would be to just ``wait and see''.
More precisely, it does not put Environment into a worse position if,
given the current execution $w \in \Events^\ast$, it
lets the System execute as many actions as it wants in terms of a word $u \in \sEvents^\ast$.
Due to the fairness assumption, System would be able to execute all the letters
from $u$ anyway. Environment can even require System to play a word $u$ such that $(\Procs,wu) \models \varphi$.
If System is not able to produce such a word, Environment can just sit back and do nothing.
Conversely, upon $wu$ satisfying $\varphi$,
Environment has to be able to come up with a word $v \in \eEvents^\ast$
such that $(\Procs,wuv) \not\models \varphi$. This leads to a turn-based
game in which System and Environment play in strictly alternate order and have to provide
a satisfying and, respectively, falsifying execution.

In a second step, we can get rid of process identifiers:
According to our normal form, all we are interested in is the \emph{number}
of processes that agree on their letters counted up to
threshold $\bound$. That is, a finite execution
can be abstracted as a \emph{configuration} $\conf: \Loc \to \N^\Types$
where $\Loc = \{0,\ldots,\bound\}^\Alpha$.
For $\loc \in \Loc$ and $\conf(\loc) = (n_\sys,n_\env,n_\both)$,
$n_\theta$ is the number of processes of
type $\theta$ whose letter count up to threshold $\bound$
corresponds to $\loc$.
We can also say that $\loc$ contains $n_\theta$ tokens
of type $\theta$. If it is System's turn, it will pick some pairs
$(\loc,\loc')$ and move some tokens of type $\theta \in \{\sys,\both\}$
from $\loc$ to $\loc'$, provided
$\loc(a) \le \loc'(a)$ for all $a \in \Alpha_\sys$
and $\loc(a) = \loc'(a)$ for all $a \in \Alpha_\env$.
This actually corresponds to adding more system letters in the corresponding
processes. The Environment proceeds analogously.

Finally, the formula $\phi$ naturally translates to an acceptance condition
$\Acc \subseteq \locAcc^L$ over configurations, where
$\locAcc$ is the set of \emph{local acceptance conditions}, which are of the form
$({\bowtie_\sys} n_\sys\,,\, {\bowtie_\env} n_\env\,,\, {\bowtie_\both} n_\both)$ where
${\bowtie_\sys}, {\bowtie_\env}, {\bowtie_\both} \in \{=,\ge\}$ and $n_\sys,n_\env,n_\both \in \N$.

We end up with a turn-based game in which,
similarly to a VASS game \cite{BrazdilJK10,Jancar2015,AbdullaMSS13,RaskinSB05,CourtoisS14},
System and Environment move tokens along vectors from $\Loc$. Note that, however, our games
have a very particular structure so that undecidability for VASS games does not carry over
to our setting.
Moreover, existing decidability results do not allow us to infer our cutoff results below.

In the following, we will formalize \emph{parameterized vector games}.

\begin{definition}\label{def:games}
A \emph{parameterized vector game} (or simply \emph{game})
is given by a triple $\game=(\Alpha,\bound,\Acc)$ where
$\Alpha = \Alpha_\sys \uplus \Alpha_\env$ is the finite alphabet,
$\bound \in \N$ is a bound, and, letting $\Loc = \{0,\ldots,\bound\}^\Alpha$,
$\Acc \subseteq \locAcc^L$ is a finite set called \emph{acceptance condition}.
\end{definition}

\myparagraph{Locations}
Let $\loc_0$ be the location such that $\loc_0(a) = 0$ for all $a \in \Alpha$.
For $\loc \in \Loc$ and $a\in \Alpha$, we define $\loc + a$ by $(\loc + a)(b)= \loc(b)$ for $b \neq a$ and $(\loc + a)(b)= \max\{\loc(a)+1,\bound\}$ otherwise. This is extended for all $u \in \Alpha^*$ and $a \in \Alpha$ by $\loc + \varepsilon= \loc$ and $\loc + ua = (\loc+u) + a$.
By $\loclet{w}$, we denote the location $\loc_0 + w$.

\medskip

\myparagraph{Configurations}
As explained above, a \emph{configuration} of $\game$ is
a mapping $\conf: \Loc \to \N^\Types$.
Suppose that, for $\loc \in \Loc$ and $\theta \in \Types$,
we have $\conf(\loc) = (n_\sys,n_\env,n_\both)$.
Then, we let $\conf(\loc,\theta)$ refer to $n_\theta$.
By $\AllConf$, we denote the set of all configurations.

\medskip

\myparagraph{Transitions}
A \emph{system transition} (respectively \emph{environment transition}) is a mapping
$\upd: \Loc \times \Loc \to (\N \times \{0\} \times \N)$
(respectively $\upd: \Loc \times \Loc \to (\{0\} \times \N \times \N)$)
such that, for all $(\loc,\loc') \in \Loc \times \Loc$
with $\upd(\loc,\loc') \neq (0,0,0)$,
there is a word $w \in \Alpha_\sys^\ast$ (respectively $w \in \Alpha_\env^\ast$) such that $\loc' = \loc + w$.
Let $\sysUpdates$ denote the set of system transitions, $\envUpdates$ the set of environment transitions, and $\Updates = \sysUpdates \cup \envUpdates$ the set of all transitions.

For $\upd \in \Updates$, let the mappings $\outm{\upd},\inm{\upd}: \Loc \to \N^\Types$ be defined by $\outm{\upd}(\loc) \df \sum_{\loc' \in \Loc} \upd(\loc,\loc')$ and
$\inm{\upd}(\loc) \df \sum_{\loc' \in \Loc} \upd(\loc',\loc)$
(recall that sum is component-wise).
We say that $\upd \in \Updates$ is \emph{applicable} at $\conf \in \AllConf$ if, for
all $\loc \in \Loc$, we have $\outm{\upd}(\loc) \le \conf(\loc)$ (component-wise).
Abusing notation, we let $\upd(\conf)$ denote the configuration $\conf'$ defined by
$\conf'(\loc) \df \conf(\loc) - \outm{\upd}(\loc) + \inm{\upd}(\loc)$
for all $\loc \in \Loc$.
Moreover, for $\upd(\loc,\loc') = (n_\sys,n_\env,n_\both)$
and $\theta \in \Types$, we let $\upd(\loc,\loc',\theta)$ refer to $n_\theta$.

\medskip

\myparagraph{Plays}
Let $\conf \in \AllConf$.
We write $\conf \models \Acc$ if there is $\accfunction \in \Acc$
such that, for all $\loc \in \Loc$, we have $\conf(\ell) \models \accfunction(\ell)$ (in the expected manner).
A $\conf$-\emph{play}, or simply \emph{play}, is a finite sequence
$
\pi = \conf_0\upd_1\conf_1\upd_2\conf_2 \ldots \upd_n\conf_n
$
alternating between configurations and transitions
(with $n \ge 0$) such that $\conf_0 = \conf$ and, for all $i \in \{1,\ldots,n\}$,
$\conf_{i} = \upd_i(\conf_{i-1})$ and
\begin{itemize}\itemsep=0.5ex
\item if $i$ is odd, then $\upd_i \in \sysUpdates$ and $\conf_i \models \Acc$ (System's move),

\item if $i$ is even, then $\upd_i \in \envUpdates$ and $\conf_i \not\models \Acc$ (Environment's move).
\end{itemize}
The set of all $\conf$-plays is denoted by $\cPlays{\conf}$.

\medskip

\myparagraph{Strategies}
A $\conf$-\emph{strategy} for System is a partial mapping $\strat: \cPlays{\conf} \to \sysUpdates$ such that
$\strat(\conf)$ is defined and, for all $\pi = \conf_0\upd_1\conf_1 \ldots \upd_i\conf_i \in \cPlays{\conf}$ with $\upd = \strat(\pi)$ defined, we have that $\upd$ is applicable at $\conf_i$ and $\tau(\conf_i) \models \Acc$. Play $\pi = \conf_0\upd_1\conf_1 \ldots \upd_n\conf_n$ is
\begin{itemize}\itemsep=0.5ex
\item $\strat$-\emph{compatible} if, for all odd $i \in \{1,\ldots,n\}$, $\tau_{i} = \strat(\conf_0\upd_1\conf_1 \ldots \upd_{i-1}\conf_{i-1})$,
\item $\strat$-\emph{maximal} if it is not the strict prefix of an $\strat$-compatible play,
\item \emph{winning} if $\conf_n \models \Acc$.
\end{itemize}
We say that $\strat$ is \emph{winning} for System (from $\conf$) if all
$\strat$-compatible $\strat$-maximal $\conf$-plays are winning.
Finally, $\conf$ is \emph{winning} if there is a $\conf$-strategy
that is winning.
Note that, given an initial configuration $\conf$,
we deal with an acyclic finite reachability game so that,
if there is a winning $\conf$-strategy, then there is a positional
one, which only depends on the last configuration.

For $\tvec \in \N^\Types$, let $\conf_{\tvec}$ denote the configuration
that maps $\loc_0$ to $\tvec$ and all other locations to $(0,0,0)$.
We set
$
\Win{\Alpha_\sys}{\Alpha_\env}{\game} = \{\tvec \in \N^\Types \mid
\conf_{\tvec} \textup{ is winning for System}\}
$.

\begin{definition}[game problem]\label{def:gameproblem}
For sets $\sN,\eN,\seN \subseteq \N$, the game problem is given as follows:
\begin{center}
\begin{decproblem}
  \problemtitle{$\Gameproblem{\sN}{\eN}{\seN}$}
  \probleminput{Parameterized vector game $\game$}
  \problemquestion{$\Win{}{}{\game} \cap (\sN \times \eN \times \seN) \neq \emptyset$\,\textup{?}}
\end{decproblem}
\end{center}
\end{definition}

We can show that parameterized vector games are equivalent to the synthesis problem in the following sense
(the proof can be found in Appendix~\ref{app:equifogame}):

\begin{lemma}\label{lemma:equifogame}
For every sentence $\varphi \in \FOdata$, there is a parameterized vector game $\G=(\Alpha,\bound,\Acc)$ such that
$\Win{}{}{\varphi} = \Win{}{}{\game}$. Conversely,
for every parameterized vector game $\game=(\Alpha,\bound,\Acc)$, there is a sentence $\varphi \in \FOdata$
such that $\Win{}{}{\game} = \Win{}{}{\varphi}$.
Both directions are effective.
\end{lemma}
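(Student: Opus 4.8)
The two directions are proved by explicit translations, and the bulk of the work is checking that the intuitive correspondence sketched in the text above ("wait and see" for Environment, abstracting an execution by counting processes per location) is in fact faithful. For the direction from $\FOdata$ to games, I would start from the normal form of Theorem~\ref{thm:normalform}: a sentence $\varphi$ equivalent to a disjunction of conjunctions of formulas $\exists^{\bowtie \cbound} y.(\theta(y) \wedge \ctype{\bound,\loc}(y))$ for a common threshold $\bound$. I take $\game = (\Alpha,\bound,\Acc)$ with $\Loc = \{0,\dots,\bound\}^\Alpha$, and translate the normal form directly into $\Acc \subseteq \locAcc^\Loc$: each disjunct (a conjunction of threshold-count constraints) becomes one $\accfunction \in \Acc$, where $\accfunction(\loc)$ records, for each type $\theta$, the constraint $\bowtie_\theta n_\theta$ obtained by intersecting all conjuncts mentioning $(\theta,\loc)$ (and is $(\ge 0,\ge 0,\ge 0)$ if there is none). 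The key semantic lemma is that a $\Procs$-execution $w$ and its abstraction $\conf_w$ (defined by $\conf_w(\loc,\theta) = $ number of type-$\theta$ processes whose letter-count vector, capped at $\bound$, equals $\loc$) satisfy $(\Procs,w)\models\varphi \iff \conf_w \models \Acc$; this is immediate from the definition of $\ctype{\bound,\loc}$ and of $\models\Acc$.

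**Matching strategies.** The heart of the first direction is to transfer winning strategies across this abstraction. Given a $\Procs$-winning $\Procs$-strategy $\strat$ for $\varphi$, I build a $\conf_{\tvec}$-strategy $\hatstrat$ for System in $\game$ (where $\tvec = (|\sProcs|,|\eProcs|,|\seProcs|)$): whenever it is System's turn at a configuration $\conf$, let System (in the word game) play out, via $\strat$ and the fairness assumption, a block $u \in \sEvents^\ast$ of its own actions until the accumulated execution satisfies $\varphi$; the net effect on process counts is a system transition $\upd$ with $\upd(\conf) = \conf_{wu}$, and since $(\Procs,wu)\models\varphi$ we get $\upd(\conf)\models\Acc$. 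Conversely, an Environment move in the game (a transition $\envUpdates$ landing in a configuration $\not\models\Acc$) corresponds exactly to Environment appending a block $v\in\eEvents^\ast$ with $(\Procs,wuv)\not\models\varphi$. Because $\varphi$ (in normal form) is insensitive to the order of letters, the word game and the token game produce the same sequence of satisfaction/falsification verdicts, so $\hatstrat$ is winning iff $\strat$ is; the reverse construction (game strategy $\to$ word strategy) is symmetric, using that any target configuration reachable by a system transition is witnessed by some concrete block of system actions. I must also handle the "wait and see" reduction carefully: one has to argue that restricting Environment to strictly alternating blocks (and to only playing when it can falsify) does not change who wins — for System this is because fairness forces all pending system letters to be played anyway; for Environment, because postponing its actions and lumping them into one falsifying block is never disadvantageous when the formula ignores order.

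**The other direction and the main obstacle.** For games-to-$\FOdata$, I go backwards through the same dictionary: given $\game = (\Alpha,\bound,\Acc)$, I read each $\accfunction \in \Acc$ as a conjunction $\bigwedge_{\loc,\theta} \exists^{\bowtie_\theta \cbound} y.(\theta(y)\wedge\ctype{\bound,\loc}(y))$ and take $\varphi$ to be the disjunction over $\accfunction \in \Acc$; this is manifestly in $\FOdata$, and $\conf_w\models\Acc \iff (\Procs,w)\models\varphi$ by the same computation. Then the strategy transfer of the previous paragraph applies verbatim in reverse to give $\Win{}{}{\game} = \Win{}{}{\varphi}$. The step I expect to be the real obstacle — and where the appendix presumably spends its effort — is the strategy-correspondence argument, specifically making precise the claim that the asynchronous, fairness-constrained word game is equivalent to the strictly turn-based token game. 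One must deal with infinite $\strat$-fair executions (where the word game has no finite end) versus the acyclic finite token game: the resolution is that, since $\varphi$ in normal form depends only on which $\loc\in\Loc$ are occupied by how many processes (capped at thresholds $\cbound$ and $\bound$), the abstraction stabilizes after finitely many blocks, and from that point on neither player can change the verdict, so the finite game faithfully captures the limit behaviour. Verifying this stabilization and that it respects fairness for both players is the delicate part; everything else is bookkeeping on the two translations.
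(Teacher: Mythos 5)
Your proposal follows essentially the same route as the paper's proof: put $\varphi$ into the normal form of Theorem~\ref{thm:normalform}, turn each disjunct into one local-acceptance function $\accfunction$ (and read the game back as a disjunction of counting formulas for the converse), abstract executions into configurations counting processes per location, and transfer strategies using order-invariance of $\FOdata$, the fairness assumption, and the fact that the capped abstraction can change only boundedly often, which makes the turn-based satisfy/falsify game finite and faithful. The paper merely organizes this via an explicit intermediate ``normalized'' synthesis problem (Lemma~\ref{lem:normalized}) and explicit bookkeeping assigning process identities to anonymous tokens, but the key ideas and the delicate points you single out are exactly those handled in Appendix~\ref{app:equifogame}.
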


\begin{figure}[t]
\centering
\includegraphics[width=0.95\textwidth]{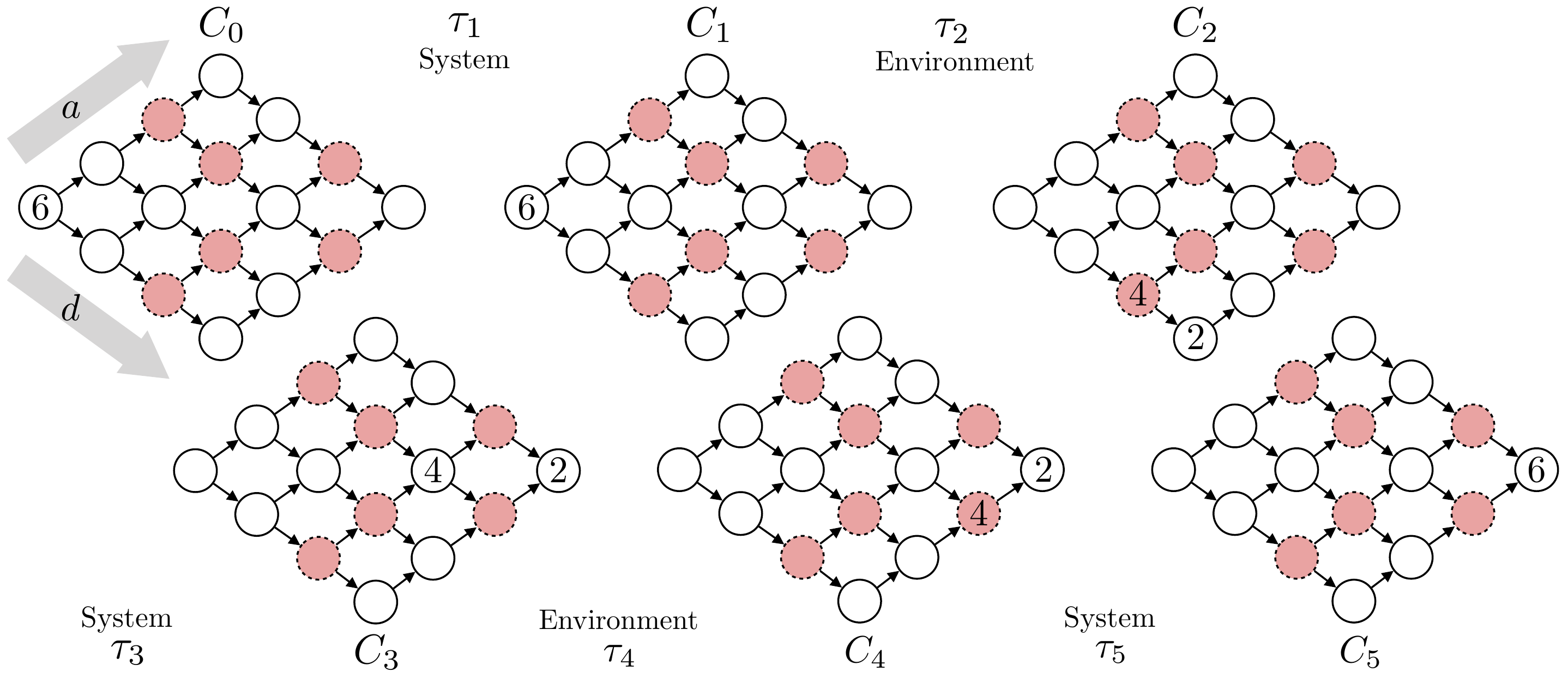}
\caption{A play of a parameterized vector game\label{fig:play}}
\end{figure}

\begin{example}\label{ex:game}
To illustrate parameterized vector games and the reduction from
the synthesis problem,
consider the formula $\varphi_4'=\bigwedge_{{\theta \in \Types\text{, }\loc \in Z}}
\exists^{= 0} y. \bigl(\theta(y) \wedge \ctype{3,\loc}(y)\bigr)$
in normal form from Example~\ref{ex:normalform}.
For simplicity, we assume that $\Alpha_\sys = \{a\}$ and $\Alpha_\env = \{d\}$.
That is, $Z$ is the set of vectors $\loclet{a^id^j} \in \Loc = \{0,\ldots,3\}^{\{a,d\}}$
such that $i = 2 \neq j$ or $j = 2 \neq i$.
Figure~\ref{fig:play} illustrates a couple of configurations $\conf_0,\ldots,\conf_5: \Loc \to \N^\Types$.
The leftmost location in a configuration is $\loc_0$, the rightmost
location $\loclet{a^3d^3}$, the topmost one
$\loclet{a^3}$, and the one at the bottom $\loclet{d^3}$.
Self-loops have been omitted, and locations from $Z$ have red background and a dashed border.

Towards an equivalent game $\game = (\Alpha,3,\Acc)$, it remains to
determine the acceptance condition $\Acc$.
Recall that $\phi_4'$ says that every class contains two occurrences of $a$
iff it contains two occurrences of $d$.
This is reflected by the acceptance condition
$\Acc = \{\accfunction\}$ where $\accfunction(\loc) = ({=}0\,,{=}0\,,{=}0)$ for all $\loc \in Z$ and
$\accfunction(\loc) = ({\ge}0\,,{\ge}0\,,{\ge}0)$ for all $\loc \in \Loc \setminus Z$.
With this, a configuration is accepting iff no
token is on a location from $Z$ (a red location).

We can verify that $\Win{}{}{\game} = \Win{}{}{\varphi_4'} = \N \times \Zero \times \N$.
In $\game$, a uniform winning strategy $\strat$ for System that works for all $\Procs$ with $\eProcs = \emptyset$
proceeds as follows: System first awaits an Environment's move and then
moves each token upwards as many locations as Environment has moved it downwards.
Figure~\ref{fig:play} illustrates an $\strat$-maximal $\conf_{(6,0,0)}$-play that is winning for System.
We note that $\strat$ is a ``compressed'' version of the winning strategy presented in
Example~\ref{ex:synthesis}, as System makes her moves only when really needed.
\exend
\end{example}

\section{Results for $\wFOdata$ via Parameterized Vector Games}\label{sec:results}
In this section, we present our results for the synthesis problem
for $\wFOdata$, which we obtain showing corresponding results for parameterized vector games.
In particular, we show that
$(\wFOdata,\Zero,\Zero,\N)$ and $(\wFOdata,\N,\N,\Zero)$
do not have a cutoff, whereas \mbox{$(\wFOdata,\N,{\{\pconsta\}},{\{\pconstb\}})$}
has a cutoff for all $\pconsta, \pconstb \in \N$. Finally, we prove that 
$\Synthesis{\wFOdata}{\Zero}{\Zero}{\N}$ is, in fact, undecidable.

\begin{lemma}\label{lem:cutoffgame00N}
There is a game $\game=(\Alpha,\bound,\Acc)$ such that $\Win{}{}{\game}$ does not have a cutoff wrt.\ $(\Zero,\Zero,\N)$.
\end{lemma}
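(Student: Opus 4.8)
The plan is to produce one parameterized vector game $\game=(\Alpha,\bound,\Acc)$ such that the set
\[
S \;=\; \{\, n\in\N \;:\; (0,0,n)\in\Win{}{}{\game}\,\}
\]
is both infinite and co-infinite; concretely I will arrange that $S$ is the set of odd natural numbers. This is enough, because any cutoff of $\Win{}{}{\game}$ with respect to $(\Zero,\Zero,\N)$ is necessarily of the form $\cutoff=(0,0,k)$ and would force $S$ to be eventually constant above $k$ (either $\{n\ge k\}\subseteq S$ or $\{n\ge k\}\cap S=\emptyset$), contradicting the fact that $S$ and $\N\setminus S$ are both infinite.

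The game will implement a \emph{Nim-style duel} on the $n$ tokens. Since in this slice every process has type $\both$, both players may move every token, and from $\conf_{(0,0,n)}$ all $n$ tokens start at $\loc_0$. I will choose $\Alpha$, $\bound$ and $\Acc$ so that, in the non-trivial part of every play, the only legal move of the player to move is to shift exactly one token out of $\loc_0$ (one step closer to a maximal, hence immovable, location, and never back toward $\loc_0$); in particular no player can sidestep the duel, e.g.\ by dumping several tokens at once somewhere. Since token counts only increase and are bounded by $\bound$, no token ever returns to $\loc_0$, so every play is forced to last exactly $n$ moves, after which the player to move is stuck. Because System moves first, this player is System exactly when $n$ is even. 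By the definition of a play, the configuration reached after move $n$ is accepting when move $n$ is System's (i.e.\ $n$ odd, and then Environment is stuck) and rejecting when move $n$ is Environment's (i.e.\ $n$ even, and then System is stuck). Hence $S$ is exactly the set of odd numbers.

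Given the game, the argument then has two symmetric halves. For odd $n$: the positional System strategy that simply plays its side of the duel is well defined, and the unique $\strat$-compatible $\strat$-maximal play ends in an accepting configuration, so System wins and $(0,0,n)\in\Win{}{}{\game}$. For even $n$: whatever System does, the play is forced, lasts $n$ moves, and ends with System stuck in a rejecting configuration, so no System strategy is winning and $(0,0,n)\notin\Win{}{}{\game}$. Both checks are short once the forcing property of the previous paragraph is in place. Since the odd numbers form an infinite, co-infinite subset of $\N$, $\Win{}{}{\game}$ has no cutoff with respect to $(\Zero,\Zero,\N)$.

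The main obstacle is the design of $\Acc$. A local acceptance condition can only compare token counts to fixed constants (via $=$ and $\ge$), and a Boolean combination of such tests is ``eventually stable'' in any single count; so the non-cutoff behaviour cannot be read off $\Acc$ and must be manufactured entirely by the turn alternation. The delicate point is therefore to force ``exactly one token leaves $\loc_0$ per move'' \emph{and} to rule out every alternative move -- especially a player moving many tokens at once into an otherwise unconstrained part of $\Loc$ -- using only threshold tests. The intended device is to route tokens through a bottleneck location whose count $\Acc$ caps at $1$ (so moving $\ge 2$ tokens at once produces the wrong acceptance status), together with a choice of $\Acc$ on the remaining locations that leaves, at each reachable position, a single legal continuation, namely the current player's duel move; making these two requirements compatible is the technical heart of the proof.
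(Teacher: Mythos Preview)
Your high-level approach --- choosing a game whose winning set on the $(0,0,n)$ slice is a single residue class modulo $2$, and concluding because such a set is infinite and co-infinite --- matches the paper exactly (there the winning set is the even numbers rather than the odd ones). The gap is that you never actually construct the game: the entire content of the lemma is that such a $\game$ exists, and your last paragraph ends by calling the design of $\Acc$ ``the technical heart of the proof'' without carrying it out.

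Moreover, the particular Nim mechanism you outline --- each move removes exactly one token from $\loc_0$ --- is harder to realise in this model than your sketch suggests, and may not work at all in the form stated. The players are asymmetric: System can only add $\Alpha_\sys$-letters, Environment only $\Alpha_\env$-letters, so you cannot make both perform the ``same'' removal. A token that System pushes out of $\loc_0$ lands at some $\loclet{a^k}$; Environment can still touch it by adding $b$'s, so it is not yet immovable, and you must prevent Environment from doing that \emph{instead of} taking a fresh token from $\loc_0$. Encoding this with threshold constraints while simultaneously arranging the accepting/non-accepting flip you need at every step is precisely what you defer. Naive attempts (e.g.\ ``accepting iff exactly one token sits at a System bottleneck'') let one player neutralise the other's previous move without decrementing $\loc_0$, so the play no longer has length exactly $n$ and the parity argument collapses.

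For comparison, the paper does not try to make each move remove one token. It forces tokens to travel \emph{in pairs} through a short pipeline $\loc_0\to\loclet{a}\to\loclet{ab}\to\loclet{a^2b}\to\loclet{a^2b^2}$, using ``$=2$'' (and auxiliary ``$=1,\,=1$'') constraints on the intermediate locations, a permissive sink at $\loclet{a^2b^2}$, and a blanket acceptance clause that makes any over-use of $b$ by Environment an immediate win for System. With an even number of tokens System can flush all pairs into the sink and Environment is trapped inside $\Acc$; with an odd number, the lone leftover token can never meet the ``$=2$'' requirement, so System cannot even produce an accepting configuration. The parity is created by the two-at-a-time bottleneck, not by counting moves --- that is the missing idea in your plan.
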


\newcommand{\Accfunction}{K}
\newcommand{\kequal}{{^=k}}
\newcommand{\kplus}{{^{\ge}k}}
\newcommand{\kibowtie}[1]{^{\bowtie_{#1}}k_{#1}}
\newcommand{\zeroequal}{{^=0}}
\newcommand{\zeroplus}{{^{\ge}0}}
\newcommand{\oneequal}{{^=1}}
\newcommand{\oneplus}{{^{\ge}1}}
\newcommand{\twoequal}{{^=2}}
\newcommand{\twoplus}{{^{\ge}2}}
\newcommand{\locenv}{\Loc_\env}
\newcommand{\acckappa}[5]{[#1\,,\hspace{0em}#2\,,\hspace{0em}#3\,,\hspace{0em}#4\,,\hspace{0em}#5]}

\begin{proof}
We let
$\Alpha_\sys = \{a\}$ and $\Alpha_\env = \{b\}$, as well as
$\bound = 2$.
For $k \in \{0,1,2\}$, define the local acceptance conditions
$\kequal = ({=}0\,, {=}0\,, {=}k)$ and $\kplus = ({=}0\,, {=}0\,, {\ge}k)$.
Set $\loc_1 = \loclet{a}, \loc_2 = \loclet{ab}, \loc_3 = \loclet{a^2b},$ and $\loc_4 = \loclet{a^2b^2}$. For $k_0,\ldots,k_4 \in \{0,1,2\}$ and ${\bowtie_0},\ldots,{\bowtie_4} \in \{=,\ge\}$, let
$\acckappa{\kibowtie{0}}{\kibowtie{1}}{\kibowtie{2}}{\kibowtie{3}}{\kibowtie{4}}$ denote $\accfunction \in \locAcc^\Loc$ where
$\accfunction(\loc_i) = ({\kibowtie{i}})$ for all $i \in \{0,\ldots,4\}$
and $\accfunction(\loc') = (\zeroequal)$ for $\loc' \notin \{\loc_0, \ldots, \loc_4\}$.
Finally,
\[
\Acc =
\left\{
\begin{array}{lclcl}
\acckappa{\zeroplus}{\twoequal}{\zeroequal}{\zeroequal}{\zeroplus} &\;&
	\acckappa{\zeroplus}{\zeroequal}{\zeroequal}{\twoequal}{\zeroplus} &\;&
		\acckappa{\zeroequal}{\zeroequal}{\zeroequal}{\zeroequal}{\twoplus}\\[0.3ex]
\acckappa{\zeroplus}{\oneequal}{\oneequal}{\zeroequal}{\zeroplus} &\;&
	\acckappa{\zeroplus}{\zeroequal}{\zeroequal}{\oneequal}{\oneplus}
\end{array}
\right\} \cup \Accfunction_\env\\
\]
where
$\Accfunction_\env = \{\accfunction_\loc \mid \loc \in \Loc$ such that $\loc(b) > \loc(a)\}$ with $\accfunction_\loc(\loc') = ({\oneplus})$ if $\loc' = \loc$, and $\accfunction_\loc(\loc') = (\zeroplus)$ otherwise.
This is illustrated in Figure~\ref{fig:nocutoff1}.

\begin{figure}[t]
\centering
\includegraphics[width=0.85\textwidth]{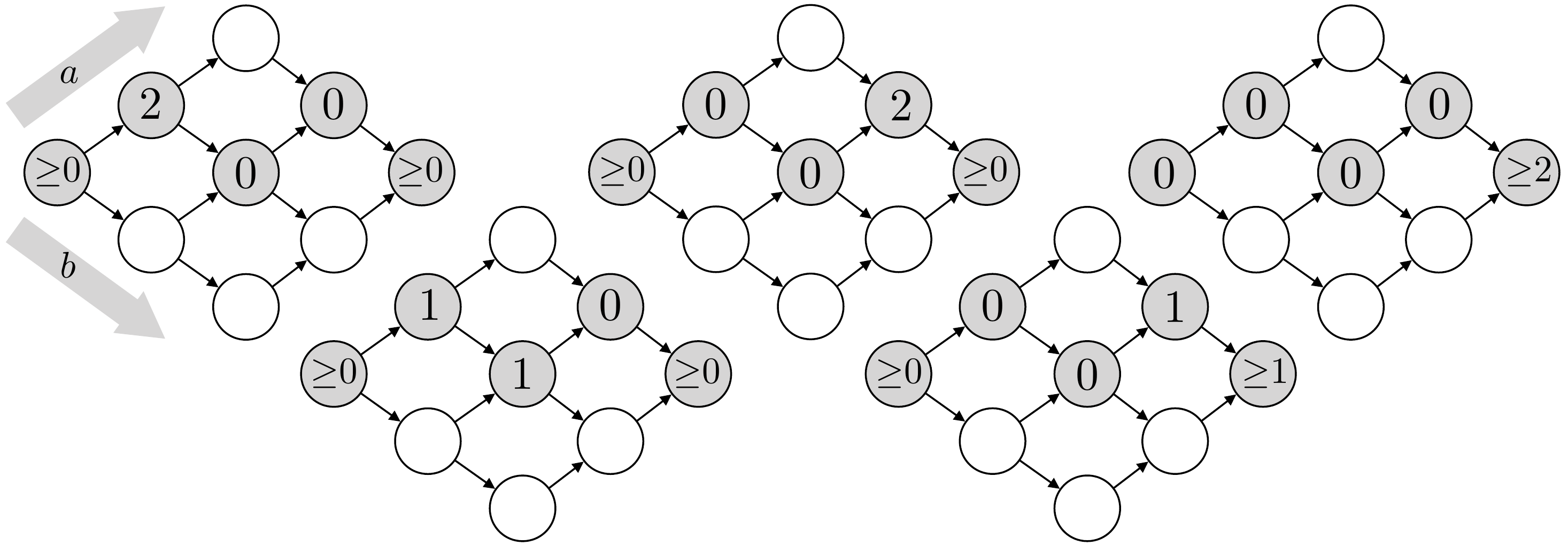}
\caption{Acceptance conditions for a game with no cutoff wrt.\ $(\Zero,\Zero,\N)$\label{fig:nocutoff1}}
\end{figure}

There is a winning strategy for System from any initial configuration of size $2n$: Move two tokens from $\loc_0$ to $\loc_1$, wait until Environment sends them both to $\loc_2$, then move them to $\loc_3$, wait until they are moved to $\loc_4$, then repeat with two new tokens from $\loc_0$ until all the tokens are removed
from $\loc_0$, and Environment cannot escape $\Acc$ anymore.
However, one can check that there is no winning strategy for initial configurations of odd size.
\iflong
\notets{laisser la preuve que pas de strategie gagnante pour nombre impair dans le papier?}

Indeed, first note that a strategy for system will never yield to a configuration where there is more than two tokens in the three intermediary states $\loc_1, \loc_2, \loc_3$ since it is not winning. 
Furthermore, note that there are no strategy for System from configurations $C^1_0$ such that $C^1_0(\loc_0)=1$, $C^1_0(\loc_4)\geq 0$ and $C^1_0(\loc)=0$ for other 
$\loc\in\Loc$ since there is no winning configuration reachable for System from $C^1_0$.
Now from a configuration $C$ such that $C(\loc_0)=2n+1$ for some $n\in\N$, and $C(\loc)=0$ for all other $\loc\in\Loc$, which is not winning, the only possible way for System to continue the play is to move two tokens from $\loc_0$ to $\loc_1$. Here the only possible move for Environment is to move those
tokens from $\loc_1$ to $\loc_2$, and since this is not a winning configuration, System can only move then again from $\loc_2$ to $\loc_3$. Finally, the only
possible move for Environment is to move the two tokens from $\loc_3$ to $\loc_4$. This leads to the configuration $C'$ such that $C'(\loc_0)=2n-1$, 
$C'(\loc_4)=2$ and $C'(\loc)=0$ for all other $\loc\in\Loc$. From there, we apply the same reasoning $n$ times and we show that the only possible play in which 
both Player and Environment never stops ends in a configuration $C^1_0$, from which there is no strategy for System. Hence there is no winning strategy
for System from initial configurations with an odd number of tokens.
\fi
\qed
\end{proof}

\begin{lemma}\label{lem:cutoffgameNN0}
There is a game $\game=(\Alpha,\bound,\Acc)$ such that $\Win{}{}{\game}$ does not have a cutoff wrt.\ $(\N,\N,\Zero)$.
\end{lemma}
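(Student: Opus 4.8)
The plan is to produce a single parameterized vector game $\game=(\Alpha,\bound,\Acc)$ whose set of winning triples of the form $(n_\sys,n_\env,0)$ is ``diagonal'', say
\[
\Win{}{}{\game}\cap(\N\times\N\times\Zero)=\{(n_\sys,n_\env,0)\mid n_\sys\ge n_\env\}
\]
(a set cut out by a parity condition such as ``$n_\sys+n_\env$ even'' would do just as well). Given such a game the no-cutoff claim is immediate: for any candidate $\cutoff=(s,e,0)\in\N\times\N\times\Zero$ both $(\max(s,e),e,0)$ and $(s,\max(s,e)+1,0)$ are $\ge\cutoff$, the first lies in $\Win{}{}{\game}$ and the second does not, so neither alternative in the definition of a cutoff can hold.

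What makes this harder than Lemma~\ref{lem:cutoffgame00N} is that, with no $\both$-tokens, the two players manipulate \emph{disjoint} pools of tokens: a system transition only increments $\Alpha_\sys$-coordinates of $\sys$-tokens, an environment transition only increments $\Alpha_\env$-coordinates of $\env$-tokens, so the sole coupling between the players is the alternation of turns together with the acceptance condition $\Acc$, which both read. The idea is to use $\Acc$ to force a \emph{lock-step} dance. With $\Alpha_\sys=\{a\}$, $\Alpha_\env=\{b\}$ and a small bound $\bound$, each $\sys$-token travels along the chain $\loc_0\to\loclet{a}\to\loclet{a^2}\to\cdots$ and each $\env$-token along $\loc_0\to\loclet{b}\to\loclet{b^2}\to\cdots$; the local conditions of $\Acc$ are chosen so that a configuration is accepting exactly when the positions of the $\sys$-tokens and of the $\env$-tokens are \emph{synchronized} --- one step ahead on System's side right after a System move, one step behind right after an Environment move. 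As in Lemma~\ref{lem:cutoffgame00N}, a handful of local conditions of the forms ``$=k$'' and ``$\ge k$'' with tiny constants $k$ suffices to (i) make every synchronized configuration accepting and every desynchronized one non-accepting, (ii) force each player to advance its tokens one at a time rather than dumping several at once or jumping out of turn, and (iii) make any move of Environment to a location carrying more $b$'s than the dance prescribes \emph{accepting}, hence useless for Environment --- the analogue of the set $\Accfunction_\env$ in the proof of Lemma~\ref{lem:cutoffgame00N}.

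Granting this mechanism, the two inclusions are argued in the spirit of Lemma~\ref{lem:cutoffgame00N}. If $n_\sys\ge n_\env$, System assigns each $\env$-token a private $\sys$-token and runs the dance pair after pair; eventually every move still available to Environment is, by (ii)--(iii), forced to produce an accepting configuration, so Environment can no longer reach a non-accepting one, the maximal play ends at an accepting configuration, and System wins. If $n_\sys<n_\env$, Environment ``overloads'' the pipeline, feeding in an $\env$-token that System has no partner for and steering the play into a non-accepting configuration from which it is System's turn but every move would again leave the configuration non-accepting --- exactly the dead end used for odd cardinalities in Lemma~\ref{lem:cutoffgame00N}. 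The boundary cases $n_\env=0$ (the dance never starts and System wins trivially) and $n_\sys=0<n_\env$ (System is immediately trapped) are checked directly.

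The main obstacle is entirely combinatorial: pinning down the finite set $\Acc$ of local ``$=k$''/``$\ge k$'' conditions that simultaneously enforce (i)--(iii) and keep the terminal ``Environment has no useful move'' configuration accepting, and then checking that these requirements are mutually consistent. Once the chains and the intended synchronization invariant are fixed this is a routine --- if fiddly --- case analysis, entirely parallel to the one underlying Lemma~\ref{lem:cutoffgame00N}; the no-cutoff statement then follows from the displayed description of $\Win{}{}{\game}\cap(\N\times\N\times\Zero)$.
\qed
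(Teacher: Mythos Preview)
Your plan is correct and matches the paper's approach almost exactly: the paper also takes $\Alpha_\sys=\{a\}$, $\Alpha_\env=\{b\}$, $\bound=2$, and designs $\Acc$ so that the winning set on $(\N,\N,\Zero)$ is precisely $\{(n_\sys,n_\env,0)\mid n_\sys\ge n_\env\}$, via the same one-token-at-a-time lock-step pairing (System advances a token $\loc_0\to\loclet{a}\to\loclet{a^2}$, Environment is forced to mirror $\loc_0\to\loclet{b}\to\loclet{b^2}$, repeat). The only thing the paper adds beyond your plan is the explicit four-element $\Acc$ realizing conditions (i)--(iii); your no-cutoff argument from the diagonal description is exactly the intended conclusion.
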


\begin{proof}
We define $\game$ such that System wins only if she has at least as many processes as Environment. Let $\Alpha_\sys = \{a\}$, $\Alpha_\env = \{b\}$, and $\bound = 2$.
As there are no shared processes, we can safely ignore locations with a letter from both System and Environment.
We set $\Acc = \{\accfunction_1, \accfunction_2, \accfunction_3, \accfunction_4\}$ where
\[
\begin{array}{lclcl}
\accfunction_1(\loclet{a}) = ({=}1\,, {=}0\,, {=}0) & \;~ &
	\accfunction_2(\loclet{a}) = ({=}1\,, {=}0\,, {=}0) & ~ &
		\accfunction_3(\loclet{a}) = ({=}0\,, {=}0\,, {=}0)\\[0.3ex]
\accfunction_1(\loclet{b}) = ({=}0\,, {=}0\,, {=}0) & \;~ &
	\accfunction_2(\loclet{b}) = ({=}0\,, {\ge}2\,, {=}0) & ~ &
		\accfunction_3(\loclet{b}) = ({=}0\,, {\ge}1\,, {=}0)\,,
\end{array}
\]
$\accfunction_4(\loc_0) = ({=}0\,, {=}0\,, {=}0)$, and $\accfunction_i(\loc') = ({\ge}0\,, {\ge}0\,, {=}0)$ for all other $\loc' \in \Loc$ and $i \in \{1,2,3,4\}$.
The details are available in Appendix~\ref{app:cutoffgameNN0}.~\qed
\end{proof}

We now turn to the case where the number of processes that can be triggered by Environment is bounded.
Note that similar restrictions are imposed in other settings to get
decidability, such as limiting the environment to a finite (Boolean) domain \cite{FigueiraP18} or
restricting to one environment process \cite{FinkbeinerO17,beutner_et_al:concur:2019}.
We obtain decidability of the synthesis problem via a cutoff construction:

\begin{theorem}\label{thm:cutoffgameNcc}
Given $\pconsta, \pconstb \in \N$,
every game $\game=(\Alpha,\bound,\Acc)$ has a cutoff wrt.\ $(\N,\{\pconsta\},\{\pconstb\})$.
More precisely:
Let $\ConstF$ be the largest constant that occurs in $\Acc$.
Moreover, let
$\maxdist \df (\pconsta + \pconstb) \cdot |A_\env| \cdot \bound$ and
$\cut = |\Loc|^{\maxdist+1} \cdot \ConstF$.
Then, $(\cut,\pconsta,\pconstb)$ is a cutoff of $\Win{\Alpha_\sys}{\Alpha_\env}{\game}$ wrt.\ $({\N},\{\pconsta\},\{\pconstb\})$.
\end{theorem}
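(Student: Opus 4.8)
The plan is to show that when the number of environment-controlled tokens is bounded (by $\pconsta+\pconstb$ combined across the two relevant types $\env$ and $\both$), a "large" pile of System tokens on $\loc_0$ behaves monotonically: if System wins from $\conf_{\tvec}$ it wins from $\conf_{\tvec'}$ for any $\tvec' \ge \tvec$ in $\N \times \{\pconsta\} \times \{\pconstb\}$, provided $\tvec_\sys$ is already at least the threshold $\cut$. Since along any play the $\env$- and $\both$-components of every configuration are conserved in total (System transitions only touch type $\sys$ and $\both$ tokens, but the $\both$ count is globally $\pconstb$ throughout, and Environment transitions only touch $\env$ and $\both$ tokens), the only truly unbounded resource is the number of pure-System tokens. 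The key structural observation is that from any reachable configuration, Environment tokens (and the $\pconstb$ shared tokens) can be pushed at most $\maxdist = (\pconsta+\pconstb)\cdot|\Alpha_\env|\cdot\bound$ "steps" total, because each such token can only be advanced along $\Alpha_\env$-letters, and each letter count saturates at $\bound$. Hence only locations within distance $\maxdist$ of $\loc_0$ (in the relevant sense) ever carry $\env$ or $\both$ tokens, and there are at most $|\Loc|^{\maxdist+1}$ such "active" configuration shapes once we forget the exact pure-System counts above $\ConstF$.

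The main steps, in order. \textbf{(1)} Formalize the invariant: in any $\conf_{\tvec}$-play, for every reachable $\conf$ and every $\loc$, $\sum_\loc \conf(\loc,\env) = \pconsta$ and $\sum_\loc \conf(\loc,\both) = \pconstb$; moreover any $\loc$ with $\conf(\loc,\env)>0$ or $\conf(\loc,\both)>0$ is reachable from $\loc_0$ by at most $\maxdist$ additions of $\Alpha_\env$-letters, and symmetrically System can only move $\both$-tokens finitely. \textbf{(2)} Define an equivalence on configurations: $\conf \equiv \conf'$ if they agree on all $\env$- and $\both$-entries everywhere, and for the $\sys$-entries they agree up to the threshold $\ConstF$ (i.e.\ $\min(\conf(\loc,\sys),\ConstF) = \min(\conf'(\loc,\sys),\ConstF)$ at every $\loc$, except we treat $\loc_0$ specially as the "reservoir"). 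Show this equivalence has the bounded index claimed and, crucially, that whether $\conf \models \Acc$ depends only on the $\equiv$-class (since all constants in $\Acc$ are $\le \ConstF$). \textbf{(3)} Prove a monotonicity/copycat lemma: if System has a winning strategy from $\conf$ and $\conf' \ge \conf$ with $\conf'$ differing from $\conf$ only by extra pure-System tokens on $\loc_0$, then System wins from $\conf'$ — by having System simulate its $\conf$-strategy, keeping the surplus tokens parked on $\loc_0$ (parking extra tokens on $\loc_0$ is harmless because $\loc_0$'s acceptance constraints, if tight, would already have failed, and Environment cannot move pure-System tokens). \textbf{(4)} Combine: along any play from a huge $\conf_{\tvec}$, the "projected" play on $\equiv$-classes is a play in a finite game; since the index is $\le |\Loc|^{\maxdist+1}\cdot\ConstF = \cut$, if $\tvec_\sys \ge \cut$ then... — actually the cleanest route is: show that for $\tvec_\sys \ge \cut$, being winning is insensitive to further increasing $\tvec_\sys$, because any winning strategy and any play only ever "uses" a bounded number of System tokens before either reaching a permanently-accepting configuration or the play terminating, and the bound is exactly $\cut$; hence the set $\{n : (n,\pconsta,\pconstb) \in \Win{}{}{\game}\}$ is eventually constant above $\cut$, giving the cutoff.

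\textbf{Main obstacle.} The delicate point is step (3)/(4): System transitions are allowed to move $\both$-tokens, and there are $\pconstb$ of those permanently in play, so "parking surplus on $\loc_0$" interacts with how the $\both$-tokens migrate; one must argue that a winning strategy from the smaller configuration can be faithfully mirrored with the surplus pure-$\sys$ tokens staying inert, and that the mirrored play's configurations are $\equiv$-equivalent to the original's, hence satisfy $\Acc$ exactly when the originals do. Equally subtle is bounding how far $\both$-tokens can travel under the combination of System and Environment moves: System may push a $\both$-token along $\Alpha_\sys$-letters arbitrarily, so the distance bound $\maxdist$ must be argued to control only the $\env$-letter component of each token's location — this is why $\maxdist$ is defined with $|\Alpha_\env|$ and not $|\Alpha|$ — and one has to check that the $\sys$-letter component of a $\both$-token's location, while possibly large, is itself captured by the threshold-$\ConstF$ truncation in the $\equiv$-relation. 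Verifying that $\Acc$-membership genuinely factors through $\equiv$ under this asymmetric truncation is the crux of the counting that yields the stated value of $\cut$.
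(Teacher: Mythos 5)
Your overall architecture (acceptance cannot count beyond $\ConstF$ per location, Environment can make at most $\maxdist$ moves in total, so only boundedly many ``shapes'' matter) matches the paper's starting point, but the core step (3) of your plan is not correct as stated, and it is precisely where the real work lies. The copycat argument ``keep the surplus pure-$\sys$ tokens parked on $\loc_0$'' fails because $\Acc$ may contain \emph{equality} constraints at $\loc_0$ (or anywhere else) that System's winning strategy from the smaller configuration satisfies by \emph{evacuating} that location: e.g.\ if some $\accfunction\in\Acc$ requires $\conf(\loc_0,\sys)=0$, the strategy from $(\nproc,\pconsta,\pconstb)$ may move all $\nproc$ tokens out of $\loc_0$ on its first move, so the constraint does \emph{not} ``already fail'' in the small game, yet the parked extra token violates it in the large game. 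The surplus token therefore has to be shepherded along with the other tokens, always hiding inside a block of at least $\ConstF$ tokens sitting at the same location so that $\Acc$ cannot see it. Making this work is exactly what forces the geometric reserve: the paper proves, by induction on the number $d$ of Environment moves still possible (the length of the longest $<_\env$-chain from the current configuration, bounded by $\maxdist$), that one extra token at a location holding at least $|\Loc|^{d+1}\cdot\ConstF$ system tokens is invisible; after System's move a pigeonhole argument finds a location to which at least a $1/|\Loc|$ fraction, hence at least $|\Loc|^{d}\cdot\ConstF$, of that block has moved, the extra token follows that block, and after Environment's reply $d$ has strictly decreased. This is why the cutoff is $|\Loc|^{\maxdist+1}\cdot\ConstF$; your reading of the bound as the index of a truncation equivalence ($|\Loc|^{\maxdist+1}$ classes ``times'' $\ConstF$) does not reflect an argument that actually closes the gap, and your own hedging in step (4) (``any play only ever uses a bounded number of System tokens'') is false in the naive sense for the same reason: equality constraints can force the strategy to move \emph{all} tokens.

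A second, smaller gap: a cutoff requires that \emph{all} $(\nproc,\pconsta,\pconstb)$ with $\nproc\ge\cut$ are uniformly winning or uniformly losing, so you need the equivalence ``$(\nproc,\pconsta,\pconstb)$ winning $\Longleftrightarrow$ $(\nproc{+}1,\pconsta,\pconstb)$ winning'' for $\nproc\ge\cut$, not just upward monotonicity above the threshold (upward closure alone would allow the winning set to start strictly above $\cut$, contradicting the specific cutoff value claimed). The paper handles the converse direction by a symmetric simulation in which System's move in the larger game is projected to the smaller one, removing one unit of the extra token's displacement when necessary; your proposal does not address this direction at all.
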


\newcommand{\dConf}[1]{\AllConf_{\!#1}}

\begin{proof}
We will show that, for all $\nproc \ge \cut$,
\[(\nproc,\pconsta,\pconstb) \in \Win{\Alpha_\sys}{\Alpha_\env}{\game} ~\Longleftrightarrow~ (\nproc+1,\pconsta,\pconstb) \in \Win{\Alpha_\sys}{\Alpha_\env}{\game}\,.\]

The main observation is that, when $\conf$ contains more than $K$ tokens in a given $\loc\in\Loc$, adding more tokens in $\loc$ will not change whether $\conf\models\Acc$. 
Given $\conf,\conf' \in \AllConf$,
we write $\conf <_\env \conf$ if $\conf \neq \conf'$ and there is $\upd \in \Updates_\env$
such that $\upd(\conf) = \conf'$.
Note that the length
$d$ of a chain $\conf_0 <_\env \conf_1 <_\env \ldots <_\env \conf_d$
is bounded by $\maxdist$. In other words,
$\maxdist$ is the maximal number of
transitions that Environment can do in a play.
For all $\dist \in \{0,\ldots,\maxdist\}$,
let $\dConf{d}$ be the set of configurations $\conf \in \AllConf$
such that the longest chain in $(\AllConf,<_\env)$ starting from
$\conf$ has length~$d$.

\begin{claim}
Suppose that $\conff \in \dConf{d}$ and $\loc \in \Loc$ such that $\conff(\loc)  = (\nproc,n_\env,n_\both)$ with $\nproc \ge |\Loc|^{d+1} \cdot \ConstF$ and $n_\env,n_\both \in \N$. Set $\hatconf = \conff[\loc \mapsto (\nproc+1,n_\env,n_\both)]$. Then,
\begin{center}
$\conff$ is winning for System $~\Longleftrightarrow~$ $\hatconff$ is winning for System.
\end{center}
\end{claim}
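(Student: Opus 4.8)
The plan is to prove the claim by induction on $d$, the length of the longest $<_\env$-chain from $\conff$, exploiting the fact that the game is acyclic and finite so that, from any configuration, Environment can make at most $d$ more moves when starting the play at that configuration (after one System move the relevant parameter can only decrease). The base case $d = 0$ means no Environment move is possible at $\conff$ or at $\hatconff$ after any System move that preserves $\Acc$; here winning amounts to $\conff \models \Acc$, and since $\nproc \ge \ConstF \ge K$, the extra token at $\loc$ does not change $\conff(\loc,\sys) \models \accfunction(\loc)_\sys$ for any $\accfunction \in \Acc$ (both $=k$ and $\ge k$ constraints with $k \le K$ behave the same above $K$), so $\conff \models \Acc \iff \hatconff \models \Acc$.

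For the inductive step, I would argue both directions simultaneously by transporting strategies. Suppose $\conff$ is winning; I want $\hatconff$ winning. The idea is that System, playing from $\hatconff$, mimics its winning strategy from $\conff$, simply ignoring the surplus token sitting in $\loc$. The subtlety is to ensure that the surplus token's presence is harmless along the whole play: when System plays a transition $\upd$ applicable at $\conff$, the same $\upd$ is applicable at $\hatconff$ (more tokens only helps applicability), and the resulting configurations $\upd(\conff)$ and $\upd(\hatconff)$ again differ by exactly one token in a single location $\loc'$ (either $\loc$ itself, if $\upd$ did not move the surplus, or wherever $\upd$ would route it). Crucially $\upd(\conff) \in \dConf{d'}$ for some $d' \le d$, and the surplus-location count in $\upd(\conff)$ is still at least $\nproc - (\text{tokens }\upd\text{ can remove})$. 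Here is where the bound $\nproc \ge |\Loc|^{d+1}\cdot\ConstF$ is calibrated: a single System or Environment transition can move at most $\sum_{\loc'}\conff(\loc',\theta)$ tokens, but what we need is only that the count at the distinguished location stays $\ge |\Loc|^{d'+1}\cdot\ConstF$ after the move. Since one transition touches the distinguished location by redistributing among at most $|\Loc|$ targets, a clean way is to maintain a distinguished location holding $\ge |\Loc|^{d'+1}\cdot\ConstF$ tokens of type $\theta$ at every point where the remaining chain length is $d'$; after a move the chain length drops to some $d' \le d-1$ on Environment moves and the $|\Loc|$-fold division still leaves $\ge |\Loc|^{d'}\cdot\ConstF \ge \ConstF$. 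Then I apply the induction hypothesis to the pair $(\upd(\conff), \upd(\hatconff))$ after each move, or rather use it to certify that acceptance status is preserved down the line; by positional determinacy on this acyclic game, a winning strategy exists whenever the configuration is winning, so it suffices to show "winning" is preserved.

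The cleanest packaging is probably: prove by induction on $d$ that for every $\conff \in \dConf{d}$ and every $\loc$ with $\conff(\loc,\sys) \ge |\Loc|^{d+1}\cdot\ConstF$, adding one $\sys$-token at $\loc$ preserves both (a) the truth value of $\conff \models \Acc$ and (b) the existence of a winning System strategy. Part (a) is immediate from $\nproc \ge \ConstF \ge K$. For part (b), in the "$\conff$ winning $\Rightarrow$ $\hatconff$ winning" direction, System copies its strategy; each resulting pair of configurations still satisfies the hypothesis with a strictly smaller (or equal, on System moves — handle those by noting System moves don't increase chain length and the count bound is non-increasing only mildly, or better, fold consecutive System moves) parameter, and Environment's reply from $\hatconff$ is simulated at $\conff$ by the analogous transition (an Environment transition applicable at $\hatconff$ has $\outm{\upd}(\loc) \le \hatconff(\loc)$; if $\outm{\upd}(\loc,\sys) = 0$ always since $\upd \in \envUpdates$, so it is automatically applicable at $\conff$ too). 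For the converse direction, symmetric reasoning with the roles of "more tokens helps System / hurts Environment" made precise.

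\textbf{Main obstacle.} The delicate point I expect to fight with is the bookkeeping of the token budget: I must verify that the distinguished location retains enough $\sys$-tokens after each of the (up to $\maxdist$-many) moves so that the induction hypothesis keeps applying, which is exactly why the threshold is $|\Loc|^{\maxdist+1}\cdot\ConstF$ rather than merely $\ConstF$ — each move can "split" the surplus across up to $|\Loc|$ locations and shorten the remaining Environment-chain by at least one, costing one factor of $|\Loc|$. A second, more conceptual subtlety is that System moves do not decrease $d$, so one cannot naively induct on $d$ across a System move; the fix is either to induct on the total number of remaining moves (bounded since the game is acyclic) with $d$ as a secondary measure, or to observe that after a System move $\upd$ with $\upd(\conff)\models\Acc$ one can immediately hand control to Environment and the relevant measure for the sub-game is again the Environment-chain length, which is $\le d$. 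I would set up the induction on $d$ and, within the step, treat a System move followed by the forced Environment move as a single round, so that the chain length strictly drops and the $|\Loc|^{d+1}$ budget is consumed by exactly one factor $|\Loc|$ per round.
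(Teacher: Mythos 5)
Your overall route is the same as the paper's: induction on the environment-chain length $d$, treating a System move followed by Environment's reply as one round, routing the surplus $\sys$-token together with a ``heavy'' batch (pigeonhole over the at most $|\Loc|$ locations that receive the tokens leaving $\loc$, which is exactly the one factor of $|\Loc|$ consumed per round), and using that environment transitions never move $\sys$-tokens, so Environment's answers transfer verbatim between the two plays. The packaging you propose at the end (round = System move + forced Environment move, chain length strictly drops) is precisely how the paper's proof is organized.

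Two points do not survive as written, though. First, the base case is mischaracterized: at $d=0$, winning is \emph{not} equivalent to $\conff \models \Acc$ --- System must still exhibit an applicable $\upd \in \sysUpdates$ with $\upd(\conff) \models \Acc$, and the configurations whose acceptance you must compare are $\upd(\conff)$ and the corresponding successor of $\hatconff$, not $\conff$ and $\hatconff$ themselves. In particular ``ignore the surplus token'' already fails here: if $\upd$ moves all $\sys$-tokens out of $\loc$ and some condition in $\Acc$ demands $=0$ $\sys$-tokens at $\loc$, then $\upd(\conff) \models \Acc$ while leaving the extra token at $\loc$ breaks acceptance. So the pigeonhole routing must be invoked in the base case too, exactly as in the inductive step (the paper treats $d=0$ and $d\ge 1$ uniformly up to whether Environment can answer). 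Second, the converse direction is not simply ``symmetric'': the winning move prescribed at $\hatconff$ disperses the $N+1$ tokens of $\hatconff(\loc,\sys)$, and to imitate it at $\conff$ you must delete one unit of that transition on some pair $(\loc,\loc')$ whose target remains heavy; this needs an explicit case split (either some $\loc' \neq \loc$ receiving a token from $\loc$ ends with at least $|\Loc|^{d}\cdot\ConstF + 1$ $\sys$-tokens, and you decrement that entry, or no such $\loc'$ exists and then $\loc$ itself still holds at least $|\Loc|^{d}\cdot\ConstF + 1$, and you keep the transition unchanged). Finally, a small slip in the bookkeeping: your invariant requires $\ge |\Loc|^{d'+1}\cdot\ConstF$ at chain length $d'$, but after the division you only claim $\ge |\Loc|^{d'}\cdot\ConstF$; the arithmetic does work out, since the division leaves $\ge |\Loc|^{d}\cdot\ConstF \ge |\Loc|^{d'+1}\cdot\ConstF$ for $d' \le d-1$, but it should be stated that way for the induction hypothesis to apply.
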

\newcommand{\locp}{\loc'}

To show the claim, we proceed by induction
on $d \in \N$, which is illustrated in Figure~\ref{fig:cutoff}.
In each implication, we distinguish the cases $d=0$ and $d \ge 1$.
For the latter, we assume that equivalence holds for all
values strictly smaller than~$d$.

For
$\upd \in \sysUpdates$ and $\loc,\locp \in \Loc$, we let
$\inc{\upd}{(\loc,\locp,\sys)}$ denote the transition
$\hatupd \in \sysUpdates$ given by
$\hatupd(\loc_1,\loc_2,\env) = \upd(\loc_1,\loc_2,\env) = 0$,
$\hatupd(\loc_1,\loc_2,\both) = \upd(\loc_1,\loc_2,\both)$,
$\hatupd(\loc_1,\loc_2,\sys) =
\upd(\loc_1,\loc_2,\sys) + 1$ if $(\loc_1,\loc_2) = (\loc,\locp)$,
and
$\hatupd(\loc_1,\loc_2,\sys) =
\upd(\loc_1,\loc_2,\sys)$ if $(\loc_1,\loc_2) \neq (\loc,\locp)$.
We define $\dec{\upd}{(\loc,\locp,\sys)}$ similarly (provided $\upd(\loc,\locp,\sys) \ge 1$).

\begin{figure}[t]
\begin{center}
\includegraphics[width=0.9\textwidth]{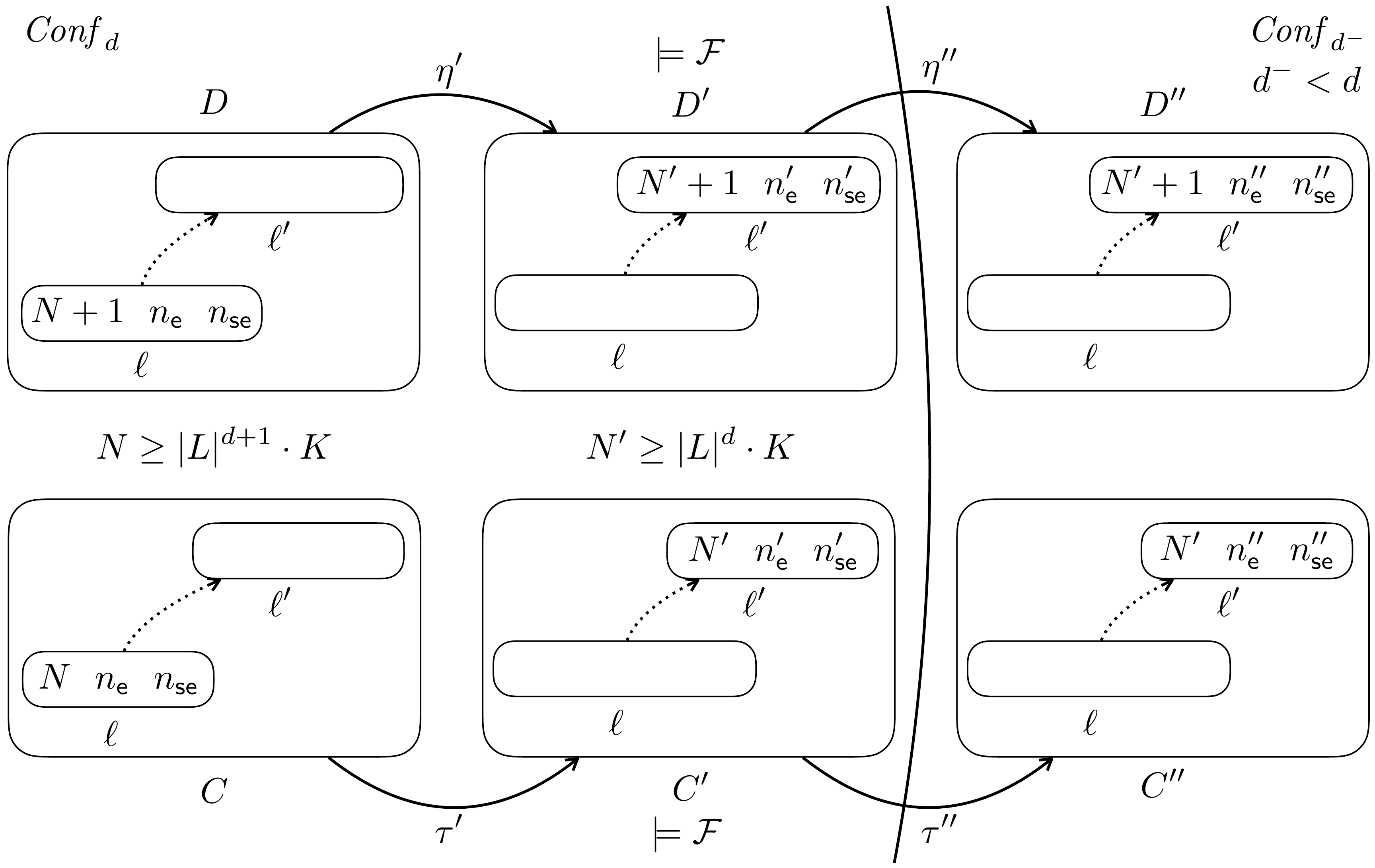}
\vspace{-4ex}
\end{center}
\caption{Induction step in the cutoff construction\label{fig:cutoff}}
\end{figure}

\paragraph{$\Longrightarrow$\textup{:}}

Let $\stratf$ be a winning strategy for System from
$\conff \in \dConf{d}$.
Let $\updp = f(\conff)$ and $\confp = \updp(\conff)$.
Note that $\confp \models \Acc$.
Since $\conff(\loc,\sys) = N \ge |\Loc|^{d+1} \cdot \ConstF$,
there is $\locp \in \Loc$ such that
$\loc + w = \locp$ for some $w \in \Alpha_\sys^\ast$
and $\confp(\locp,\sys) = N' \ge |\Loc|^{d} \cdot \ConstF$.

We show that $\hatconf = \conff[\loc \mapsto (\nproc+1,n_\env,n_\both)]$ is winning for System by exhibiting
a corresponding winning strategy $\hatstratf$ from $\hatconff$ that will carefully control the position of the additional token.
First, set $\hatstratf(\hatconff) = \hatupdp$ where
$\hatupdp = \inc{\updp}{(\loc,\locp,\sys)}$.
Let $\hatconfp = \hatupdp(\hatconff)$.
We obtain $\hatconfp(\locp,\sys) = N'+1$.
Note that, since $N' \ge \ConstF$, the acceptance condition
$\Acc$ cannot distinguish between
$\confp$ and $\hatconf'$.
Thus, we have $\hatconfp \models \Acc$.

\begin{description}\itemsep=1ex
\item[\normalfont Case $d=0$:]

As, for all transitions $\hatupdpp \in \envUpdates$,
we have $\hatupdpp(\hatconfp) = \hatconfp \models \Acc$,
we reached a maximal play that is winning for System.
We deduce that $\hatconff$ is winning for System.

\item[\normalfont Case $d \ge 1$:]

Take any $\hatupdpp \in \envUpdates$ and $\hatconfpp$ such that
$\hatconfpp = \hatupdpp(\hatconfp) \not\models \Acc$.
Let $\updpp = \hatupdpp$ and $\confpp = \updpp(\confp)$.
Note that
$\hatconfpp = \confpp[(\locp,\sys) \mapsto N+1]$,
$\confpp = \hatconfpp[(\locp,\sys) \mapsto N]$,
and $\confpp,\hatconfpp \in \dConf{d^-}$ for some $d^- < d$.
As $\strat$ is a winning strategy for System from $\conff$,
we have that $\confpp$ is winning for System.
By induction hypothesis, 
$\hatconfpp$ is winning for System, say by winning strategy
$\hatstratpp$.
We let
$
\hatstrat(\hatconff\, \hatupdp\, \hatconfp\, \hatupdpp\, \pi) = \hatstratpp(\pi)
$
for all $\hatconfpp$-plays $\pi$.
For all unspecified plays, let $\hatstratf$ return any applicable system
transition. Altogether, for any choice of $\hatupdpp$, we have that $\hatstratpp$ is winning from $\hatconfpp$. Thus, $\hatstratf$ is a winning strategy from $\hatconff$.
\end{description}

\paragraph{$\Longleftarrow$\textup{:}}

Suppose
$\hatstratf$ is a winning strategy for System from $\hatconff$.
Thus, for $\hatupdp = \hatstratf(\hatconff)$ and
$\hatconfp = \hatupdp(\hatconff)$, we have $\hatconfp \models \Acc$.
Recall that $\hatconff(\loc,\sys) \ge (|\Loc|^{d+1} \cdot \ConstF) + 1$.
We distinguish two cases:
\begin{enumerate}\itemsep=0.5ex
\item Suppose there is $\locp \in \Loc$
such that $\loc \neq \locp$,
$\hatconfp(\locp,\sys) = N'+1$
for some $N' \ge |\Loc|^{d} \cdot \ConstF$, and
$\hatupdp(\loc,\locp,\sys) \ge 1$. Then, we set
$\updp = \dec{\hatupdp}{(\loc,\locp,\sys)}$.

\item Otherwise, we have $\hatconfp(\loc,\sys) \ge (|\Loc|^{d} \cdot \ConstF) + 1$,
and we set $\updp = \hatupdp$ (as well as $\locp = \loc$ and $N' = N$).
\end{enumerate}
Let $\confp = \updp(\conff)$.
Since $\hatconfp \models \Acc$, one obtains $\confp \models \Acc$.

\begin{description}\itemsep=1ex
\item[\normalfont Case $d=0$:]
For all transitions $\updpp \in \envUpdates$,
we have $\updpp(\confp) = \confp \models \Acc$.
Thus, we reached a maximal play that is winning for System.
We deduce that $\conff$ is winning for System.

\item[\normalfont Case $d \ge 1$:]
Take any $\updpp \in \envUpdates$ such that
$\conf'' = \updpp(\confp) \not\models \Acc$.
Let $\hatupdpp = \updpp$ and $\hatconfpp = \hatupdpp(\hatconfp)$.
We have $\confpp = \hatconfpp[(\locp,\sys) \mapsto N']$,
$\hatconfpp = \confpp[(\locp,\sys) \mapsto N'+1]$,
and $\confpp,\hatconfpp \in \dConf{d^-}$ for some $d^- < d$.
As $\hatconfpp$ is winning for System,
by induction hypothesis, 
$\confpp$ is winning for System, say by winning strategy
$\stratpp$.
We let
$
\stratf(\conff\, \updp\, \confp\, \updpp\, \pi) = \stratpp(\pi)
$
for all $\confpp$-plays $\pi$.
For all unspecified plays, let $\stratf$ return an arbitrary applicable system
transition. Again, for any choice of $\updpp$,
$\stratpp$ is winning from $\confpp$. Thus, $\stratf$
is a winning strategy from $\conff$.
\end{description}

This concludes the proof of the claim and, therefore, of Theorem~\ref{thm:cutoffgameNcc}.
\qed
\end{proof}

\begin{corollary}\label{cor:decgameNcc}
Let $\pconsta, \pconstb \in \N$ be the number of environment and the number of mixed processes, respectively.
The problems $\Gameproblem{\N}{\{\pconsta\}}{\{\pconstb\}}$ and
$\Synthesis{\wFOdata}{\N}{\{\pconsta\}}{\{\pconstb\}}$ are decidable.
\end{corollary}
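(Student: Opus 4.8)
The plan is to derive both statements from the two results already in hand. By Lemma~\ref{lemma:equifogame}, a sentence $\varphi \in \FOdata$ can be effectively turned into a parameterized vector game $\game$ with $\Win{}{}{\varphi} = \Win{}{}{\game}$, so the instance of $\Synthesis{\wFOdata}{\N}{\{\pconsta\}}{\{\pconstb\}}$ for $\varphi$ reduces to the instance of $\Gameproblem{\N}{\{\pconsta\}}{\{\pconstb\}}$ for $\game$. It therefore suffices to decide the latter, and decidability of the synthesis problem follows by this reduction.

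Given a parameterized vector game $\game=(\Alpha,\bound,\Acc)$, I would first compute, using Theorem~\ref{thm:cutoffgameNcc}, the cutoff $(\cut,\pconsta,\pconstb)$ of $\Win{}{}{\game}$ wrt.\ $(\N,\{\pconsta\},\{\pconstb\})$. This is an explicit, effectively computable number: with $\maxdist = (\pconsta+\pconstb)\cdot|\Alpha_\env|\cdot\bound$ and $\ConstF$ the largest constant occurring in $\Acc$, one has $\cut = |\Loc|^{\maxdist+1}\cdot\ConstF$. By the definition of a cutoff, the single triple $(\cut,\pconsta,\pconstb)$ already decides whether \emph{all} triples $(\nproc,\pconsta,\pconstb)$ with $\nproc \ge \cut$ lie in $\Win{}{}{\game}$ or \emph{none} of them do. Hence
\[
\Win{}{}{\game} \cap (\N \times \{\pconsta\} \times \{\pconstb\}) \neq \emptyset
\quad\iff\quad
\conf_{(\nproc,\pconsta,\pconstb)} \text{ is winning for System for some } \nproc \in \{0,\ldots,\cut\},
\]
so it only remains to decide, for each of these finitely many initial configurations, whether it is winning for System.

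This last step is routine. Since every transition merely relocates tokens, every configuration reachable from $\conf_{(\nproc,\pconsta,\pconstb)}$ carries exactly $\nproc+\pconsta+\pconstb$ tokens, so only finitely many configurations are reachable; and since a transition moves tokens from $\loc$ only to some $\loc'=\loc+w$, the multiset of occupied locations can only grow along any play with respect to the componentwise order on $\Loc$, which makes the game acyclic. Thus the $\conf$-plays from a fixed finite initial configuration form a finite tree, and — as already observed right after the definition of \emph{winning} — whether $\conf$ is winning for System is decided by a bottom-up labelling of this tree (a configuration is winning for System iff it satisfies $\Acc$ and either Environment has no applicable transition taking it out of $\Acc$, or System has an applicable transition $\upd$ with $\upd(\conf)\models\Acc$ from which every Environment reply landing outside $\Acc$ again reaches a winning configuration). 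Running this procedure for the $\cut+1$ initial configurations above yields decidability of $\Gameproblem{\N}{\{\pconsta\}}{\{\pconstb\}}$, and via the reduction of the first paragraph, of $\Synthesis{\wFOdata}{\N}{\{\pconsta\}}{\{\pconstb\}}$.

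There is no genuine obstacle here beyond assembling the pieces: the substantive content is in Theorem~\ref{thm:cutoffgameNcc} (a \emph{computable} cutoff) and Lemma~\ref{lemma:equifogame} (the effective equivalence between synthesis for $\wFOdata$ and parameterized vector games). The only two things to verify are that the cutoff bound is effectively computable from $\game$ — immediate from its closed-form expression — and that deciding whether System wins from a fixed, finite initial configuration is possible — immediate since it is a finite acyclic reachability game solvable by backward induction.
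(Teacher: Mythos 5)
Your proposal is correct and follows essentially the same route the paper intends: combine Lemma~\ref{lemma:equifogame} with the computable cutoff of Theorem~\ref{thm:cutoffgameNcc}, and then decide each of the finitely many initial configurations $\conf_{(\nproc,\pconsta,\pconstb)}$, $\nproc \le \cut$, by solving the finite acyclic reachability game (as the paper itself observes right after the definition of winning strategies). The only cosmetic imprecision is the claim that occupied locations ``grow'' along a play (transitions with $w=\varepsilon$ are formally allowed), but finiteness of plays is anyway guaranteed by the alternation requirement between accepting and non-accepting configurations, so the backward-induction argument stands.
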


\newcommand{\first}{\text{First}}
\newcommand{\second}{\text{Second}}
\newcommand{\third}{\text{Third}}
\newcommand{\old}{\text{Old}}
\newcommand{\typeletter}[1]{t_\Alpha^{#1}}
\newcommand{\typeequal}[1]{t_{=}^{#1}}
\newcommand{\typebefore}[1]{t_{\dataord}^{#1}}
\newcommand{\typeafter}[1]{t_{\dataordinv}^{#1}}

\begin{theorem}\label{thm:game00N}
$\Gameproblem{\Zero}{\Zero}{\N}$ and
$\Synthesis{\wFOdata}{\Zero}{\Zero}{\N}$ are undecidable.
\end{theorem}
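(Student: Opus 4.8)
The plan is to prove undecidability of $\Gameproblem{\Zero}{\Zero}{\N}$ by reduction from the halting problem (equivalently, the reachability/non-halting problem) for deterministic $2$-counter Minsky machines; the corresponding statement for $\Synthesis{\wFOdata}{\Zero}{\Zero}{\N}$ then follows immediately from the second half of Lemma~\ref{lemma:equifogame}. So fix a Minsky machine $\tcm$ with counters $\tcmcounter_1,\tcmcounter_2$, instructions indexed by control states $\tcmQ$, an initial state $\tcminit$ and a halting state $\tcmfinal$. Since all processes are of type $\both$, a configuration $\conf$ of the game is essentially a multiset of tokens over the single-type locations $\Loc = \{0,\ldots,\bound\}^\Alpha$, and both players may freely add System-letters (resp. Environment-letters) to the counters attached to a token. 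The key idea is to let each token \emph{simulate} a run of $\tcm$: the letters accumulated on a token encode the current control state and (bounded abstractions of) the two counter values, and a run of the game corresponds to one step of $\tcm$ being performed, in a ``majority vote'' fashion, on all tokens simultaneously.

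First I would design the alphabet and bound. Use $\Alpha_\sys$ to contain a letter for each control state $q\in\tcmQ$ and for ``phase markers'', and $\Alpha_\env$ to contain letters that Environment uses to encode the counter values and to challenge the System's claims. The bound $\bound$ is chosen large enough that the location of a token records: (i) which control state the token currently claims, (ii) a bounded window of the two counters, and (iii) a parity/handshake bit used to force strict alternation of ``System announces next instruction'' and ``Environment updates counters / checks zero-tests''. Because a single token cannot hold an unbounded counter, the unboundedness is supplied by the \emph{number of tokens}: I will use the standard trick of representing counter value $n$ by having (roughly) $n$ tokens sitting in a dedicated ``counter-$1$ pool'' location and $m$ tokens in a ``counter-$2$ pool'' location, with one distinguished ``program-counter'' token carrying the control state. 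An increment of $\tcmcounter_i$ is simulated by moving one token from $\loc_0$ (the reservoir of unused tokens) into pool~$i$; a decrement moves one token from pool~$i$ back out; and a zero-test on $\tcmcounter_i$ is the delicate part.

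The acceptance condition $\Acc$ is where the game mechanics live. I would make $\Acc$ encode the invariant ``the current configuration represents a legal, not-yet-halted partial run of $\tcm$'', so that System is forced, on her move, to reach a configuration satisfying $\Acc$ (a legal successor), and Environment is forced, on her move, to reach a configuration \emph{not} satisfying $\Acc$ — i.e., Environment is the one who gets to perform the counter updates and, crucially, to \emph{cheat-check}: after System announces the next instruction, Environment must move tokens accordingly, and $\Acc$ is set up so that the only way for Environment to produce a non-accepting configuration is to faithfully apply the instruction (including decrementing a pool), while the only way for her to produce a configuration from which System is then stuck is to correctly expose a zero-test violation. Conversely, System wins (reaches a final accepting, Environment-stuck configuration) exactly when the simulated run reaches $\tcmfinal$. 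Thus $(\nproc,0,0)\in\Win{}{}{\game}$ for some $\nproc\in\N$ iff $\tcm$ halts: if $\tcm$ halts using maximal counter value $\nproc$, then $\nproc+O(1)$ tokens suffice for System to drive the simulation to $\tcmfinal$; and if $\tcm$ does not halt, no finite number of tokens lets System win, since any such bound would cap the reachable counter values and Environment can always eventually force a violation or an infinite (hence non-winning, as plays are finite and System-maximal) continuation — here I will use that the game from a fixed $\conf$ is an acyclic finite reachability game, so System must genuinely reach a winning configuration.

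The main obstacle — and the reason the proof is ``subtle'' as the introduction warns — is the zero-test, because the logic/game has no order and no way to directly compare a pool's size against $0$ other than through local acceptance conditions of the form ${=}0$ or ${\ge}k$. The trick I would use is the classical one for token/Petri-style undecidability: zero-tests are not checked directly but are \emph{challenged}. System claims $\tcmcounter_i=0$ and moves to the zero-branch; Environment, if she disagrees, must ``spend'' a witness token from pool~$i$ to move into a trap location that makes $\Acc$ unsatisfiable forever after — but to keep this sound I must ensure Environment cannot cheat by spending a token she does not have, which is automatic since transitions are only applicable when $\outm{\upd}(\loc)\le\conf(\loc)$. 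The genuinely careful bookkeeping is guaranteeing the converse — that Environment \emph{cannot} cheat System by, e.g., spuriously draining a pool or refusing to apply the announced instruction — and this is arranged by making $\Acc$ forbid any configuration that is not a one-step $\tcm$-successor of the previous one, together with handshake bits enforced in $\Loc$, so that a non-faithful Environment move either lands back in $\Acc$ (not allowed for Environment) or lands in a configuration that is again a legal successor and hence harmless. I expect the bulk of the write-up to be the explicit tabulation of $\Loc$, the finitely many $\accfunction\in\locAcc^\Loc$ comprising $\Acc$, and a lemma-by-lemma verification that faithful play is the only non-losing play for each side; the high-level correctness argument above is then short, and the reduction is clearly effective, giving undecidability. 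Full details are deferred to the appendix.
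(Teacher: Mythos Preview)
Your high-level plan---reduce from $2$-counter machine halting, encode counter values as sizes of token pools, and use $\Acc$ to force faithful simulation---matches the paper's. But several of your design choices diverge from the paper in ways that either add unnecessary complication or leave the genuinely hard part unaddressed.

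First, a small slip: the problem is $(\Zero,\Zero,\N)$, so the winning tuples are $(0,0,\nproc)$, not $(\nproc,0,0)$.

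Second, you misidentify the main obstacle. You call the zero-test ``the delicate part'' and propose a challenge protocol, but this is unnecessary: local acceptance conditions include ${=}0$, so the paper simply puts $\loclet{a_i^2b^2}=0$ into the accepting pattern for a zero-test step, and System can only reach that pattern when the pool is genuinely empty. No challenge mechanism is needed. The actual subtlety---which you wave at with ``handshake bits'' and ``$\Acc$ forbids any configuration that is not a one-step successor''---is that $\Acc$ sees only the \emph{current} configuration, so ``not a successor of the previous one'' is not directly expressible. The paper solves this by a very asymmetric design: $\Alpha_\env=\{b\}$ is a single letter, System does all the real work (choosing the transition $t$, moving the state token, moving counter tokens), and Environment's only non-losing move is to add one $b$ to each token System just touched, advancing it through a fixed ladder $\loclet{q}\to\loclet{qb}\to\loclet{q^2b}\to\loclet{q^2b^2}$. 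A blanket condition ``any token with more $b$'s than System-letters makes System win'' pins Environment down, and the remaining accepting patterns (Table~2 in the paper) enumerate the finitely many legal intermediate snapshots, so that any deviation by either player lands outside (resp.\ inside) $\Acc$ at the wrong parity.

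Your inverted role assignment (Environment performs the counter updates) may be workable, but you have not explained how to prevent Environment from, say, moving several pool tokens at once or touching tokens in $\loc_0$---and with Environment holding a rich alphabet this bookkeeping is harder, not easier, than the paper's one-letter-Environment design. The proof as written defers exactly the part where the reduction lives; before claiming correctness you should at least sketch the location ladder and the accepting patterns that force each player's hand.
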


\begin{proof}
We provide a reduction from the halting problem for 2-counter machines (2CM) to $\Gameproblem{\Zero}{\Zero}{\N}$. A 2CM $\tcm=(\tcmQ,\tcmT,\tcmcounter_1,\tcmcounter_2,\tcminit,\tcmfinal)$ has two counters, $\tcmcounter_1$ and $\tcmcounter_2$, a finite set of states $\tcmQ$, and a set of transitions~$\tcmT \subseteq Q \times \Op \times Q$
where $\Op = \{\incc{i}\,,\,\decc{i}\,,\,\zeroc{i} \mid i \in \{1,2\}\}$. Moreover, we have an initial state $\tcminit\in Q$ and a halting state $\tcmfinal\in Q$. 
A configuration of $\tcm$ is a triple $\gamma = (\tcmq, \nu_1, \nu_2)\in Q\times\N\times\N$ giving the current state and the current respective counter values. The initial configuration is $\gamma_0=(\tcminit, 0, 0)$ and the set of halting configurations is $\tcmF=\{\tcmfinal\}\times\N\times\N$.
For $t \in \tcmT$, configuration $(\tcmq', \nu_1', \nu_2')$ is a ($t$-)successor of $(\tcmq, \nu_1, \nu_2)$,
written $(\tcmq,\nu_1,\nu_2)\vdash_\tcmt (\tcmq',\nu'_1,\nu'_2)$, if
there is $i \in \{1,2\}$ such that $\nu_{3-i}' = \nu_{3-i}$ and
one of the following holds: $(i)$ $t=(\tcmq,\incc{i},\tcmq')$ and $\nu'_i=\nu_i+1$, or $(ii)$ $t=(\tcmq,\decc{i},\tcmq')$ and $\nu'_i=\nu_i-1$, or $(iii)$ $t=(\tcmq,\zeroc{i},\tcmq')$ and $\nu_i=\nu'_i=0$.
A run of $\tcm$ is a (finite or infinite) sequence $\gamma_0 \vdash_{\tcmt_1} \gamma_1 \vdash_{\tcmt_2} \dots$.
The 2CM halting problem asks whether there is a run reaching a configuration in $\tcmF$. It is known to be undecidable~\cite{Minsky67}.

\smallskip

We fix a 2CM $\tcm=(\tcmQ,\tcmT,\tcmcounter_1,\tcmcounter_2,\tcminit,\tcmfinal)$.
Let $\Alpha_\sys = \tcmQ \cup \tcmT \cup \{a_1, a_2\}$ and $\Alpha_\env = \{b\}$ with $a_1$, $a_2$, and $b$ three fresh symbols.
We consider the game $\game=(\Alpha,\bound,\Acc)$ with $\Alpha=\Alpha_\sys\uplus\Alpha_\env$, $\bound = 4$, and $\Acc$ defined below. Let $\Loc = \{0,\ldots,\bound\}^\Alpha$. Since there are only processes shared by System and Environment, we alleviate notation and consider that a configuration is simply a mapping $\conf: \Loc \to \N$.
From now on, to avoid confusion, we refer to configurations of the 2CM $\tcm$ as $\tcm$-configurations, and to configurations of $\game$ as $\game$-configurations.

Intuitively, every valid run of $\tcm$ will be encoded as a play in $\game$, and the acceptance condition will enforce that, if a player in $\game$ deviates from a valid play, then she will lose immediately.
At any point in the play, there will be at most one process with only a letter from $\tcmQ$ played, which will represent the current state in the simulated 2CM run.
Similarly, there will be at most one process with only a letter from $\tcmT$ to represent what transition will be taken next.
Finally, the value of counter $\tcmcounter_i$ will be encoded by the number of processes with exactly two occurrences of $a_i$ and two occurrences of $b$ (i.e., $\conf(\loclet{a_i^2b^2})$).

To increase counter $\tcmcounter_i$, the players will move a new token to $\loclet{a_i^2b^2}$, and to decrease it, they will move, together, a token from $\loclet{a_i^2b^2}$ to $\loclet{a_i^4 b^4}$. Observe that, if $\tcmcounter_i$ has value $0$, then $\conf(\loclet{a_i^2b^2})=0$ in the corresponding configuration
of the game. As expected, it is then impossible to simulate the decrement of $\tcmcounter_i$.
Environment's only role is to acknowledge System's actions by playing its (only) letter when System simulates a valid run. If System tries to cheat, she loses immediately.

\medskip
\myparagraph{Encoding an $\tcm$-configuration}
Let us be more formal. Suppose $\gamma = (\tcmq, \nu_1, \nu_2)$ is an $\tcm$-configuration
and $\conf$ a $\game$-configuration. We say that $\conf$ \emph{encodes} $\gamma$ if
\begin{itemize}\itemsep=0.3ex
\item 
$\conf(\loclet{\tcmq}) = 1$, $\conf(\loclet{a_1^2 b^2}) = \nu_1$, $\conf(\loclet{a_2^2 b^2}) = \nu_2$,
\item 
$\conf(\loc) \ge 0$ for all $\loc \in \{\loc_0\} \cup \{\loclet{\hat\tcmq^2 b^2}, \loclet{\tcmt^2 b^2}, \loclet{a_i^4 b^4} \mid \hat\tcmq \in \tcmQ, \tcmt \in \tcmT, i \in \{1,2\}\}$,
\item 
$\conf(\loc) = 0$ for all other $\loc \in \Loc$.
\end{itemize}
We then write $\gamma= \mconf(\conf)$.
Let $\confset(\gamma)$ be the set of  $\game$-configurations $\conf$ that encode~$\gamma$.
We say that a $\game$-configuration $\conf$ is \emph{valid} if $\conf\in \confset(\gamma)$ for some $\gamma$.

\medskip
\myparagraph{Simulating a transition of $\tcm$}
Let us explain how we go from a $\game$-configuration encoding $\gamma$ to a $\game$-configuration encoding a successor $\tcm$-configuration
$\gamma'$. Observe that System cannot change by herself the $\tcm$-configuration encoded. If, for instance, she tries to change the current state $q$, she might move one process from $\loc_0$ to $\loclet{q'}$, but then the $\game$-configuration is not valid anymore. We need to move the process in $\loclet{q}$ into 
$\loclet{q^2b^2}$ and this requires the cooperation of Environment. 

Assume that the game is in configuration $C$ encoding $\gamma=(q,\nu_1,\nu_2)$. System will pick a transition $t$ starting in state $q$, say, 
$t = (q,\incc{1}, q')$. From configuration $C$, System will go to the configuration $C_1$ defined by $C_1(\loclet{t})=1$, $C_1(\loclet{a_1})=1$, and 
$C_1(\loc)=C(\loc)$ for all other $\loc\in\Loc$.

If the transition $t$ is correctly chosen, Environment will go to a configuration $C_2$
defined by $C_2(\loclet{q})=0$, $C_2(\loclet{qb})=1$, $C_2(\loclet{t})=0$, $C_2(\loclet{tb})=1$, $C_2(\loclet{a_1})=0$, $C_2(\loclet{a_1b})=1$ and, for all other  $\loc\in\Loc$, $C_2(\loc)=C_1(\loc)$.
This means that Environment moves processes in locations $\loclet{t}$, $\loclet{q}$, $\loclet{a_1}$ to locations $\loclet{tb}$, 
$\loclet{qb}$, $\loclet{a_1b}$, respectively.

To finish the transition, System will now move a process to the destination state $q'$ of $t$, and go to configuration $C_3$
defined by $C_3(\loclet{q'})=1$, $C_3(\loclet{tb})=0$, $C_3(\loclet{t^2b})=1$, $C_3(\loclet{qb})=0$, $C_3(\loclet{q^2b})=1$, $C_3(\loclet{a_1b})=0$,
$C_3(\loclet{a_1^2b})=1$, and $C_3(\loc)=C_2(\loc)$ for all other $\loc\in\Loc$.

Finally, Environment moves to configuration $C_4$ given by
$C_4(\loclet{t^2b})=0$, $C_4(\loclet{t^2b^2})=C_3(\loclet{t^2b^2})+1$, $C_4(\loclet{q^2b})=0$, $C_4(\loclet{q^2b^2})=C_3(\loclet{q^2b^2})+1$, $C_4(\loclet{a_1^2b})=0$, $C_4(\loclet{a_1^2b^2})=C_3(\loclet{a_1^2b^2})+1$, and 
$C_4(\loc)=C_3(\loc)$ for all other $\loc\in\Loc$. Observe that $C_4\in \confset((q', \nu_1+1, \nu_2))$.

Other types of transitions will be simulated similarly.
To force System to start the simulation in $\gamma_0$, and not in any $\tcm$-configuration, the configurations $C$ such that 
$C(\loclet{q_0^2b^2})=0$ and $C(\loclet{\tcmq})=1$ for $\tcmq\neq\tcminit$ are not valid, and will be losing for System.

\medskip
\myparagraph{Acceptance condition}
It remains to define $\Acc$ in a way that enforces the above sequence of $\game$-configurations.
Let $\Locok = \{\loc_0\} \cup \{\loclet{a_i^2 b^2}, \loclet{a_i^4 b^4} \mid i \in \{1,2\}\} \cup \{\loclet{\tcmq^2 b^2} \mid \tcmq \in \tcmQ\} \cup \{\loclet{\tcmt^2 b^2} \mid \tcmt \in \tcmT\}$ be the set of
elements in $\Loc$ whose values do not affect the acceptance of the configuration. By $[\loc_1\bowtie_1 n_1, \dots, \loc_k\bowtie_k n_k]$, we denote $\accfunction\in\locAcc^L$ such that
$\accfunction(\loc_i){\,=\,}{({\bowtie_i}n_i)}$ for $i \in \{1,\ldots,k\}$
and $\accfunction(\loc){\,=\,} ({=}0)$
for all $\loc\in\Loc\setminus\{\loc_1,\dots, \loc_k\}$.
Moreover, for a set of locations $\hat\Loc \subseteq \Loc$, we let $\wlocsetgezero{\hat\Loc}$ stand for ``$(\loc\geq 0)$ for all $\loc\in\hat\Loc$''.

\begin{table}
\begin{center}
\caption{\mbox{Acceptance conditions for the game simulating a 2CM}}
\label{tab:game00N}
\makebox[\textwidth]{%
\scalebox{1}{
\begin{tabular}{ l }
\toprule
\textbf{Requirements for System}\\
\midrule
\textbf{(a)}
For all $\tcmt = (q,\op,q') \in \tcmQ$:\\[-2ex]
\parbox{\textwidth}{
\begin{alignat*}{8}
\accCond{(\tcmq, \tcmt)}&=\textstyle\bigcup_{\hat\tcmq\in \tcmQ}\bigl\{[\loclet{q}=1,\;&&\loclet{t}=1, \;&&\loclet{a_i}=1, \;&&\loclet{{\hat\tcmq}^2b^2}\geq 1,
\;&&\locsetgezero{\Locok\setminus \{\loclet{{\hat\tcmq}^2b^2}\}}]\bigr\} && \text{ if } \op = \incc{i}\\
\accCond{(\tcmq, \tcmt)}&=\textstyle\bigcup_{\hat\tcmq\in \tcmQ}\bigl\{[\loclet{q}=1,\;&&\loclet{t}=1, \;&&\loclet{a_i^3b^2}=1, \;&&\loclet{{\hat\tcmq}^2b^2}\geq 1,
\;&&\locsetgezero{\Locok\setminus \{\loclet{{\hat\tcmq}^2b^2}\}} ]\bigr\} && \text{ if } \op = \decc{i} \\
\accCond{(\tcmq, \tcmt)}&=\textstyle\bigcup_{\hat\tcmq\in \tcmQ}\bigl\{[\loclet{q}=1,\;&&\loclet{t}=1, \;&&\loclet{a_i^2b^2}=0, \;&&\loclet{{\hat\tcmq}^2b^2}\geq 1,
\;&&\locsetgezero{\Locok\setminus \{\loclet{{\hat\tcmq}^2b^2}, \loclet{a_i^2b^2}\}} ]\bigr\}~ && \text{ if } \op = \zeroc{i}
\end{alignat*}}\\[-1ex]
\midrule
\textbf{(b)} 
For all $\tcmt = (\tcminit,\op,q') \in \tcmQ$ such that $\op \in \{\incc{i},\zeroc{i}\}$: \\[-2ex]
\parbox{\linewidth}{
\begin{alignat*}{8}
\accCond{\tcmt}&= \bigl\{[\loclet{\tcminit}=1, \;&&\loclet{\tcmt}=1, \;&&\loclet{a_i} = 1, \;&&\loc_0 \ge 0]\bigr\}~ && \text{ if } \op = \incc{i}\\
\accCond{\tcmt}&=\bigl\{[\loclet{\tcminit}=1, \;&&\loclet{\tcmt} = 1, \;&&\loc_0 \ge 0]\bigr\}~ && &&\text{ if } \op = \zeroc{i}
\end{alignat*}}\\[-1ex]
\midrule
\textbf{(c)}
For all $\tcmt = (q,\op,q') \in \tcmQ$:\\[-2ex]
\parbox{\textwidth}{
\begin{alignat*}{8}
\accCond{(\tcmq, \tcmt, \tcmq')}&=\bigl\{[\loclet{q^2b}=1, \;&&\loclet{t^2b}=1, \;&&\loclet{a_i^2b}=1, \;&&\loclet{\tcmq'}=1, \;&&\wlocsetgezero{\Locok}]\bigr\} ~&& \text{ if } \op = \incc{i}\\
\accCond{(\tcmq, \tcmt, \tcmq')}&=\bigl\{[\loclet{q^2b}=1,\;&&\loclet{t^2b}=1, \;&&\loclet{a_i^4b^3}=1, \;&&\loclet{\tcmq'}= 1, \;&&\wlocsetgezero{\Locok}]\bigr\} && \text{ if } \op = \decc{i}\\
\accCond{(\tcmq, \tcmt,\tcmq')}&=\bigl\{[\loclet{q^2b}=1,\;&&\loclet{t^2b}=1, \;&&\wlocsetgezero{\Locok} ]\bigr\} && && && \text{ if } \op = \zeroc{i}
\end{alignat*}}\\[-1ex]
\midrule
\midrule
\textbf{Requirements for Environment}\\
\midrule
\textbf{(d)}
Let $\Loc_{\sys < \env} = \bigl\{\loc \in \Loc \mid \bigl(\sum_{\alpha \in \Alpha_\sys} \loc(\alpha)\bigr) < \loc(b)\bigr\}$. For all $\loc\in\Loc_{\sys < \env}$:~
$
\accCond{\loc}=[\loc\geq 1, \locsetgezero{\Loc \setminus \{\loc\}}]
$
\\
\midrule
\textbf{(e)}
For all $\tcmt = (q,\op,q') \in \tcmQ$:\\[1ex]
$\accConde{(\tcmq, \tcmt)}=
\left\{
\parbox{10em}{
\vspace{-2.8ex}
\begin{alignat*}{8}
&[\loclet{\tcmq b}=1, \;&&\loclet{\tcmt}=1, \;&&\loclet{a_i}=1, \;&&\wlocsetgezero{\Locok}],~
&&[\loclet{\tcmq}=1, \;&&\loclet{\tcmt b}=1, \;&&\loclet{a_i}=1, \;&&\wlocsetgezero{\Locok}],\\
&[\loclet{\tcmq}=1, \;&&\loclet{\tcmt}=1, \;&&\loclet{a_ib}=1, \;&&\wlocsetgezero{\Locok}],~
&&[\loclet{\tcmq b}=1, \;&&\loclet{\tcmt b}=1,\, \;&&\loclet{a_i}=1, \;&&\wlocsetgezero{\Locok}],\\
&[\loclet{\tcmq b}=1, \;&&\loclet{\tcmt}=1, \;&&\loclet{a_ib}=1, \;&&\wlocsetgezero{\Locok}],~
&&[\loclet{\tcmq}=1, \;&&\loclet{\tcmt b}=1, \;&&\loclet{a_ib}=1, \;&&\wlocsetgezero{\Locok}]
\end{alignat*}
\vspace{-3.5ex}
}
\right\} \hspace{0.95em}\text{ if } \op = \incc{i}$
\\[5.5ex]
$\accConde{(\tcmq, \tcmt)}=
\left\{
\parbox{10em}{
\vspace{-2.8ex}
\begin{alignat*}{8}
&[\loclet{qb}=1, \;&&\loclet{t}=1, \;&&\loclet{a_i^3b^2}=1, \;&&\wlocsetgezero{\Locok} ],~
&&[\loclet{q}=1, \;&&\loclet{tb}=1, \;&&\loclet{a_i^3b^2}=1, \;&&\wlocsetgezero{\Locok} ],\\
&[\loclet{q}=1, \;&&\loclet{t}=1, \;&&\loclet{a_i^3b^3}=1, \;&&\wlocsetgezero{\Locok} ],~
&&[\loclet{qb}=1, \;&&\loclet{tb}=1, \;&&\loclet{a_i^3b^2}=1, \;&&\wlocsetgezero{\Locok} ],\\
&[\loclet{qb}=1, \;&&\loclet{t}=1, \;&&\loclet{a_i^3b^3}=1, \;&&\wlocsetgezero{\Locok} ],~
&&[\loclet{q}=1, \;&&\loclet{tb}=1, \;&&\loclet{a_i^3b^3}=1, \;&&\wlocsetgezero{\Locok} ]
\end{alignat*}
\vspace{-3.5ex}
}
\right\} \text{ if } \op = \decc{i}$
\\[5.8ex]
$\accConde{(\tcmq, \tcmt)}=
\bigl\{[\loclet{qb}=1,\; \loclet{t}=1,\; \wlocsetgezero{\Locok} ],~ [\loclet{q}=1,\;\loclet{tb}=1,\; \wlocsetgezero{\Locok} ]\bigl\} \hspace{11.5em}\text{ if } \op = \zeroc{i}$\\[1ex]
\midrule
\textbf{(f)}
For all $\tcmt = (q,\op,q') \in \tcmQ$:\\[1ex]
$\accConde{(\tcmq, \tcmt, \tcmq')}=
\left\{
\parbox{10em}{
\vspace{-2.8ex}
\begin{alignat*}{8}
&[\loclet{\tcmq'}=1, \;&&\loclet{\tcmq^2b}=1, \;&&\loclet{\tcmt^2b}\geq 0, \;&&\loclet{a_i^2b}\geq 0, &&\wlocsetgezero{\Locok}] ,\\
&[\loclet{\tcmq'}=1,  \;&&\loclet{\tcmq^2b}\geq 0, \;&&\loclet{\tcmt^2b}=1, \;&&\loclet{a_i^2b}\geq 0,  \;&&\wlocsetgezero{\Locok}] ,\\
&[\loclet{\tcmq'}=1, \;&&\loclet{\tcmq^2b}\geq 0, \;&&\loclet{\tcmt^2b}\geq 0, \;&&\loclet{a_i^2b}=1, \;&&\wlocsetgezero{\Locok}] ,\\
&[\loclet{\tcmq'b}=1, \;&&\loclet{\tcmq^2b}\geq 0, \;&&\loclet{\tcmt^2b}\geq 0, \;&&\loclet{a_i^2b}\geq 0, \;&&\wlocsetgezero{\Locok}]
\end{alignat*}
\vspace{-3.5ex}
}
\right\} \hspace{0.5em}\text{ if } \op = \incc{i}$
\\[8ex]
$\accConde{(\tcmq, \tcmt, \tcmq')}=
\left\{
\parbox{10em}{
\vspace{-2.8ex}
\begin{alignat*}{8}
&[\loclet{\tcmq'}=1,\;&&\loclet{\tcmq^2b}=1,  \;&&\loclet{\tcmt^2b}\geq 0, \;&&\loclet{a_i^4b^3}\geq 0, \;&&\wlocsetgezero{\Locok}] ,\\
&[\loclet{\tcmq'}=1, \;&&\loclet{\tcmq^2b}\geq 0, \;&&\loclet{\tcmt^2b}=1, \;&&\loclet{a_i^4b^3}\geq 0,  \;&&\wlocsetgezero{\Locok}] ,\\
&[\loclet{\tcmq'}=1, \;&&\loclet{\tcmq^2b}\geq 0, \;&&\loclet{\tcmt^2b}\geq 0,\;&&\loclet{a_i^4b^3}=1,\;&&\wlocsetgezero{\Locok}] ,\\
&[\loclet{\tcmq'b}=1, \;&&\loclet{\tcmq^2b}\geq 0, \;&&\loclet{\tcmt^2b}\geq 0, \;&&\loclet{a_i^4b^3}\geq 0, \;&&\wlocsetgezero{\Locok}] 
\end{alignat*}\vspace{-3.5ex}
}
\right\} \text{ if } \op = \decc{i}$
\\[8ex]
$\accConde{(\tcmq, \tcmt, \tcmq')}=
\left\{
\parbox{10em}{
\vspace{-2.8ex}
\begin{alignat*}{8}
&[\loclet{\tcmq'}=1, \;&&\loclet{\tcmq^2b}=1, \;&&\loclet{\tcmt^2b}\geq 0, \;&&\wlocsetgezero{\Locok}] ,\\
&[\loclet{\tcmq'}=1, \;&&\loclet{\tcmq^2b}\geq 0, \;&&\loclet{\tcmt^2b}=1, \;&&\wlocsetgezero{\Locok}] ,\\
&[\loclet{\tcmq'b}=1, \;&&\loclet{\tcmq^2b}\geq 0, \;&&\loclet{\tcmt^2b}\geq 0, \;&&\loclet{a_i^4b^3}\geq 0, \;&&\wlocsetgezero{\Locok}]
\end{alignat*}
\vspace{-3.5ex}
}
\right\} \hspace{0.25em}\text{ if } \op = \zeroc{i}$
\\[5ex]
\bottomrule
\end{tabular}
}
}
\end{center}
\end{table}

First, we force Environment to play only in response to System by making System win as soon as there is a process where Environment has played more letters than System (see Condition (d) in Table~\ref{tab:game00N}).

If $\gamma$ is not halting, the configurations in $\confset(\gamma)$
will not be winning for System. Hence,
System will have to move to win (Condition (a)). 

The first transition chosen by System must start from the initial state of $\tcm$. This is enforced by
Condition (b).

Once System has moved, Environment will move other processes to leave accepting configurations. The only possible move for her is to add $b$ on a process in locations $\loclet{q}$, 
$\loclet{t}$, and $\loclet{a_i}$, if $t$ is a transition incrementing counter $\tcmcounter_i$ (respectively $\loclet{a_i^3b^2}$ if $t$ is a transition decrementing counter $\tcmcounter_i$). All other $\game$-configurations
accessible by Environment from already defined accepting configurations are winning for System, as established in Condition (e).

System can now encode the successor configuration of $\tcm$, according to the chosen transition, by moving a process to the destination state of the transition (see Condition (c)).

Finally, Environment makes the necessary transitions for the configuration to be a valid $\game$-configuration. If she deviates, System wins (see Condition (f)).

If Environment reaches a configuration in $\confset(\gamma)$ for $\gamma\in F$, System can win by moving the process in $\loclet{\tcmfinal}$ to
$\loclet{\tcmfinal^2}$. From there, all the configurations reachable by Environment are also winning for System:
\[\accCond{F}=\bigl\{[\loclet{\tcmfinal^2}= 1, \wlocsetgezero{\Locok}]\;,\; [\loclet{\tcmfinal^2b}= 1, \wlocsetgezero{\Locok}] \;,\;
[\loclet{\tcmfinal^2b^2}= 1, \wlocsetgezero{\Locok}]\bigr\}\,.\]

Finally, the acceptance condition is given by
\begin{displaymath}\Acc=\bigcup_{\loc\in\Loc_{\sys < \env}}\accCond{\loc}\mathrel{\cup}\!\!\!\bigcup_{\tcmt=(\tcminit,\op,q')\in\tcmT}\!\!\!\!\!\!\!\accCond{\tcmt}~\mathrel{\cup}\!\!\!\!\!\!\bigcup_{\tcmt=(q,\op,q')\in\tcmT} \!\!\!\!\!(\accCond{(\tcmq,\tcmt)}\mathrel{\cup}
\accConde{(\tcmq,\tcmt)}\mathrel{\cup} \accCond{(\tcmq,\tcmt,\tcmq')}\cup
\accConde{(\tcmq,\tcmt,\tcmq')})\cup\accCond{F}\,.
\end{displaymath}

Note that a correct play can end in three different ways: either there is a process in  $\loclet{\tcmfinal}$ and System moves it to $\loclet{\tcmfinal^2}$, 
or System has no transition to pick, or there are not enough processes in $\loc_0$ for System to simulate a new transition. Only the first kind is winning for System.

We can show that there is an accepting run in $\tcm$ iff there is some $k$ such that System has a winning $\conf_{(0,0,k)}$-strategy for $\game$
(cf.\ Appendix~\ref{app:undec-00N} for details).
\qed
\end{proof}

\section{Conclusion}\label{sec:conclusion}

There are several questions that we left open and that are interesting in their own right  due to their fundamental character. Moreover, in the decidable cases, it will be worthwhile to provide tight bounds on cutoffs and the algorithmic complexity of the decision problem. A first step would be a direct transformation into a normal form without the detour through the (potentially non-elementary and triply exponential) normal-form constructions due to Schwentick \& Barthelmann and Hanf, respectively.
Like in \cite{BrutschT16,FigueiraP18,khalimov_et_al:concur:2019,exibard_et_al:concur:2019,KhalimovMB18}, our strategies allow the system to have a global view of the whole program run executed so far. However, it is also perfectly natural to consider uniform local strategies where each process only sees its own actions and possibly those that are revealed according to some causal dependencies. This is, e.g., the setting considered in \cite{FinkbeinerO17,beutner_et_al:concur:2019} for a fixed number of processes and in \cite{JacobsB14} for parameterized systems over ring architectures. Another interesting direction would be to look at guarded fragments of FO logic \cite{AndrekaNB98} or extensions of $\wFOdata$ with counting beyond thresholds that also come with normal-form constructions \cite{KuskeS17,kuske_et_al:LIPIcs:2018:9137}.
Finally, we would like to study a parameterized version of the control problem \cite{Muscholl15} where, in addition to a specification, a program in terms of an arena is already given but has to be controlled in a way such that the specification is satisfied.


\bibliographystyle{abbrv}
\bibliography{lit}


\clearpage

\appendix

\section*{Appendix}

\section{Proof of Theorem~\ref{thm:undecFOtwo} ({\normalfont{$\Synthesis{\wFOtwo}{\Zero}{\Zero}{\N}$}})}\label{app:undecFOtwo}
We adapt the proof from \cite{figueira2018playing,FigueiraP18} reducing the halting problem for 2CM.
We call a 2CM $\tcm=(\tcmQ,\tcmT,\tcmcounter_1,\tcmcounter_2,\tcminit,\tcmfinal)$ \emph{deterministic} if, from a given configuration $(\tcmq,\nu_1,\nu_2)$, at most one transition $t$ is fireable. 
It is known that the halting problem is undecidable even for deterministic 2CM
\cite{Minsky67}.

Given a deterministic 2CM $\tcm$, we write a specification formula $\varphi$ such that a satisfying execution encodes a correct run $\tcm$ that ends in a halting configuration. 
Environment has to choose the sequence of transitions of $\tcm$, while System's job is to ensure that Environment does not make an illegal transition. 
The valuation of counter $\tcmcounter_1$ at any point in the run is encoded by the number of different processes  on which only System has executed actions (and not Environment). Conversely, the valuation of $\tcmcounter_2$ is the number of different processes on which only Environment has executed actions (and not System). Both players will cooperate to ensure the valuation is correct with respect to the sequence of transitions taken. 
If a transition increments $\tcmcounter_1$, then first Environment plays on a process on which both System
and herself have already executed an action, then System executes an action on a fresh process so that there is one more process unique to System while the value of $\tcmcounter_2$ stays unchanged. If a transition decrements $\tcmcounter_1$, then Environment executes an action on a process that was unique to System,
 and System replies on the
same process, making one process not unique for System anymore thus decrementing $\tcmcounter_1$ (while $\tcmcounter_2$ is unchanged). 
If a transition tests that $\tcmcounter_1$ is zero, Environment executes an action on a process already shared by both System and Environment,
 and System either plays on the same process if the transition is legal with this valuation (that is $\tcmcounter_1$ is actually zero), otherwise she plays on
 a process that was unique to herself so far (proving that $\tcmcounter_1$ was not zero) and instantly wins the game.
Transitions involving the other counter are encoded in a similar fashion. The formula ensures that Environment starts the execution, 
and that System and Environment
alternate their actions until a halting configuration is reached. The actions of Environment are the different transitions taken along the simulated run of 
$\tcm$. 
The objective of System is that a halting configuration is reached, while the objective of Environment is that the run ends before reaching
such a configuration (because there are no more fresh processes anymore), or that the run continues forever without reaching a halting configuration. One can show that there exists a halting run of $\tcm$
if and only if there is some $k\in\N$ such that there exists a winning $\conf_{(0,0,k)}$-strategy for System for $\varphi$.
The alphabet is partitioned in $\Alpha_\env = \{s\} \cup \{t~|~t$ is a transition of the 2CM$\}$ and $\Alpha_\sys = \{a\}$. 
Let us fix $x$ and $y$ the two variables used. The formula will check that at every position of the run, two consecutive transitions played by the environment
can actually be taken consecutively, with respect to their starting and ending states. Moreover, one has to make sure that the first transition played by Environment can be
taken from the initial state. The first two positions will be dummies actions, to ensure that Environment and System share at least one process, 
while the simulation of the actual 2CM execution will start from the third position.
We use shorthands $\first(x)$, $\second(x)$, and $\third(x)$ for the $\fod(\succ)$-definable formulas that say that $x$ is the first, second, and third position in the $\Procs$-execution respectively.

We define the following useful formula:
\[\old_\theta(x) \equiv \exists y. (y < x \land y \dataeq x \land \bigvee_{b \in \Alpha_\theta} b(x))\] 
for $\theta \in \{\sys,\env\}$ that says that the value at position $x$ was already seen at an anterior position of player $\theta$. 
The specification will force Environment to play first, and then System and Environment to play in turn until
eventually a halting configuration is reached. We give first the set of constraints $\Phi^\env$ that Environment must satisfy, then the set of constraints related to 
System. 

The following formulas make up the set $\Phi^\env$:
\begin{itemize}
\item Environment does not play twice in a row:
\[\forall x.\forall y.(\bigvee_{b\in\Alpha_\env} b(x) \wedge +1(x,y))\implies \neg \bigvee_{b\in\Alpha_\env}b(y)\]
\item Environment always executes an action when it is its turn, unless the halting configuration is reached. We let $\tcmT_h$ be the set of transitions in $\tcm$ whose
ending state is $\tcmfinal$:
\[\forall x. \bigvee_{b\in\Alpha_\env\setminus \tcmT_h} b(x) 
\implies \exists y. (x<y \wedge \bigvee_{b\in\Alpha_\env} b(y))\]
\item Environment starts with an $s$:
\[\exists x.\first(x)\wedge s(x)\]
\item There is an $s$ only in the first position:
\[\forall x. s(x)\implies  \first(x) \]
\item The first transition is an initial transition (i.e starts from an initial state):
\[\forall x. \third(x) \Rightarrow \bigvee_{t\text{ is initial}} t(x)\]
\item Consecutive transitions are compatible (i.e the ending state of the $n$-th one is the starting state of the $n+1$-th):
\[\forall x. \bigwedge_{t} \left(t(x) \Rightarrow  \forall y. (\succ(x,y) \Rightarrow \forall x. (\succ(y,x) \Rightarrow \bigvee_{t' \text{ compatible with } t}t'(x)))\right)\]
\item If $t$ increments $c_1$, Environment plays a on a process already shared by System and Environment:
\[\forall x. \bigwedge_{t \text{ increments $c_1$}} \left(t(x) \Rightarrow \old_\env(x) \land \old_\sys(x)\right))\]
\item If $t$ decrements $c_1$, Environment plays on a process that was unique to System:
\[\forall x. \bigwedge_{t \text{ decrements $c_1$}} \left(t(x) \Rightarrow \neg \old_\env(x) \land \old_\sys(x)\right)\]
\item If $t$ increments $c_2$, Environment must play on a fresh process:
\[\forall x. \bigwedge_{t \text{ increments $c_2$}} \left(t(x) \Rightarrow \neg \old_\env(x) \land \neg \old_\sys(x)\right)\]
\item If $t$ decrements $c_2$, Environment must play on a process that was unique to herself:
\[\forall x. \bigwedge_{t \text{ decrements $c_2$}} \left(t(x) \Rightarrow \old_\env(x) \land \neg \old_\sys(x)\right)\]

\item If $t$ checks that $c_1$ is zero, Environment plays a shared value and System does not reply with a value unique to herself:
\[\forall x. \bigwedge_{t \text{ zero-tests $c_1$}} (t(x) \Rightarrow \old_\sys(x) \land \old_\env(x) \land \forall y. (\succ(x,y) \Rightarrow \neg \old_\sys(y) \lor \old_\env(y)))\]
\item If $t$ checks that $c_2$ is zero, Environment plays a shared value and System does not reply with a value unique to Environment:
\[\forall x. \bigwedge_{t \text{ zero-tests $c_2$}} (t(x) \Rightarrow \old_\sys(x) \land \old_\env(x) \land \forall y. (\succ(x,y) \Rightarrow \old_\sys(y) \lor \neg \old_\env(y)))\]
\end{itemize}
And now we construct the set $\Phi^\sys$ of environment constraints:
\begin{itemize}
\item System does not play twice in a row:
\[\forall x.\forall y.(a(x) \wedge +1(x,y))\implies \neg a(y)\]
\item The first move must be played on the same process than Environment:
\[\forall x. \second(x) \Rightarrow \exists y. (\succ(y,x) \land x \dataeq y)\]
\item If $t$ increments $c_1$, System must reply on a fresh process:
\[\forall x. \bigwedge_{t \text{ increments $c_1$}} ((t(x) \Rightarrow \exists y. (\succ(x,y) \land \neg \old_\sys(y) \land \neg \old_\env(y)))\]
\item If $t$ increments $c_2$, System replies on an already shared process:
\[\forall x. \bigwedge_{t \text{ increments $c_2$}} (t(x) \Rightarrow \exists y. (\succ(x,y) \land \neg ( x\dataeq y)\land \old_\sys(y) \land \old_\env(y)))\]
\item If $t$ decrements either counter, System replies on the same process:
\[\forall x. \bigwedge_{t \text{ decrements $c_1$ or $c_2$}} (t(x) \Rightarrow \exists y. (\succ(x,y) \land y \dataeq x))\]
\item If $t$ zero-tests either counter, System replies on the same process 
\[\forall x. \bigwedge_{t \text{ zero-tests $\tcmcounter_1$ or $\tcmcounter_2$}} (t(x) \Rightarrow \exists y. (\succ(x,y) \land (y \dataeq x )))\]

\end{itemize}

Put together, this gives the formula $\varphi = \Phi^\env\implies (\Phi^\sys\wedge\exists x. \bigvee_{t \in \tcmT_h} t(x)) $.
Consider the strategy $\strat$ for System such that $\strat(s,i)=(a,i)$ and for each prefix $w\cdot(\tcmt,i)$, $\strat(w\cdot(\tcmt,i))=(a,j)$ with $j$ a process such that, for all event $(x,k)$ in $w$, $k\neq j$ if 
$\tcmt$ is a transition incrementing $\tcmcounter_1$, $j=1$ if $\tcmt$ increments $\tcmcounter_2$, $j$ a process such that $w=w'\cdot(a,j)\cdot w''$ and for all event
$(x,k)$ appearing in $w'$, $w''$, if $x\in\Alpha_\env$, $k\neq j$ if $\tcmt$ zero-tests $\tcmcounter_1$ and if such a $j$ exists, $j$ a process such that $w=w'\cdot(\tcmt',j)\cdot w''$ and for all event
$(a,k)$ appearing in $w'$, $w''$, $k\neq j$ if $\tcmt$ zero-tests $\tcmcounter_1$ and if such a $j$ exists, $j=i$ otherwise.
For all other prefixes, $\strat$ returns $\varepsilon$.

Consider an execution $\strat$-compatible. Then if $\Phi^\env$ is violated, it means that either Environment has not respected the encoding, 
or that it has played $\tcmt$ that zero-tests $\tcmcounter_1$ or $\tcmcounter_2$. In that case, by $\strat$, System replies with a value unique to herself, or unique 
to Environment respectively, proving that Environment has chosen a non-valid transition for the current configuration. If $\Phi^\env$ is satisfied, then by
$\strat$,  either $\Phi^\sys$ is satisfied, or at some point System is unable to provide a fresh process when $\tcmt$ increments $\tcmcounter_1$. 

Suppose that, for some $k\in\N$, $\strat$ is $(0,0,k)$-winning.  Then, for any execution $w$ that is $\strat$-compatible, if $\Phi^\env$ is satisfied,
then $\Phi^\sys$ is satisfied, meaning that the sequence of transitions chosen by Environment correctly encodes the run of $\tcm$. Moreover, at some
point a final transition is reached, encoding the halting run of $\tcm$.

Conversely, suppose that $\tcm$ is halting. Then, the run being finite, we can compute $k_\sys$ and $k_\env$ the number of increments of respectively
$\tcmcounter_1$ and $\tcmcounter_2$ during the run.
Consider $\strat$ as a $\conf_{(0,0,k_\env+k_\sys)}$-strategy, and take $w$ an $\strat$-compatible execution that satisfies $\Phi^\env$. Necessarily, 
the sequence of transitions selected in $w$ is the unique possible sequence of transitions in an execution of $\tcm$, and the number of processes
is enough to provide as many fresh processes as needed. Hence the run can be encoded entirely and a final transition is reached. Thus, $w$ satisfies the
specification formula, and $\strat$ is $(0,0,k_\env+k_\sys)$-winning.
Therefore the synthesis problem for $\fod[\dataeq,\succ,<]$ is undecidable.

\section{Proof of Theorem~\ref{thm:normalform} (normal form for $\wFOdata$)}\label{app:normalform}

Let $\Phi$ be an $\FOdata$ formula. Using the Schwentick-Barthelmann normal
form \cite{schwentick1998local}, we know that $\Phi$ is equivalent to a formula of the form
\[\Phi_1 = \exists x_1 \dots \exists x_n \forall y. \varphi(x_1, \dots, x_n,y)\]
where, in $\phi(x_1,\ldots,x_n,y)$, quantification is always of the form
$\exists z. (z \sim y \wedge \ldots)$ or
$\forall z. (z \sim y \Longrightarrow \ldots)$.
Since $\phi$ essentially talks about the class of $y$,
we call it a \emph{class formula} (wrt.\ $y$).
Let $\X = \{x_1,\ldots,x_n\}$.
Wlog., we assume that none of the variables in
$\X \cup \{y\}$ is quantified in $\phi$.

\paragraph{\textup{\textbf{Class Abstraction.}}}

Note that, due to the variables in $\X$, the formula $\varphi$
may reason about elements that are \emph{outside} the class of $y$.
Our aim is to get rid of these variables so as to end up with
formulas that talk about classes only.
As the variables in $\X$ are quantified existentially, we can
basically guess the relation between them.
This is done in terms of a \emph{class abstraction},
which  is given by a triple
$\Class = (\Part,\approx,\lambda)$ where
$\Part \subseteq 2^\X$ is a partition of $\X$ (if $n=0$, then $\Part = \emptyset$),
$\lambda: \X \to \Alpha \uplus \Types$
(recall that $\Types = \{\sys,\env,\both\}$), and
${\approx}$ is an equivalence relation over $\X$ such that,
for all $x_i,x_j \in \X$
\begin{itemize}\itemsep=1ex
\item if $x_i \approx x_j$, then $x_i \sim_\Class x_j$ and $\lambda(x_i) = \lambda(x_j)$, and

\item if $x_i \sim_\Class x_j$ and $x_i \not\approx x_j$, then $\{\lambda(x_i),\lambda(x_j)\} \cap \Alpha \neq \emptyset$,
\end{itemize}
where we write $x_i \sim_\Class x_j$ if $\{x_i,x_j\} \subseteq X$ for some $X \in \Part$.
Let $\Abstractions{\X}$ be the set of all class abstractions.

\begin{example}
Figure~\ref{fig:classabstr} depicts a class abstraction
$\Class = (\Part,\approx,\lambda)$ for $\Alpha = \{a,b,c,d\}$.
The red areas represent the partition $P$, the blue ones
represent the equivalence classes of $\approx$, which refine
$P$. Moreover, we have $\lambda(x_4) = \lambda(x_7) = \sys$,
$\lambda(x_3) = \lambda(x_5) = b$, $\lambda(x_9) = c$, etc.
The meaning of $\Class$ is that
$x_1, x_4, x_7, x_8$ are equivalent wrt.\ $\sim$, i.e., they
belong to the same process. In particular, formulas
such as $x_1 \sim x_4$ and $x_3 \sim x_{10}$ are true under this
assumption. Moreover, $x_4 \approx x_7$ means that $x_4$ and $x_7$ denote
\emph{identical} elements. That is, the formula $x_4 = x_7$
would be true, whereas $x_2 = x_3$ does not hold.
\exend
\end{example}

\begin{figure}[t]
\centering
\includegraphics[width=0.6\textwidth]{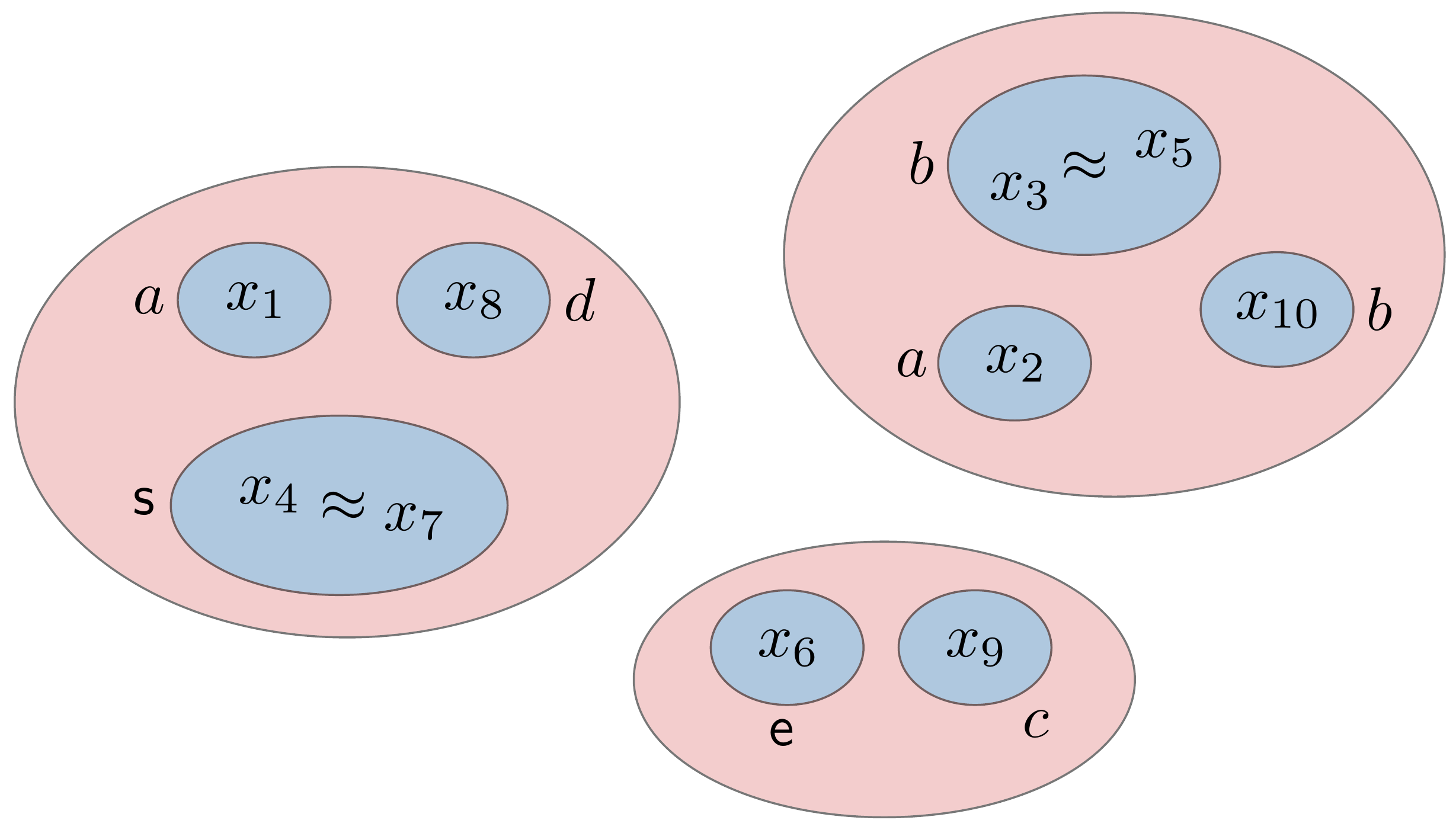}
\caption{A class abstraction $\Class = (\Part,\approx,\lambda)$\label{fig:classabstr}}
\end{figure}

\newcommand{\cmodels}[2]{\mathsf{sat}(#1,#2)}

Given $\Class = (\Part,\approx,\lambda)$ and $X \subseteq \X$, we can define the formula $\cmodels{X}{\Class}$,
which checks whether the class abstraction $\Class$ is consistent with a given
execution as far as variables from $\class$ are concerned:
\[\cmodels{X}{\Class} ~~=~~
\left(
\begin{array}{rl}
& \bigwedge_{x_i \in X} (\lambda(x_i))(x_i)\\[1ex]
\wedge &
\displaystyle\bigwedge_{(x_i,x_j) \in {\approx} \cap X^2} x_i = x_j \wedge
\bigwedge_{(x_i,x_j) \in X^2 \setminus{\approx}} x_i \neq x_j\\[3ex]
\wedge &
\displaystyle\bigwedge_{(x_i,x_j) \in {\sim_\Class} \cap X^2} x_i \sim x_j \wedge
\bigwedge_{(x_i,x_j) \in X^2 \setminus{\sim_\Class}} x_i \not\sim x_j
\end{array}
\right)
\]

Moreover, by fixing $\Class = (\Part,\approx,\lambda)$ and $X \in P \cup \{\emptyset\}$,
we can transform $\phi$ into a class formula (wrt.\ $y$)
\[\ftrans{\phi}{\Class}{\class}((x_i)_{x_i \in X},y)\]
without variables from $\X \setminus \class$
that ``evaluates'' $\phi$ based on the assumption
that $\cmodels{\X}{\Class} \wedge (y \sim \class)$ holds
(in particular, $\cmodels{\X}{\Class} \wedge \neg (y \sim \X)$ if $\class = \emptyset$)
where $y \sim \class$ is a shorthand for $\bigvee_{x_i \in X} y \sim x_i$.
We obtain it from $\phi$ inductively as follows (let $\theta \in \Alpha \cup \Types$):
\begin{align*}
\ftrans{\theta(z)}{\Class}{\class} &=
\begin{cases}
\theta(z) & \textup{if } z \in \PosVar \setminus \X\\ 
\ttrue & \textup{if } z \in \X \textup{ and } \lambda(z) = \theta\\
\ffalse & \textup{if } z \in \X \textup{ and } \lambda(z) \neq \theta
\end{cases}\\
\ftrans{z = z'}{\Class}{\class} &=
\begin{cases}
z = z' & \textup{if } \{z,z'\} \subseteq \PosVar \setminus (\X \setminus X)\\
\ttrue & \textup{if } \{z,z'\} \subseteq \X \setminus X \textup{ and } z \approx z'\\
\ffalse & \textup{otherwise}
\end{cases}\\
\ftrans{z \sim z'}{\Class}{\class} &=
\begin{cases}
z \sim z' & \textup{if } \{z,z'\} \subseteq \PosVar \setminus (\X \setminus X)\\
\ttrue & \textup{if } \{z,z'\} \subseteq \X \setminus X \textup{ and } z \sim_\Class z'\\
\ffalse & \textup{otherwise}
\end{cases}\\
\ftrans{\psi \vee \psi'}{\Class}{\class} &= \ftrans{\psi}{\Class}{\class} \vee \ftrans{\psi'}{\Class}{\class}\\
\ftrans{\neg\psi}{\Class}{\class} &= \neg\ftrans{\psi}{\Class}{\class}\\
\ftrans{\exists z. \psi}{\Class}{\class} &= \exists z. \ftrans{\psi}{\Class}{\class}
\end{align*}

\paragraph{\textup{\textbf{The Transformation.}}}

Given the above definitions, we can now rephrase $\Phi_1$ as follows.
Along with $x_1,\ldots,x_n$, we also guess a class abstraction, 
which then allows us to reason about each class separately,
without looking at the variables outside a class:
\[
\Phi_2 =
\bigvee_{\Class = (\Part,\approx,\lambda) \in \Abstractions{\X}}
\exists x_1 \dots \exists x_n.
\left(
\begin{array}{rl}
& \cmodels{\X}{\Class}
\hfill ~~(\subf_1)
\\[1ex]
\wedge &
\displaystyle\bigwedge_{\class \in \Part}
\forall y.
\Bigl(
y \sim \class
\implies
\ftrans{\phi}{\Class}{\class}((x_i)_{x_i \in X},y)
\Bigr)
\hfill ~~(\subf_2)
\\[3ex]
\wedge &
\forall y.
\Bigl(
\neg(y \sim \X)
\implies
\ftrans{\phi}{\Class}{\emptyset}(y)
\Bigr)
\hfill ~~(\subf_3)
\end{array}
\right).
\]

In fact, we can push the quantifiers $\exists x_1 \ldots \exists x_n$
further inwards by replacing them with $\exists (z_\class)_{\class \in \Part}$,
which chooses one canonical representative $z_\class$ per class~$\class$:
\[
\Phi_3 = \!\!\!
\bigvee_{\Class = (\Part,\approx,\lambda) \in \Abstractions{\X}}
\!\!\!
\exists (z_\class)_{\class \in \Part}.
\left(
\begin{array}{rl}
& \displaystyle
\bigwedge_{\class \in P} \procform(z_\class)
\wedge \bigwedge_{\substack{\class,Y \in P\\\class \neq Y}} z_\class \not\sim z_Y
\hfill ~~(\subf_4)
\\[4ex]
\wedge &
\displaystyle\bigwedge_{\class \in \Part}
\phi_{\Class,\class}(z_\class)
\hfill ~~(\subf_5)
\\[3ex]
\wedge &
\forall z_\emptyset.
\Bigl(
\Bigl(
\procform(z_\emptyset)
\wedge
\displaystyle\bigwedge_{\class \in \Part} \neg(z_\emptyset = z_\class)
\Bigr)
\implies
\phi_{\Class,\emptyset}(z_\emptyset)
\Bigr)
\hfill ~~(\subf_6)
\end{array}
\right)
\]
where
$\procform(z_\class) = \bigvee_{\theta \in \Types} \theta(z_\class)$
and
\begin{align*}
\phi_{\Class,\class}(z_\class) ~=~&
\exists (x_i)_{x_i \in \class}.
\left(
\begin{array}{rl}
& z_\class \sim \class ~\wedge~ \cmodels{X}{\Class}\\[0.5ex]
\wedge & \forall y.\bigl(y \sim z_\class ~\Longrightarrow~ \ftrans{\phi}{\Class}{\class}((x_i)_{x_i \in X},y)\bigr)
\end{array}
\right)
\\[1ex]
\phi_{\Class,\class}(z_\emptyset) ~=~&
\forall y.\bigl(y \sim z_\emptyset ~\Longrightarrow~ \ftrans{\phi}{\Class}{\emptyset}(y)\bigr)
\end{align*}
are \emph{class formulas}\footnote{in fact, they can be easily rewritten as a class formula} wrt.\ $z_\class$.

\begin{claim}
Formulas $\Phi_2$ and $\Phi_3$ are logically equivalent.
\end{claim}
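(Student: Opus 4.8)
The plan is to fix a single class abstraction $\Class = (\Part,\approx,\lambda) \in \Abstractions{\X}$ and prove the corresponding disjuncts of $\Phi_2$ and $\Phi_3$ equivalent over an arbitrary structure $\Structure_{(\Procs,w)}$; since both formulas are disjunctions over the same finite index set $\Abstractions{\X}$, this yields the claim. The one structural fact I would record at the outset is that every $\sim$-class of $\Structure_{(\Procs,w)}$ contains exactly one element of $\Procs$, which is immediate from the way $\sim$ is generated. Consequently, whenever a block $\class \in \Part$ is interpreted by elements $(x_i)_{x_i \in \class}$ that all lie in a single $\sim$-class --- which is automatic once $\cmodels{\class}{\Class}$ holds, because $\Part$ being a partition puts every pair from $\class$ into $\sim_\Class$, and $\cmodels{\class}{\Class}$ then forces $x_i \sim x_j$ for all such pairs --- there is a canonical process $\overline{\class} \in \Procs$ in that class, and this is what I will use for the representative $z_\class$.

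For the direction $\Phi_2 \Rightarrow \Phi_3$, I would take witnesses $x_1,\dots,x_n$ for the $\Class$-disjunct of $\Phi_2$ and set $z_\class \df \overline{\class}$ for each $\class \in \Part$. Then $(\subf_4)$ is immediate: the $z_\class$ are processes (so $\procform(z_\class)$ holds), and $\cmodels{\X}{\Class}$ already forces $x_i \not\sim x_j$ whenever $x_i,x_j$ lie in distinct blocks, hence $z_\class \not\sim z_Y$ for $\class \neq Y$. For $(\subf_5)$ I would reuse the very same tuple $(x_i)_{x_i \in \class}$ to witness the existential quantifier of $\phi_{\Class,\class}(z_\class)$; the conjunct $z_\class \sim \class$ holds by the choice of $z_\class$, $\cmodels{\class}{\Class}$ is a sub-conjunction of $(\subf_1)$, and since $z_\class$ lies in the common $\sim$-class of the $x_i \in \class$ the formulas $y \sim z_\class$ and $y \sim \class$ are interchangeable, so the inner universal statement of $\phi_{\Class,\class}(z_\class)$ coincides with the $\class$-conjunct of $(\subf_2)$. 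Finally $(\subf_6)$ follows from $(\subf_3)$: a process $p$ distinct from every $z_\class$ satisfies $p \not\sim x_i$ for all $i$ (again because each $\sim$-class has a unique process), so $y \sim p$ forces $\neg(y \sim \X)$ and hence $\ftrans{\phi}{\Class}{\emptyset}(y)$.

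For the converse $\Phi_3 \Rightarrow \Phi_2$, I would extract from $(\subf_5)$ a witnessing assignment $(x_i^{\class})_{x_i \in \class}$ for each $\phi_{\Class,\class}(z_\class)$ and glue these into one global assignment $x_1,\dots,x_n$ (legitimate because $\Part$ partitions $\X$). Checking $(\subf_1) = \cmodels{\X}{\Class}$ then splits in two: the conjuncts relating two variables inside one block are delivered by the corresponding $\cmodels{\class}{\Class}$; a conjunct relating $x_i \in \class$ to $x_j \in Y$ with $\class \neq Y$ is of the form $x_i \not\sim x_j$ (equivalently $x_i \neq x_j$ and $x_i \not\approx x_j$, since $\approx$ refines $\sim_\Class$), which holds because $x_i \sim z_\class$, $x_j \sim z_Y$ while $z_\class \not\sim z_Y$ by $(\subf_4)$. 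For $(\subf_2)$, each $\class$-conjunct is again the inner universal part of $\phi_{\Class,\class}(z_\class)$ read through ``$y \sim \class \Leftrightarrow y \sim z_\class$''. For $(\subf_3)$, given $y$ with $\neg(y \sim \X)$, let $p$ be the process of $y$: then $p$ differs from every $z_\class$ (the process of $x_i$ is $z_\class$ for the block $\class \ni x_i$, using $z_\class \sim \class$ and $\cmodels{\class}{\Class}$ together with $\procform(z_\class)$) and satisfies $\procform(p)$, so instantiating $(\subf_6)$ at $z_\emptyset = p$ and applying its conclusion to $y$ yields $\ftrans{\phi}{\Class}{\emptyset}(y)$.

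I expect the only genuinely delicate point to be the merging of the independently chosen per-block witnesses in the $\Phi_3 \Rightarrow \Phi_2$ direction: one must check that no conjunct of $\cmodels{\X}{\Class}$ is violated across blocks. This works because the abstraction $\Class$ has already committed every cross-block pair of variables to being $\not\sim$ (hence $\neq$ and $\not\approx$), and the clause $z_\class \not\sim z_Y$ of $(\subf_4)$ is exactly what makes those commitments hold once the $x_i$ are attached to their representatives. The symmetric bit of bookkeeping --- that the elements $\sim$-unrelated to all of $\X$ are precisely those in a process distinct from every $z_\class$ --- rests on the same single structural fact that each $\sim$-class has a unique process, and is what links $(\subf_3)$ with $(\subf_6)$.
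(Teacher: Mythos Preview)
Your proof is correct and follows essentially the same strategy as the paper's: in both directions you pass between the global witnesses $x_1,\dots,x_n$ and the per-block representatives $z_\class$ by exploiting that each $\sim$-class contains a unique process, using the same restrictions/gluing of interpretations. Your write-up is in fact slightly more explicit than the paper's about the cross-block verification of $\cmodels{\X}{\Class}$ and about why $(\subf_3)$ and $(\subf_6)$ match up, but the argument is the same.
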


\begin{proof}
Suppose $(\Procs,w) \models \Phi_2$,
say, witnessed by class abstraction $\Class$ and
interpretation $\Inter_\X = \{x_1 \mapsto e_1,\ldots,x_n \mapsto e_n\}$.
That is, $(\Procs,w),\Inter_\X \models \subf_1 \wedge \subf_2 \wedge \subf_3$.
Let $\Inter_\class$ denote the restriction of $\Inter_\X$ to $\class \in P$.
For $\class \in P$, consider the unique process $e_\class \in \Procs$
such that $e_\class \sim e_i$ for some $x_i \in \class$.
Consider the interpretation $\Inter_\mathit{repr} = \{z_\class \mapsto e_\class \mid \class \in P\}$.
Let us show $(\Procs,w),\Inter_\mathit{repr} \models \subf_4 \wedge \subf_5 \wedge \subf_6$.

\begin{itemize}\itemsep=0.5ex
\item[($\subf_4$)] Clearly, we have $(\Procs,w),\Inter_\mathit{repr} \models \subf_4$.

\item[($\subf_5$)] Let $\class \in P$. We have $(\Procs,w),\Inter_\class \models z_\class \sim \class ~\wedge~ \cmodels{X}{\Class}$.
Take any $e \in \Procs \mathrel{\cup} \Pos{w}$ such that $e \sim e_\class$.
By satisfaction of $\subf_2$, we have
$(\Procs,w),\Inter_\class,\{y \mapsto e\} \models \ftrans{\phi}{\Class}{\class}((x_i)_{x_i \in X},y)$.
Therefore, $(\Procs,w),\Inter_\mathit{repr} \models \subf_5$. 

\item[($\subf_6$)]
Let $e_\emptyset \in \Procs$ such that
$e_\emptyset \neq e_\class$ for all $\class \in P$.
Moreover, let $e \in \Procs \cup \Pos{w}$ such that
$e \sim e_\emptyset$. Then, $e \not\sim e_\class$ for all $\class \in P$
so that, by satisfaction of $\Phi_2$,
we have $(\Procs,w),\{y \mapsto e\} \models \ftrans{\phi}{\Class}{\emptyset}(y)$.
We obtain $(\Procs,w),\Inter_\mathit{repr} \models \subf_6$.
\end{itemize}
We conclude that $(\Procs,w) \models \Phi_3$.

\medskip

Conversely, suppose that $(\Procs,w) \models \Phi_3$, witnessed
by $\Class$ and an interpretation $\Inter_\mathit{repr} = \{z_\class \mapsto e_\class\}_{\class \in P}$.
That is, $(\Procs,w),\Inter_\mathit{repr} \models \subf_4 \wedge \subf_5 \wedge \subf_6$.
As $(\Procs,w),\{z_\class \mapsto e_\class\} \models \phi_{\Class,\class}(z_\class)$ for every $\class \in P$,
there is $\Inter_\X = \{x_1 \mapsto e_1,\ldots,x_n \mapsto e_n\}$
(with restrictions $\Inter_\class$) such that
\[
(\Procs,w),\{z_\class \mapsto e_\class\},\Inter_\class \models
\left(
\begin{array}{rl}
& z_\class \sim \class ~\wedge~ \cmodels{X}{\Class}\\[0.5ex]
\wedge & \forall y.\bigl(y \sim z_\class ~\Longrightarrow~ \ftrans{\phi}{\Class}{\class}((x_i)_{x_i \in X},y)\bigr)
\end{array}
\right)\]
for every $\class \in P$.
Let us show $(\Procs,w),\Inter_\X \models \subf_1 \wedge \subf_2 \wedge \subf_3$.

\begin{itemize}\itemsep=0.5ex
\item[($\subf_1$)]
As $e_i \not\sim e_j$ whenever $e_i \in X$ and $e_i \in Y$
for distinct sets $X,Y \in P$,
we get $(\Procs,w),\Inter_\X \models \cmodels{\X}{\Class}$.

\item[($\subf_2$)]
Take any $\class \in P$ and $e \in \Procs \mathrel{\cup} \Pos{w}$
such that $e \sim e_i$ for some $i \in \{1,\ldots,n\}$ such that $x_i \in \class$.
As $e \sim e_\class$,
we obtain $(\Procs,w),\Inter_\class,\{y \mapsto e\}\models \ftrans{\phi}{\Class}{\class}((x_i)_{x_i \in X},y)$.

\item[($\subf_3$)]
Let $e \in \Procs \cup \Pos{w}$ such that
$e \not\sim e_i$ for all $i \in \{1,\ldots,n\}$.
Let $e_\emptyset \in \Procs$ such that $e_\emptyset \sim e$.
That is, $e_\emptyset \not\sim e_\class$ for all $\class \in P$.
We get $(\Procs,w),\{z_\emptyset \mapsto e_\emptyset\} \models
\phi_{\Class,\emptyset}(z_\emptyset)$.
Since $e \sim e_\emptyset$, we also have that
$(\Procs,w),\{y \mapsto e\} \models \phi_{\Class,\emptyset}(y)$.
Therefore, $(\Procs,w),\Inter_\X \models \subf_3$
\end{itemize}
We conclude that $(\Procs,w) \models \Phi_2$.
\qed
\end{proof}

The formulas $\phi_{\Class,\class}(z_\class)$, including the case $X = \emptyset$,
are interesting, because they only reason about the class determined by
$z_\class$. As, wrt.\ $\sim$, any two elements from a class are equivalent anyway,
we can actually ignore $\sim$. A class can then be seen as a simple multiset,
or as a logical structure of degree $0$ (there is no binary relation that connects two
elements from a class). By Hanf's theorem \cite{Hanf1965,BolligK12}, we can find $B \in \N$ such that every formula
$\phi_{\Class,\class}(z_\class)$, including the case $X = \emptyset$,
is equivalent to a formula
\[
\phi_{\Class,\class}(z_\class) ~\equiv \bigvee_{(\theta,\loc) \in V_{\Class,\class}} \bigl(\theta(z_\class) \wedge \ctype{\bound,\loc}(z_\class)\bigr)
\]
for some sets $V_{\Class,\class} \subseteq \Types \times \{0,\ldots,B\}^\Alpha$.
Note that, for $V_{\Class,\class} = \emptyset$, we get $\ffalse$.
Recall that we had defined:
\[
\ctype{\bound,\loc}(y) ~=~
\bigwedge_{\substack{a \in \Alpha\\[0.3ex]\loc(a) < \bound}} \exists^{=\loc(a)} z.\bigl(y \sim z \wedge a(z)\bigr)
\wedge
\bigwedge_{\substack{a \in \Alpha\\[0.3ex]\loc(a) = \bound}} \exists^{\ge \loc(a)} z.\bigl(y \sim z \wedge a(z)\bigr)
\]

Thus, $\Phi_3$ is equivalent to the following formula
(note that the conjunct $\bigwedge_{\class \in P} \procform(z_\class)$ is not needed anymore, as its
satisfaction is guaranteed
by the second line; other changes wrt.\ $\Phi_3$ are highlighted in \textcolor{red}{red}):
\[
\Phi_4 =
\bigvee_{\Class = (\Part,\approx,\lambda) \in \Abstractions{\X}}
\exists (z_\class)_{\class \in \Part}.
\left(
\begin{array}{rl}
& \displaystyle
\bigwedge_{\substack{\class,Y \in P\\\class \neq Y}} z_\class \not\sim z_Y\\[5ex]
\wedge &
\displaystyle\bigwedge_{\class \in \Part}\;
\textcolor{red}{\bigvee_{(\theta,\loc) \in V_{\Class,\class}} \bigl(\theta(z_\class) \wedge \ctype{\bound,\loc}(z_\class)\bigr)}
\\[4ex]
\wedge &
\forall z_\emptyset.
\Bigl(
\Bigl(
\procform(z_\emptyset)
\wedge
\displaystyle\bigwedge_{\class \in \Part} \neg(z_\emptyset = z_\class)
\Bigr)\\[3ex]
&~~~~~~~~~~~~\implies
\displaystyle\textcolor{red}{\bigvee_{(\theta,\loc) \in V_{\Class,\emptyset}} \bigl(\theta(z_\emptyset) \wedge \ctype{\bound,\loc}(z_\emptyset)\bigr)}
\Bigr)
\end{array}
\right)
\]

Expanding the expression, we obtain that $\Phi_4$ is equivalent to:
\[
\Phi_5 =
\bigvee_{\substack{\Class = (\Part,\approx,\lambda) \in \Abstractions{\X}\\[0.8ex]
\textcolor{red}{((\theta_\class,\loc_\class))_{\class \in \Part}}\\[0.2ex]
\textcolor{red}{\;\in\; \prod_{\class \in \Part} V_{\Class,\class}}
}
}
\hspace{-0.5em}
\exists (z_\class)_{\class \in \Part}.
\left(
\begin{array}{rl}
& \displaystyle\bigwedge_{\substack{\class,Y \in P\\\class \neq Y}} z_\class \not\sim z_Y\\[5ex]
\wedge &
\displaystyle\bigwedge_{\class \in \Part}
\textcolor{red}{\bigl(\theta_\class(z_\class) \wedge \ctype{\bound,\loc}(z_\class)\bigr)}
\\[3ex]
\wedge &
\forall z_\emptyset.
\Bigl(
\Bigl(
\procform(z_\emptyset)
\wedge
\displaystyle\bigwedge_{\class \in \Part} \neg(z_\emptyset = z_\class)
\Bigr)\\[3ex]
&~~~~~~~~~~~~\implies
\displaystyle\bigvee_{(\theta,\loc) \in V_{\Class,\emptyset}} \bigl(\theta(z_\emptyset) \wedge \ctype{\bound,\loc}(z_\emptyset)\bigr)
\Bigr)
\end{array}
\right)
\]

Finally, $\Phi_5$ is equivalent to a formula of the desired form:
\[
\Phi_6 =
\bigvee_{\substack{\Class = (\Part,\approx,\lambda) \in \Abstractions{\X}\\[0.8ex]
\vect = ((\theta_\class,\loc_\class))_{\class \in \Part}\\[0.2ex]
\;\in\; \prod_{\class \in \Part} V_{\Class,\class}
}
}
\left(
\begin{array}{rl}
& \displaystyle \bigwedge_{(\theta,\loc) \textcolor{blue}{\,\in\, V_{\Class,\emptyset}}}
\exists^{\ge |\vect|_{(\theta,\loc)}} y.
\bigl(\theta(y) \wedge \ctype{\bound,\loc}(y)\bigr)
\\[6ex]
\wedge &
 \displaystyle \bigwedge_{\substack{(\theta,\loc)\\[0.2ex] \,\in\, (\Types \times \{0,\ldots,B\}^\Alpha) \textcolor{blue}{\,\setminus\, V_{\Class,\emptyset}}}}
\hspace{-2em}
\exists^{= |\vect|_{(\theta,\loc)}} y.
\bigl(\theta(y) \wedge \ctype{\bound,\loc}(y)\bigr)
\end{array}
\right)
\]
where $|\vect|_{(\theta,\loc)}$ is the number of occurrences of $(\theta,\loc)$
in $\vect = ((\theta_\class,\loc_\class))_{\class \in \Part}$, i.e., \[|\vect|_{(\theta,\loc)} = |\{\class \in P \mid (\theta,\loc) = (\theta_\class,\loc_\class)\}|\,.\]

\section{Proof of Lemma~\ref{lemma:equifogame} (equivalence of synthesis and parameterized vector games)}
\label{app:equifogame}

As an intermediate step in the translation of the synthesis problem into games,
we first consider a \emph{normalized} version of the former.
In a second step, we show equivalence between the normalized synthesis
problem and games.

\subsection{Normalized Synthesis Problem for $\wFOdata$}

In the normalized synthesis problem, instead of being fully asynchronous, both players will alternately give a sequence of events instead of a single one. Moreover, after every move from System, the partial word created up to that point should satisfy the formula, whereas after every move from Environment, the word should falsify the formula.

Let us fix, for the rest of the definitions, a sentence $\varphi \in \FOdata$.
We call a finite $\Procs$-execution $w \in \DS^\ast$ \emph{normalized}
if it is of the form $w = w_1 \dots w_n$ with $n \ge 1$ such that
\begin{itemize}[itemsep=0.5ex]
\item for all odd $i$ such that $1 \le i \le n$, $w_i \in \DS_\sys^\ast$ and $(\Procs,w_1 \dots w_i) \models \varphi$,
\item for all even $i$ such that $1 \le i \le n$, $w_i \in \DS_\env^\ast$ and $(\Procs,w_1 \dots w_i) \not\models \varphi$.
\end{itemize}
Note that the decomposition into the $w_i$, if it exists, is uniquely determined.

A \emph{normalized} $\Procs$-strategy (for System) is a partial mapping $\strat: \DS^\ast \to \DS_\sys^\ast$ such that $\strat(\varepsilon)$ is defined and,
if $\strat(w)$ is defined, then $(\Procs,w \cdot \strat(w)) \models \varphi$.
A normalized $\Procs$-execution $w = w_1 \dots w_n$ is
\begin{itemize}\itemsep=0.5ex
\item $\strat$\emph{-compatible} if, for all odd $1 \le i \le n$, we have $w_i = \strat(w_1 \dots w_{i-1})$,
\item $\strat$\emph{-maximal} if it is not the strict prefix of an $\strat$-compatible normalized $\Procs$-execution,
\item \emph{winning} if $(\Procs, w) \models \varphi$.
\end{itemize}
Finally, a normalized strategy is $\Procs$-\emph{winning} if all $\strat$-compatible
$\strat$-maximal normalized $\Procs$-executions are winning.

Similarly to the initial synthesis problem, we define the normalized winning set
$\NWin{\Alpha_\sys}{\Alpha_\env}{\varphi}$ as the set of
triples $(\pconsts,\pconsta,\pconstb) \in \N^\Types$
for which there is $\Procs = (\sProcs,\eProcs,\seProcs)$ such that
\begin{itemize}\itemsep=0.5ex
\item $|\Procs_\theta| = \pconst_\theta$ for all $\theta \in \Types$, and
\item there is a normalized $\Procs$-strategy that is $\Procs$-winning.
\end{itemize}

\begin{example}
Consider $\varphi_4 = \forall x. \bigl(\bigl(\exists^{= 2} y.(x \sim y \wedge a(y))\bigr)
\Longleftrightarrow \bigl(\exists^{= 2} y.(x \sim y \wedge d(y))\bigr)\bigr) \in \FOdata$ from Example~\ref{ex:formulas},
where $\Alpha_\sys = \{a,b\}$ and $\Alpha_\env = \{c,d\}$.
Let $\strat$ be the strategy defined for $\varphi_4$ (and $\varphi_3$)
in Example~\ref{ex:synthesis}. We can consider it as a strategy $\strat: \DS^\ast \to \DS_\sys^\ast$.
However, $\strat$ is not normalized:
For $\Procs=(\{1,2,3\},\emptyset,\{6,7,8\})$, $\strat((d,7)(d,7)) = (a,7)$, but
$(\Procs,(d,7)(d,7)(a,7)) \not\models \phi_4$.
Consider any $\Procs$ such that $\eProcs = \emptyset$.
Towards a winning normalized $\Procs$-strategy $\strat_\mathsf{norm}$,
suppose $w = (a_1,p_1) \ldots (a_n,p_n)\in \DS^\ast$.
For $p \in P = \{p_1,\ldots,p_n\}$,
we let $\textup{diff}(w,p)$ denote $|\{i \in \{1,\ldots,n\} \mid (a_i,p_i) =
(d,p)\}| - |\{i \in \{1,\ldots,n\} \mid (a_i,p_i) = (a,p)\}|$.
With this, we set
\[
\strat_\mathsf{norm}(w) =
\begin{cases}
\prod_{p \in P} (a,p)^{\textup{diff}(w,p)} & \textup{ if } \textup{diff}(w,p) \ge 0 \textup{ for all } p \in P\\
\textup{undefined} & \textup{ otherwise.}
\end{cases}
\]
That is, $\strat_\mathsf{norm}(\varepsilon) = \varepsilon$ and, if $\strat_\mathsf{norm}(w)$ is defined,
then it is the concatenation of words $(a,p)^{\textup{diff}(w,p)}$
in any order. Therefore, if $w$ is $\strat_\mathsf{norm}$-compatible,
then $w \cdot \strat_\mathsf{norm}(w)$
contains as many letters $(a,p)$ as letters $(d,p)$, for every $p$.
We deduce $w \cdot \strat_\mathsf{norm}(w) \models \varphi_4$.
\exend
\end{example}

Now, the original and the normalized synthesis problem are equivalent in the following sense:
\begin{lemma}\label{lem:normalized}
$\Win{\Alpha_\sys}{\Alpha_\env}{\varphi} = \NWin{\Alpha_\sys}{\Alpha_\env}{\varphi}$.
\end{lemma}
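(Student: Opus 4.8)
The plan is to prove the two inclusions $\Win{\Alpha_\sys}{\Alpha_\env}{\varphi}\subseteq\NWin{\Alpha_\sys}{\Alpha_\env}{\varphi}$ and $\NWin{\Alpha_\sys}{\Alpha_\env}{\varphi}\subseteq\Win{\Alpha_\sys}{\Alpha_\env}{\varphi}$ separately, in each case fixing a process triple $\Procs$ with a winning strategy for one game and turning it into a winning strategy for the other by a step-by-step simulation. The key ingredient, used throughout, is an \emph{order-insensitivity and stabilization} observation derived from the normal form of Theorem~\ref{thm:normalform}: for fixed $\Procs$, the truth of $\varphi$ on a $\Procs$-execution is a Boolean combination of atoms ``the number of type-$\theta$ processes whose letter counts truncated at the threshold $\bound$ equal $\loc$ is ${\bowtie}\, m$''; hence it depends only on the per-process truncated letter-count vectors, it is invariant under reordering of the events, and, since these vectors are non-decreasing and bounded and there are finitely many processes, along \emph{any} finite or infinite $\Procs$-execution the truth of $\varphi$ changes only finitely often, stabilizing (on an infinite execution, to the value at the infinite word, which is already attained at a long enough finite prefix). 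I would state this as a short preliminary observation.

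For $\Win\subseteq\NWin$, fix $\Procs$ and a $\Procs$-winning asynchronous strategy $\strat$, and define a normalized strategy $\stratnorm$ on even-length normalized prefixes $w$ (the only place it need be defined) by simulating $\strat$ from $w$ against a totally idle Environment. This produces a sequence $w\ev_1\ev_2\cdots$ of purely System events. Either $\strat$ outputs $\varepsilon$ after finitely many steps, giving an $\strat$-compatible and $\strat$-fair \emph{finite} execution $wu$ with $(\Procs,wu)\models\varphi$ (set $\stratnorm(w)=u$); or $\strat$ never outputs $\varepsilon$, giving an $\strat$-compatible and $\strat$-fair \emph{infinite} execution (fairness holds since System moves at every step), which therefore satisfies $\varphi$, and by the stabilization observation a long enough finite prefix $w\ev_1\cdots\ev_N$ already does (set $\stratnorm(w)=\ev_1\cdots\ev_N$). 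In both cases $(\Procs,w\cdot\stratnorm(w))\models\varphi$, so $\stratnorm$ is a legal normalized strategy. To see it is winning, take an $\stratnorm$-compatible $\stratnorm$-maximal normalized execution $w=w_1\cdots w_n$: if $n$ is odd then $(\Procs,w)\models\varphi$ by the definition of normalized execution; if $n$ is even then $(\Procs,w)\not\models\varphi$, so $\stratnorm(w)$ is defined and nonempty (else $(\Procs,w\cdot\stratnorm(w))=(\Procs,w)\not\models\varphi$), whence $w$ is a strict prefix of the $\stratnorm$-compatible normalized execution $w\cdot\stratnorm(w)$, contradicting maximality. Thus $n$ is always odd and $w$ is winning.

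For $\NWin\subseteq\Win$, fix $\Procs$ and a $\Procs$-winning normalized strategy $\stratnorm$, and build an asynchronous strategy $\strat$ that simulates it. The strategy $\strat$ maintains a normalized prefix $\hatconf$ built so far (initially $\varepsilon$) and a queue of System events to emit, initialized to $\stratnorm(\varepsilon)$; while the queue is nonempty $\strat$ emits and removes its first event, recording all Environment events seen meanwhile. When the queue becomes empty, let $u$ be the current $\Procs$-execution; by order-insensitivity its content is that of $\hatconf$ together with the recorded Environment events. If $(\Procs,u)\models\varphi$, $\strat$ returns $\varepsilon$ and keeps recording; as soon as $(\Procs,u)\not\models\varphi$, $\strat$ closes an Environment block consisting of all Environment events recorded since the queue was last (re)filled, appends it to $\hatconf$, and refills the queue with $\stratnorm(\hatconf)$ — which is defined, since otherwise $\hatconf$ would be an $\stratnorm$-compatible $\stratnorm$-maximal normalized execution with $(\Procs,\hatconf)\not\models\varphi$, contradicting that $\stratnorm$ is winning. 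The verification has three parts: (i) fairness forces every queue to be drained, because while it is nonempty $\strat$ keeps returning nonempty outputs, so System keeps being scheduled; (ii) since $\stratnorm$ is legal, the content right after draining a System block (with no interleaving) is that of an odd-length normalized prefix, hence satisfies $\varphi$, and combined with the stabilization observation the truth of $\varphi$ changes only finitely often along the simulated play, so only finitely many blocks are ever created; (iii) after the last System block is drained the content satisfies $\varphi$ and stays so forever (any later falsification would trigger a new block), so the final — or, using stabilization, the limiting — content satisfies $\varphi$. Hence every $\strat$-compatible $\strat$-fair execution satisfies $\varphi$, i.e.\ $\strat$ is $\Procs$-winning. (Minor edge cases, such as $\stratnorm(\varepsilon)=\varepsilon$ when $\varepsilon\models\varphi$, are handled by noting they do not affect the uniqueness of the block decomposition.)

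The main obstacle is the backward direction: reconciling the genuinely asynchronous, arbitrarily interleaved Environment moves of the original game with the block-structured Environment moves of the normalized game, while ensuring the simulation never deadlocks and produces only finitely many blocks. The technical points that make this work are that fairness guarantees queues drain, that legality of $\stratnorm$ together with the stabilization observation bound the number of blocks (and rule out infinite $\stratnorm$-compatible plays), and that winning-ness of $\stratnorm$ guarantees $\stratnorm$ is defined exactly where the simulation needs it; order-insensitivity of $\wFOdata$ is what makes the notion of ``content'' well defined despite the reordering caused by interleaving.
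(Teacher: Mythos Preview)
Your plan is essentially the paper's proof: both use order-insensitivity of $\wFOdata$ plus the stabilization observation, build $\stratnorm$ by running $\strat$ against an idle Environment, and build $\strat$ from $\stratnorm$ by dispensing System blocks through a queue while buffering Environment letters (the paper formalizes this with its $\mem$ function and proves the same finitely-many-blocks bound).

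One step is not quite airtight, and the paper makes the identical jump: you claim that when the queue empties at a falsifying position, $\stratnorm(\hatconf)$ must be defined ``since otherwise $\hatconf$ would be $\stratnorm$-maximal.'' That inference fails. A winning normalized strategy may leave $\stratnorm(\hatconf)$ undefined while $\hatconf$ is still \emph{not} maximal, because the last Environment block of $\hatconf$ can be lengthened: any $\hatconf\cdot w'$ with $w'\in\Sigma_\env^{+}$ and $(\Procs,\hatconf w')\not\models\varphi$ is again a $\stratnorm$-compatible normalized execution with the same System blocks. In that situation your constructed $\strat$ outputs $\varepsilon$, Environment may stop, and the resulting $\strat$-fair finite execution falsifies $\varphi$. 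The fix is easy and local: first replace $\stratnorm$ by a winning normalized strategy that is defined on every $\stratnorm$-reachable even-length prefix. Such a totalisation exists because the normalized game, read through the truncated letter-count abstraction, is a finite reachability game with a positional winning strategy for System.
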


\begin{proof}
We say that two executions $w$ and $w'$ are \emph{similar}, noted $w \sim w'$, if $w'$ is $w$ with the position of its events rearranged in any combination, i.e. $w \sim w'$ if and only if there exists a letter-preserving bijection from $\Pos{w}$ to $\Pos{w'}$.
Note that in $\FOdata$, there is no way to write constraints on the relative order of positions.
In other words, for any $\varphi \in \FOdata$, if $w \models \varphi$ and $w \sim w'$, then $w' \models \varphi$ too.
This is the property that we use to prove that the Synthesis Problem is equivalent to the normalized one.

For the remainder of this proof, let us fix $\Procs = (\sProcs,\eProcs,\seProcs)$ and its corresponding triple $(\pconsts,\pconsta,\pconstb)$.
$\Procs$-executions and $\Procs$-strategies will simply be referred as executions and strategies respectively.

\underline{$\Win{\Alpha_\sys}{\Alpha_\env}{\varphi} \supseteq \NWin{\Alpha_\sys}{\Alpha_\env}{\varphi}$:}
Suppose that $(\pconsts,\pconsta,\pconstb) \in \NWin{\Alpha_\sys}{\Alpha_\env}{\varphi}$ and let $\stratnorm$ be a normalized winning strategy for System. 
We want to build $\strat$ a winning strategy in the Synthesis Problem.

The idea is to simulate $\stratnorm$ by memorizing the word of actions given by $\stratnorm$ and playing it one action at a time. 
Meanwhile, the actions played by Environment are stored and then processed as if they happened all at once after System finishes playing its word, thus simulating a corresponding normalized run.

We define a function $\mem$ such that for all executions $w = \ev_1 \ev_2 \dots$, $\mem(w) = (w_N, w_\sys, w_\env)$ where $w_N$ is the corresponding normalized run, $w_\sys$ is the word that System must play to simulate the choice of $\stratnorm$, and $w_\env$ stores the actions played by Environment in the meantime. It is defined as follows:
\begin{itemize}
\item $\mem(\varepsilon) = (\varepsilon, \stratnorm(\varepsilon), \varepsilon)$
\item If $\ev \in \DS_\sys$ and $\mem(w) = (w_N, w_\sys, w_\env)$, then
\begin{align*}
\mem(w \cdot \ev) =
\begin{cases}
(w_N \cdot \ev, w'_\sys, w_\env) &\textup{ if } w_\sys = \ev \cdot w'_\sys,\\
(w_N \cdot w_\env \cdot \ev, w'_\sys, \varepsilon) &\textup{ if } w_\sys = \varepsilon \textup{ and } \stratnorm(w_N \cdot w_\env) = \ev \cdot w'_\sys,\\
\textup{undefined otherwise.}
\end{cases}
\end{align*}
\item If $\ev \in \DS_\env$ and $\mem(w) = (w_N, w_\sys, w_\env)$, then
\begin{align*}
\mem(w \cdot \ev) = (w_N, w_\sys, w_\env \cdot \ev)
\end{align*}
\end{itemize}

Then we define an auxiliary function $\strat_\textup{aux}$:
\begin{align*}
\strataux(w_N, w_\sys, w_\env) =
\begin{cases}
\ev \textup{ if } w_\sys = \ev \cdot w'_\sys,\\
\ev \textup{ if } w_\sys = \varepsilon \textup{ and } \stratnorm(w_N \cdot w_\env) = \ev \cdot w'_\sys,\\
\textup{undefined otherwise.}
\end{cases}
\end{align*}

Finally, we define the strategy $\strat$ as $\strat(w) = \strataux(\mem(w))$ when both $\strataux$ and $\mem$ are defined, otherwise $\strat(w) = \varepsilon$.

From these definitions, we can immediately state the following properties describing the workings of $\mem$ and $\strat$:
\begin{enumerate}
\item If $w = w' \ev$ is a $\strat$-compatible execution such that $\ev \in \DS_\sys$ and $\mem(w') = (w_N, \varepsilon, w_\env)$, then $\mem(w) = (w_N w_\env \ev, \ev_2 \dots \ev_n, \varepsilon)$ with $\ev \ev_2 \dots \ev_n = \stratnorm(w_N w_\env)$.
\item If $w = w' w_0 \ev_1 w_1 \dots \ev_m w_m$ is a $\strat$-compatible execution such that 
for all $i \le m$ $w_i \in \DS_\env^\ast$ and $\ev_i \in \DS_\sys$, 
$\mem(w') = (w_N, \ev'_1 \dots \ev'_n, w_\env)$, 
for all $j < m$ $\mem(w' w_0 \dots \ev_j) \neq (\ast, \varepsilon, \ast)$, 
and $\mem(w) = (\ast, \varepsilon, \ast)$, then 
$n = m$, for all $i \le n$ $\ev_i = \ev'_i$, 
and $\mem(w) = (w_N \ev_1 \dots \ev_n, \varepsilon, w_\env w_0 \dots w_n)$. 
\item If $\strat(w) = \varepsilon$ for some $\strat$-compatible execution $w$ then either $w = \varepsilon$ and $\stratnorm(w) = \varepsilon$, or 
$\mem(w) = (w_N, \varepsilon, w_\env)$ with $w_N$ and $w_\env$ such that $\stratnorm(w_N \cdot w_\env)$ is undefined.
\end{enumerate}

Let $w$ be a finite $\strat$-compatible execution such that $\mem(w) = (w_N, \varepsilon, w_\env)$.
This execution can always be decomposed as $w = w_0 \ev_0 w_1 \ldots \ev_m w_m$ with $w_i \in \DS_\env^\ast$ and $\ev_i \in \DS_\sys$ for all $i \le m$.
We show that $w$ can also be written as:
\[
w = w_0^1 \ev_1^1 \dots \ev_{n_1}^1 w_{n_1}^1 \cdot \ev_0^2 w_0^2 \ev_1^2 \dots \ev_{n_2}^2 w_{n_2}^2 \cdot \ldots \cdot \ev_0^k w_0^k \ev_1^k \dots \ev_{n_k}^k w_{n_k}^k
\]
where $k, n_1, \dots, n_k \in \N$
and such that if we define $\sys^j = \ev_0^j \dots \ev_{n_j}^j$ and $\env^j = w_0^j \dots w_{n_j}^j$ for all $j \le k$ then:
\begin{itemize}
\item $w_N = \sys^1 \env^1 \sys^2 \dots \env^{k-1} \sys^k$,
\item $w_\env = \env^k$, and
\item for all $j \le k$, $\sys^j = \stratnorm(\sys^1 \env^1 \dots \sys^{j-1} \env^{j-1})$.
\end{itemize}
We prove this by recursion on the number of prefixes $w'$ of $w$ ending in an action of System such that $\mem(w') = (\ast, \varepsilon, \ast)$, that number being $k$ in the decomposition above.

Let $w = w_0 \ev_0 \ldots \ev_m w_m$ a $\strat$-compatible execution with $\mem(w) = (w_N, \varepsilon, w_\env)$.
We note $(w_N^i, w_\sys^i, w_\env^i) = \mem(w_0 \ev_0 \ldots w_i \ev_i)$ for all $i \le m$.

\paragraph{Base case ($k = 1$).}
Suppose that $w_\sys^i \neq \varepsilon$ for all $i < m$ and $w_\sys^m = \varepsilon$.
As $\mem(\varepsilon) = (\varepsilon, \stratnorm(\varepsilon), \varepsilon)$, if we let $\stratnorm(\varepsilon) = \ev'_1 \ldots \ev'_n$ then by the second property of $\mem$ we have that $n = m$, $\ev_i = \ev'_i$ for all $i \le m$, and that $\mem(w) = (\ev_1 \ldots \ev_n, \varepsilon, w_0 \ldots w_m)$.
Then if we let $\sys^1 = \ev_1 \ldots \ev_n$ and $\env^1 = w_0 \ldots w_m$, the decomposition holds (with $k = 1$).

\paragraph{Induction step.}
Suppose $w = w' \cdot \ev_i w_{i+1} \ldots \ev_m w_m$ where $w' = w_0^1 \ev_1^1 \ldots \ev_{n_k}^k w_{n_k}^k$ with $\mem(w') = (w_N, \varepsilon, w_\env)$ satisfying the conditions above.
Suppose also that $w_\sys^j \neq \varepsilon$ for all $i \le j < m$ and $w_\sys^m = \varepsilon$.
By the first property of $\mem$, we know that $\mem(w' \ev_i) = (w_N \cdot w_\env \cdot \ev_i, \ev'_2 \ldots \ev'_n, \varepsilon)$ with $\ev_i \ev'_2 \ldots \ev'_n = \stratnorm(w_N \cdot w_\env) = \stratnorm(\sys^1 \env^1 \ldots \sys^k \cdot \env^k)$.
Then using the second property, we deduce that $n = m - i$, $\ev'_{j+1} = \ev_{i+j}$ for all $0 < j \le n$, and that $\mem((w' \ev_i) \cdot w_{i+1} \ev_{i+1} \ldots \ev_m w_m) = (w'_N, \varepsilon, w'_\env)$ where $w'_N = w_N \cdot w_\env \cdot \ev_i \ev_{i+1} \ldots \ev_m$ and $w'_\env = w_{i+1} \ldots w_m$.
So we let $\sys^{k+1} = \ev_i \ldots \ev_m$ and $\env^{k+1} = w_{i+1} \ldots w_m$, and all conditions of the decomposition have been satisfied.\\

Thanks to the decomposition we just proved, if $w$ is a finite $\strat$-compatible execution such that $\mem(w) =  (w_N, \varepsilon, w_\env)$ then we can deduce two facts:
that $w \sim w_N \cdot w_\env$ and that $w_N$ is a $\stratnorm$-compatible normalized execution.

Let $w$ be an arbitrary fair $\strat$-compatible execution. We distinguish two different cases:

If $w$ is finite and $\mem(w) = (w_N, w_\sys, w_\env)$, then $\strat(w) = \varepsilon$ because $w$ is fair, 
so either $w = \varepsilon = \stratnorm(w)$ in which case $(\Procs, \varepsilon) \models \varphi$ because $\stratnorm$ is a normalized strategy, 
or $w_\sys = \varepsilon$ and $\stratnorm$ is undefined on $w_N \cdot w_\env$.
Moreover, we get that $w \sim w_N \cdot w_\env$ and that $w_N$ is a $\stratnorm$-compatible normalized execution.
Since $w_N$ is a $\stratnorm$-compatible normalized execution and $\stratnorm$ is undefined on $w_N \cdot w_\env$, then necessarily $w_N \cdot w_\env$ satisfies $\varphi$, otherwise $w_N \cdot w_\env$ would be a $\stratnorm$-compatible maximal normalized execution that is not winning which would contradict that $\stratnorm$ is winning.
Therefore, since $w_N \cdot w_\env$ satisfies $\varphi$ and $w \sim w_N \cdot w_\env$, we have that $w$ satisfies $\varphi$.

If $w$ is infinite, let $w_i$ be the prefix of size $i$ of $w$ and $(w_N^i, w_\sys^i, w_\env^i) = \mem(w_i)$.
We again distinguish two cases.
If there are an infinite number of actions from System, then there is an infinite sequence $i_1 < i_2 < \ldots$ such that $w_\sys^{i_j} = \varepsilon$, which in turn means that there is an increasing sequence of $\stratnorm$-compatible normalized executions $w_N^{i_1}, w_N^{i_2}, \ldots$, so one can find a normalized execution of arbitrary size.
This is impossible, as by Theorem~\ref{thm:normalform} there is a bound on the number of letter that can be played on a single process before the satisfiability of $\varphi$ remains stable, that bound being $\bound.|\Alpha_\theta|$ for processes of type $\theta \in \Types$.
Since the number of processes is fixed that means there is a bound on the total number of times that an execution can go from satisfying $\varphi$ to not satisfying it and vice-versa, which in turn limits the size of normalized executions.

Therefore there is a finite number of actions from System, i.e. $w = w' \cdot w_\env^\infty$ where $w'$ is a finite execution ending with an action from System and $w_\env^\infty$ is an infinite execution with only actions from Environment. 
Let $n = |w'|$. 
By fairness of $w$ necessarily there is some point $K > n$ such that $\strat(w_i) = \varepsilon$ for all $i \ge K$.
Since there are no actions from System in $w_\env^\infty$, we also know that $w_N^i = w_N^n$ and $w_\sys^i = w_\sys^n = \varepsilon$ for all $i > n$, and that $w_N^n$ is a $\stratnorm$-compatible normalized execution.
Thus for all $i \ge K$, $w_i \sim w_N^n \cdot w_\env^i$ and $w_N^n \cdot w_\env^i$ satisfies $\varphi$ otherwise $\stratnorm$ would not be winning, therefore $w_i$ satisfies $\varphi$ for all $i \ge K$.
We conclude that $w$ is winning, and therefore that $\strat$ is a winning strategy in the Synthesis Problem.\\

\underline{$\Win{\Alpha_\sys}{\Alpha_\env}{\varphi} \subseteq \NWin{\Alpha_\sys}{\Alpha_\env}{\varphi}$:}

Suppose that $(\pconsts,\pconsta,\pconstb) \in \Win{\Alpha_\sys}{\Alpha_\env}{\varphi}$ and let $\strat$ be a winning strategy for System. 
We will define $\stratnorm$ a normalized strategy.
Let $w$ be a finite normalized execution; note that $w$ can also be seen as a (regular) execution. 
Suppose that $w$ is a $\strat$-compatible execution that does not satisfy $\varphi$.
Let $\ev_1 = \strat(w)$, $\ev_2 = \strat(w \ev_1)$, $\ev_3 = \strat(w \ev_1 \ev_2)$, and so on.
As $\strat$ is winning, necessarily there exists $i \in \N$ such that $\ev_1, \ldots, \ev_i$ are all not $\varepsilon$ and such that $w \ev_1 \ldots \ev_i$ satisfies $\varphi$, otherwise $w \ev_1 \ev_2 \ldots$ would be an infinite $\strat$-compatible fair execution that is not winning.
We then take the minimal $i$ satisfying those conditions and we define $\stratnorm(w) = \ev_1 \ldots \ev_i$.
Remark that in that case, $w \cdot \stratnorm(w)$ is still a $\strat$-compatible execution.
If $(\Procs, \varepsilon) \models \varphi$, we let $\stratnorm(\varepsilon) = \varepsilon$, and the remark above still holds.
If $w \neq \varepsilon$ either satisfies $\varphi$ or is not $\strat$-compatible, then $\stratnorm$ is undefined.

Let $w = w_\sys^1 w_\env^1 w_\sys^2 \ldots w_\theta^i$ be a $\stratnorm$-compatible normalized execution with $\theta \in \{\sys, \env\}$. 
Then $w$ is also a $\strat$-compatible execution: this is true if $w = \varepsilon$, and if $w'$ is a $\strat$-compatible execution then $w' \cdot \stratnorm(w')$ is also one as we remarked earlier, and $w' \cdot \stratnorm(w') \cdot w_\env$ as well for any $w_\env \in \DS_\env^+$.

Now suppose that $w$ is also $\stratnorm$-maximal. If $\theta = \sys$ then $w$ is winning.
Otherwise, by definition of maximal $\stratnorm(w)$ must be undefined. 
$\stratnorm$ is undefined when $w$ is not $\strat$-compatible or satisfies $\varphi$.
As $w$ is $\strat$-compatible, this means that $w$ must satisfy $\varphi$.
Thus all maximal $\stratnorm$-compatible normalized executions are winning. Thus, $\stratnorm$ is a winning strategy.
\qed
\end{proof}

\subsection{Proof of Lemma~\ref{lemma:equifogame}}

We split the lemma into two, one for each direction.

\begin{lemma}\label{lemma:equifogame1}
For every sentence $\varphi \in \FOdata$, there is a parameterized vector game $\G$ such that
$\Win{}{}{\varphi} = \Win{}{}{\game}$.
\end{lemma}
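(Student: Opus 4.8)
The plan is to go through the normalized synthesis problem (Lemma~\ref{lem:normalized}), since $\Win{}{}{\varphi} = \NWin{}{}{\varphi}$, and then to exhibit a direct, essentially definitional translation between normalized $\Procs$-strategies and $\conf_{\tvec}$-strategies in a suitable game $\game$. First I would take the sentence $\varphi \in \FOdata$ and, by the normal form of Theorem~\ref{thm:normalform}, fix a bound $\bound \in \N$ such that $\varphi$ is equivalent to a Boolean combination of formulas $\exists^{\bowtie \cbound} y.(\theta(y) \wedge \ctype{\bound,\loc}(y))$ with $\loc \in \Loc = \{0,\ldots,\bound\}^\Alpha$. The game is $\game = (\Alpha,\bound,\Acc)$ with this very $\bound$; it remains to define the acceptance condition $\Acc \subseteq \locAcc^\Loc$. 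Since $\varphi$ in normal form is a disjunction of conjunctions of the counting formulas above, I turn each disjunct into one local-acceptance-condition map: a conjunct $\exists^{= \cbound} y.(\theta(y) \wedge \ctype{\bound,\loc}(y))$ constrains the $\theta$-component of $\accfunction(\loc)$ to be $({=}\cbound)$, a conjunct $\exists^{\ge \cbound} y.(\ldots)$ constrains it to $({\ge}\cbound)$, and components not mentioned default to $({\ge}0)$; one must check that a disjunct is consistent (no two constraints on the same $(\loc,\theta)$ that contradict each other) — inconsistent disjuncts are simply dropped. The resulting finite set $\Acc$ satisfies, by construction, that for every $\Procs$ and every finite $\Procs$-execution $w$, if $\conf^w$ denotes the configuration counting, for each $\loc$ and $\theta$, the number of type-$\theta$ processes whose letter-count vector (capped at $\bound$) equals $\loc$, then $(\Procs,w) \models \varphi \iff \conf^w \models \Acc$.

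Next I would set up the correspondence between executions and plays. Fix $\tvec = (\pconsts,\pconsta,\pconstb)$ and $\Procs$ with $|\Procs_\theta| = \pconst_\theta$. A normalized $\Procs$-execution $w = w_1 \dots w_n$ gives rise to the configuration sequence $\conf_0 = \conf_{\tvec}, \conf_1 = \conf^{w_1}, \dots, \conf_i = \conf^{w_1\cdots w_i}$; the transition from $\conf_{i-1}$ to $\conf_i$ is realized by a system (resp.\ environment) transition $\upd_i$ for odd (resp.\ even) $i$, because appending a word in $\sEvents^\ast$ to processes can only move a token from $\loc$ to $\loc + u$ with $u \in \Alpha_\sys^\ast$ (matching the definition of $\sysUpdates$), and symmetrically for $\eEvents^\ast$. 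Conversely, any play $\pi = \conf_0 \upd_1 \conf_1 \ldots \upd_n \conf_n$ from $\conf_{\tvec}$ can be "concretized": given concrete process identities, a transition $\upd_i$ prescribes, for each pair $(\loc,\loc')$ and each type, how many tokens to move, and since $\loc' = \loc + u$ for some word $u$ over the appropriate sub-alphabet, one appends the corresponding letters to the chosen processes. (Here I rely on $\bound$ being the threshold: once a process already has $\ge\bound$ occurrences of a letter $a$, appending more $a$'s keeps it in the same location, so the required word always exists.) This yields mutually inverse (up to the equivalence "same number of tokens per location", which is all that matters) maps between normalized $\Procs$-executions and $\conf_{\tvec}$-plays, under which $\strat$-compatibility corresponds to $\strat$-compatibility, maximality to maximality, and — via the key property of $\Acc$ above — the winning condition $(\Procs,w)\models\varphi$ to $\conf_n \models \Acc$, and the "intermediate" constraints (System's moves stay in $\Acc$, Environment's leave $\Acc$) to the parity-of-index conditions in the definition of a play.

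Putting these together: from a normalized $\Procs$-winning strategy $\stratnorm$ I build a $\conf_{\tvec}$-strategy $\strat$ for System by reading off, on the last configuration of a play (positional strategies suffice in the game, as noted after Definition~\ref{def:games}), a concretization-independent witness move; conversely a winning $\conf_{\tvec}$-strategy concretizes to a normalized $\Procs$-winning strategy. Hence $\tvec \in \NWin{}{}{\varphi} \iff \conf_{\tvec}$ is winning for System $\iff \tvec \in \Win{}{}{\game}$, and with Lemma~\ref{lem:normalized} this gives $\Win{}{}{\varphi} = \Win{}{}{\game}$. The construction of $\Acc$ is effective. I expect the main obstacle to be bookkeeping rather than conceptual: carefully verifying that the "wait-and-see" normalization of Environment (already done in Lemma~\ref{lem:normalized}) combined with the token abstraction preserves the alternation and the satisfy/falsify requirements in both directions — in particular, that a single transition in the game can always be realized by appending a suitable multiset of letters, which hinges on the fact that in $\FOdata$ the order of letters is irrelevant and that locations are closed under $\loc \mapsto \loc + a$.
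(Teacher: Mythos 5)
Your plan is correct and follows essentially the same route as the paper: reduce to the normalized synthesis problem via Lemma~\ref{lem:normalized}, build $\Acc$ directly from the disjuncts of the normal form of Theorem~\ref{thm:normalform}, and transfer strategies through the counting correspondence between normalized executions and plays (with an explicit assignment of process identities to tokens for the play-to-execution direction). The only quibble is your parenthetical appeal to positional strategies when converting $\stratnorm$ into a game strategy --- the paper instead defines the game strategy on whole plays via the correspondence, which avoids having to argue that the normalized strategy's move depends only on the abstracted configuration --- but this does not affect the soundness of the overall argument.
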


\begin{proof}
We actually show that parameterized vector games are equivalent to the normalized synthesis problem.
Let $\varphi$ be a sentence in $\FOdata$.
With the normal form from Theorem~\ref{thm:normalform}, we suppose that there is $\bound \in \N$ and that
\[
\varphi = \bigvee_{i = 1}^{n} \left[ \left( \bigwedge_{j = 1}^{m_i} \exists^{= k_j^i} y. (\theta_j^i(y) \wedge \ctype{\bound,\loc_j^i}(y)) \right) 
\land \left( \bigwedge_{j = 1}^{\hat{m}'_i} \exists^{\ge \hat{k}_j^i} y. (\hat{\theta}_j^i(y) \wedge \ctype{\bound,\hat{\loc}_j^i}(y)) \right) \right]
\]
with $k_j^i, \hat{k}_j^i \in \N$, $\theta_j^i, \hat{\theta}_j^i \in \Types$, $\loc_j^i, \hat{\loc}_j^i \in \{0 \ldots \bound\}^\Alpha$ for all $i, j \in \N$.
Let $\Loc = \{0 \ldots \bound\}^\Alpha$, we can also assume that for all $i \in \N$, any pair $(\theta,\loc) \in \Types \times \Loc$ appears at most once in $\cup_{j} \{(\theta_j^i, \loc_j^i), (\hat{\theta}_j^i, \hat{\loc}_j^i)\}$.

We define the parameterized vector game $\G = (\Alpha, \bound, \Acc)$ where $\Alpha$ and $\bound$ are given by $\varphi$, 
and $\Acc = \{\accfunction_i \mid 1 \le i \le n\}$ such that for all $1 \le i \le n$ and $\loc \in \Loc$,
$\accfunction_i(\loc) = (\bowtie_\sys^i n_\sys^i, \bowtie_\env^i n_\env^i, \bowtie_\both^i n_\both^i)$ where
\begin{align*}
\bowtie_\theta^i n_\theta^i = 
\begin{cases}
= k_j^i &\textup{ if } \exists j. (\theta, \loc) = (\theta_j^i, \loc_j^i),\\
\ge \hat{k}_j^i &\textup{ if } \exists j. (\theta, \loc) = (\hat{\theta}_j^i, \hat{\loc}_j^i),\\
\ge 0 &\textup{ otherwise.}
\end{cases}
\end{align*}

\underline{$\NWin{}{}{\varphi} \subseteq \Win{}{}{\game}$:}

First we show how to obtain a play from a normalized execution.
Let $w$ be a normalized $(\sProcs, \eProcs, \seProcs)$-execution and $k_\theta = |\Procs_\theta|$.
By abuse of notation, we note $\play(w)$ the play corresponding to $w$ that we are building.
Let $w = w_1 \dots w_\alpha$ with $\alpha \ge 1$ and let $\conf_0 = \conf_{(k_\sys, k_\env, k_\both)}$.
For all $\proc \in \sProcs \cup \eProcs \cup \seProcs$ and $\beta \in \{1,\dots,\alpha\}$, we define $\loc_\proc^\beta$ that track the state of each process $\proc$ after $w_1 \dots w_\beta$ as:
\[
\loc_\proc^\beta(a) = |\{j \in \Pos{w_1 \dots w_\beta} \mid w[j] = (a, \proc)\}|
\]
By convention, we also let $\loc_\proc^0 = \loc_0$ for all $\proc$.
Then let $\Procs_{\theta,\loc}^\beta$ be the set of processes of type $\theta$ in state $\loc$ after $w_1 \dots w_\beta$, defined as
\[
\Procs_{\theta,\loc}^\beta = \{\proc \in \Procs_\theta \mid \loc = \loc_\proc^\beta\}
\]
Then we define $\play(w) = \conf_0 \upd_1 \conf_1 \dots \upd_\alpha \conf_\alpha$ where 
for all $\beta \in \{1, \dots, \alpha\}$ and $(\loc, \loc') \in \Loc^2$, 
$\upd_\beta(\loc,\loc') = (\upd_\sys, \upd_\env, \upd_\both)$ with 
$\upd_\theta = |\Procs_{\theta,\loc}^{\beta-1} \cap \Procs_{\theta,\loc'}^\beta|$ 
and $\conf_\beta = \upd_\beta(\conf_{\beta-1})$.

Let $w = w_1 \dots w_\alpha$ be a normalized execution and let $\play(w) = \conf_0 \dots \conf_\alpha$.
We prove that for all $\loc \in \Loc$, $\beta \le \alpha$, and $\theta \in \Types$ we have $\conf_\beta(\loc,\theta) = |\Procs_{\theta,\loc}^\beta|$.
If $\beta = 0$ then $\conf_\beta(\loc_0,\theta) = k_\theta = |\Procs_{\theta,\loc}^0|$.
If the property is true for $\beta < \alpha$, then 
$\conf_{\beta+1}(\loc) = \conf_\beta(\loc)  - \outm{\upd_{\beta+1}}(\loc) + \inm{\upd_{\beta+1}}(\loc) = \conf_\beta(\loc) - \sum_{\loc' \in \Loc} \upd_{\beta+1}(\loc, \loc') + \sum_{\loc' \in \Loc} \upd_{\beta+1}(\loc', \loc)$.
For a given $\theta \in \Types$, this simplifies into 
$\conf_{\beta+1}(\loc,\theta) = 
|\Procs_{\theta,\loc}^\beta| 
- \sum_{\loc' \in \Loc}\left(|\Procs_{\theta,\loc}^{\beta} \cap \Procs_{\theta,\loc'}^{\beta+1}|\right) 
+ \sum_{\loc' \in \Loc}\left(|\Procs_{\theta,\loc'}^{\beta} \cap \Procs_{\theta,\loc}^{\beta+1}|\right)
= |\Procs_{\theta,\loc}^\beta| 
- |\{\proc \in \Procs_\theta \mid \proc \in \Procs_{\theta,\loc}^\beta$ and $\proc \notin \Procs_{\theta,\loc}^{\beta+1}\}| 
+ |\{\proc \in \Procs_\theta \mid \proc \notin \Procs_{\theta,\loc}^\beta$ and $\proc \in \Procs_{\theta,\loc}^{\beta+1}\}| 
= |\Procs_{\theta,\loc}^{\beta+1}|$.

Consequently, we can prove that $w$ is winning iff $\play(w)$ is winning.
Suppose there is $i \le n$ such that $(\Procs, w_1 \dots w_\alpha) \models \varphi_i$ with
\[\varphi_i = \left( \bigwedge_{j = 1}^{m_i} \exists^{= k_j^i} y. (\theta_j^i(y) \wedge \ctype{\bound,\loc_j^i}(y)) \right) 
\land \left( \bigwedge_{j = 1}^{\hat{m}'_i} \exists^{\ge \hat{k}_j^i} y. (\hat{\theta}_j^i(y) \wedge \ctype{\bound,\hat{\loc}_j^i}(y)) \right)\]
Then by definition of the subformulas $\ctype{\bound,\loc_j^i}(y)$, 
for all $(\theta,\loc)$ and $j$ such that $(\theta,\loc) = (\theta_j^i, \loc_j^i)$ (respectively $= (\hat\theta_j^i, \hat\loc_j^i)$), 
there must be at exactly $k_j^i$ (resp. at least $\hat{k}_j^i$) processes of type $\theta$ in state $\loc$
i.e. $|\Procs_{\theta,\loc}^\alpha| = k_j^i$ (resp. $\ge \hat{k}_j^i$).
Therefore $\conf_\alpha(\loc,\theta) = k_j^i$ (resp. $\ge \hat{k}_j^i$) for all $(\theta,\loc)$, so $\conf_\alpha$ satisfies $\accfunction_i$ thus $\play(w)$ is winning.
The other direction is similar.

Now suppose there is a winning normalized $\Procs$-strategy $\strat$.
We define a strategy $\strat_\game$ in $\game$ as $\strat_\game(\play) = \upd_\alpha$ if there is a $\strat$-compatible play $w$ such that $\play = \play(w)$, $\strat(w)$ is defined and $\play(w \cdot \strat(w)) = \conf_0 \dots \upd_\alpha \conf_\alpha$.
Moreover, let $\strat_\game(\varepsilon) = \upd_1$ where $\play(\strat(\varepsilon)) = \conf_0 \upd_1 \conf_1$.
In all other cases, $\strat_\game$ is undefined.

Finally, we show that $\strat_\game$ is winning.
If $\play = \conf_0 \upd_1 \conf_1 \dots \conf_\alpha$ is a $\strat_\game$-compatible play, then inductively by definition of $\strat_\game$ we know that there is $w = w_1 \dots w_\alpha$ such that for all $\beta \le \alpha$, $\conf_0 \upd_1 \conf_1 \dots \conf_\beta = \play(w_1 \dots w_\beta)$.
Furthermore, if $\play$ is $\strat_\game$-maximal, then there it is not the prefix of a longer $\strat_\game$-compatible play.
If $w$ was not $\strat$-maximal, there would be an execution $w' = w w_{\alpha+1}$ that is $\strat$-compatible, but in that case $\play(w')$ would be a $\strat_\game$-compatible play which contradicts the maximality of $\play$.
Therefore $w$ must be $\strat$-maximal, and thus winning as $\strat$ is a winning strategy.
Since we proved that $w$ is winning iff $\play(w)$ is winning, then $\play$ is a winning play, therefore $\strat_\game$ is a winning strategy.

\underline{$\NWin{}{}{\varphi} \supseteq \Win{}{}{\game}$:}

Let $\conf_0 = \conf_{(k_\sys, k_\env, k_\both)}$ for some $k_\sys, k_\env, k_\both \in \N$.
We define $\Procs_\theta = \{1, \dots, k_\theta\}$ for all $\theta \in \Types$.
For all $\conf_0$-plays $\play$, let us build a normalized $(\sProcs, \eProcs, \seProcs)$-execution that we note $w(\play)$ again by abuse of notation.
Since processes in $\G$ do not have identities, we will need to arbitrarily assign one to each of them.
To that end, we define a function $\memG$ such that for all $\conf_0$-plays $\play \in \Plays$, locations $\loc \in \Loc$, and $\theta \in \Types$, 
$\memG(\play, \loc, \theta) = S$ with $S \subseteq \Procs_\theta$ storing the identities of all processes in location $\loc$ at the end of play $\play$.
First we fix an arbitrary total order $<$ on $\Loc^2$.
Then $\memG$ is defined as follows:
\begin{align*}
\memG(\conf_0, \loc, \theta) =
\begin{cases}
\Procs_\theta &\textup{ if } \loc = \loc_0,\\
\emptyset &\textup{ otherwise.}
\end{cases}
\end{align*}
and for all $\play = \conf_0 \upd_1 \dots \conf_\alpha$ such that $\memG(\play, \loc, \theta)$ is defined for all $(\loc, \theta) \in \Loc \times \Types$,
for all $\upd$ applicable at $\conf_\alpha$ and $\conf = \upd(\conf_\alpha)$, 
for all $\loc, \loc' \in \Loc$ such that $\upd(\loc, \loc') = (n_\sys, n_\env, n_\both)$, 
for all $\theta \in \Types$, we define 
$S_{\loc, \loc'}^\theta$ as the $n_\theta$ lowest (w.r.t. the natural order on $\N$) elements of 
\[
\memG(\play, \loc, \theta) \setminus \left( \bigcup_{(\hat\loc, \hat\loc') < (\loc, \loc')} S_{\hat\loc, \hat\loc'}^\theta \right)
\]
which is always well-defined if $\upd$ is applicable as we supposed. Then we let 
\[
\memG(\play \upd \conf, \loc, \theta) = \memG(\play, \loc, \theta) 
\cup \left( \bigcup_{\loc' \neq \loc} S_{\loc', \loc}^\theta \right)
\setminus \left( \bigcup_{\loc' \neq \loc} S_{\loc, \loc'}^\theta \right) 
\]
With that being done, we define $w(\play)$ recursively. 
Let $w(\conf_0) = \varepsilon$.
For all $\play = \conf_0 \upd_1 \conf_1 \dots \conf_\alpha$ such that $w(\play)$ is defined,
for all $\upd$ applicable at $\conf_\alpha$ and $\conf = \upd(\conf_\alpha)$, 
we let
\[
w(\play \upd \conf) = w(\play) \cdot 
\prod_{\substack{\loc' = \loc + a_1 \dots a_j,\\ \theta \in \Types\\ \proc \in \memG(\play, \loc, \theta) \cap\\ \memG(\play \upd \conf, \loc', \theta)}}
(a_1, \proc) \cdot \ldots \cdot (a_j, \proc)
\]

We prove that for all $\conf_0$-plays $\play = \conf_0 \upd_1 \dots \conf_\alpha$ and $w(\play) = w_1 \dots w_\alpha$, 
for all $\loc \in \Loc$, $\beta \le \alpha$, and $\theta \in \Types$ we have 
$\memG(\conf_0 \upd_1 \dots \conf_\beta, \loc, \theta) = \Procs_{\theta,\loc}^\beta$ with $\Procs_{\theta,\loc}^\beta$ defined as before.
If $\beta = 0$, for all processes $\proc$ we have that $\loc_\proc^0 = \loc_0$, so $\Procs_{\theta,\loc}^0 = \Procs_\theta$ if $\loc = \loc_0$ and $\emptyset$ otherwise, therefore $\Procs_{\theta,\loc}^0 = \memG(\conf_0, \loc, \theta)$.
If the property holds for some $\beta < \alpha$, then $\memG(\conf_0 \upd_1 \dots \upd_{\beta+1} \conf_{\beta+1}, \loc, \theta) = 
\Procs_{\theta,\loc}^\beta 
\cup \left( \bigcup_{\loc' \neq \loc} S_{\loc', \loc}^\theta \right)
\setminus \left( \bigcup_{\loc' \neq \loc} S_{\loc, \loc'}^\theta \right)$.
Moreover, $\Procs_{\theta,\loc}^{\beta+1} = \Procs_{\theta,\loc}^\beta 
\cup \{\proc \in \Procs_\theta \mid \proc \notin \Procs_{\theta,\loc}^\beta$ and $\proc \in \Procs_{\theta,\loc}^{\beta+1}\}
\setminus \{\proc \in \Procs_\theta \mid \proc \in \Procs_{\theta,\loc}^\beta$ and $\proc \notin \Procs_{\theta,\loc}^{\beta+1}\}
= \Procs_{\theta,\loc}^\beta 
\cup \left( \bigcup_{\loc' \neq \loc} \{\proc \in \Procs_\theta \mid \proc \in \Procs_{\theta,\loc'}^\beta \textup{ and } \proc \in \Procs_{\theta,\loc}^{\beta+1}\}\right)
\setminus \left( \bigcup_{\loc' \neq \loc} \{\proc \in \Procs_\theta \mid \proc \in \Procs_{\theta,\loc}^\beta \textup{ and } \proc \in \Procs_{\theta,\loc'}^{\beta+1}\} \right)$.
If $\proc \in \Procs_\theta$ is such that $\proc \in \Procs_{\theta,\loc'}^\beta$ and $\proc \in \Procs_{\theta,\loc}^{\beta+1}$ for some $\loc' \neq \loc$, then by definition of $w(\play)$ necessarily $\loc = \loc' + a_1 \dots a_j$ and $\proc \in \memG(\conf_0 \dots \conf_\beta, \loc', \theta) \cap \memG(\conf_0 \dots \conf_{\beta+1}, \loc, \theta)$, and therefore $\proc \in S_{\loc',\loc}^\theta$. 
The reverse is also true.
Therefore, $\Procs_{\theta,\loc}^{\beta+1} =
\memG(\conf_0 \dots \conf_\beta, \loc, \theta)
\cup \left( \bigcup_{\loc' \neq \loc} S_{\loc', \loc}^\theta \right)
\setminus \left( \bigcup_{\loc' \neq \loc} S_{\loc, \loc'}^\theta \right)
= \memG(\conf_0 \dots \conf_\beta \upd_{\beta+1} \conf_{\beta+1}, \loc, \theta)$.

Furthermore, it is easy to see that $\conf_\beta(\loc,\theta) = |\memG(\conf_0 \dots \conf_\beta, \loc, \theta)|$ for all $\beta, \loc, \theta$.
Therefore, as in the other direction, we have that $\conf_\beta(\loc,\theta) = |\Procs_{\theta,\loc}^\beta|$, which in turn gives us that $\play$ is winning iff $w(\play)$ is winning.

Now suppose there is a winning strategy $\strat_\game$ in $\game$.
We define a normalized strategy $\strat$ as $\strat(\varepsilon) = w(\conf_0 \upd \conf_1)$ with $\upd = \strat_\game(\conf_0)$ and $\conf_1 = \upd(\conf_0)$, and for all $w$ we define $\strat(w) = w'$ if there is a play $\play$ ending in $\conf$ such that $w = w(\play)$, $\strat_\game(\play) = \upd$ is defined and $w(\play \upd \conf') = w w'$ where $\conf' = \upd(\conf)$. 
In all other cases, $\strat(w)$ is undefined.
The proof that $\strat$ is a winning strategy is the same as the other direction, with the roles of $\strat$ and $\strat_\game$ as well as $w$ and $\play$ swapped, since the definitions of compatibility and maximality are the same for the normalized synthesis and the parameterized vector games.
\qed
\end{proof}

\begin{lemma}\label{lemma:equifogame2}
For every parameterized vector game $\game$, there is a sentence $\varphi \in \FOdata$
such that $\Win{}{}{\game} = \Win{}{}{\varphi}$.
\end{lemma}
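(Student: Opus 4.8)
The plan is to read the proof of Lemma~\ref{lemma:equifogame1} backwards. That proof provides an explicit recipe turning any $\FOdata$ sentence already in the normal form of Theorem~\ref{thm:normalform} into a parameterized vector game with the same (normalized) winning set; so to prove the converse it suffices to display, for a given game, a sentence that this very recipe would produce. Concretely, given $\game=(\Alpha,\bound,\Acc)$ with $\Loc=\{0,\ldots,\bound\}^\Alpha$ and $\Acc=\{\accfunction_1,\ldots,\accfunction_n\}$, I would write each component of $\accfunction_i(\loc)$ as ${\bowtie^{i,\loc}_\theta}\,n^{i,\loc}_\theta$ with $\bowtie^{i,\loc}_\theta\in\{{=},{\ge}\}$ and $n^{i,\loc}_\theta\in\N$, and set
\[
\varphi ~\df~ \bigvee_{i=1}^{n}\;\bigwedge_{\loc\in\Loc}\;\bigwedge_{\theta\in\Types}\;\exists^{\bowtie^{i,\loc}_\theta n^{i,\loc}_\theta}y.\bigl(\theta(y)\wedge\ctype{\bound,\loc}(y)\bigr)
\]
(with $\varphi\df\ffalse$ if $n=0$). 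Since the only binary symbol occurring in $\varphi$ is $\sim$ (hidden inside $\ctype{\bound,\loc}$), we have $\varphi\in\FOdata$, and $\varphi$ is visibly of the syntactic shape assumed at the start of the proof of Lemma~\ref{lemma:equifogame1}.

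Next I would check that the game produced from $\varphi$ by that proof is literally $\game$. Within each disjunct of $\varphi$ every pair $(\theta,\loc)\in\Types\times\Loc$ occurs in exactly one conjunct, so the side condition ``each pair occurs at most once'' used in the proof of Lemma~\ref{lemma:equifogame1} holds; moreover no pair is ever left out, so the ``${\ge}0$ by default'' clause of the translation is never triggered and each resulting $\accfunction_i$ is totally specified on $\Loc$ with $\theta$-component exactly ${\bowtie^{i,\loc}_\theta}\,n^{i,\loc}_\theta$. Hence the game built from $\varphi$ equals $\game$, and the proof of Lemma~\ref{lemma:equifogame1} gives $\NWin{}{}{\varphi}=\Win{}{}{\game}$. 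Combining this with Lemma~\ref{lem:normalized} ($\Win{}{}{\varphi}=\NWin{}{}{\varphi}$) yields $\Win{}{}{\varphi}=\Win{}{}{\game}$, and the construction of $\varphi$ from $\game$ is plainly effective.

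There is no genuinely hard step: the substance lives in Lemma~\ref{lemma:equifogame1} and Lemma~\ref{lem:normalized}, both already available. The one point deserving a line of care is the verbatim-equality claim of the second step --- that the normal-form-to-game map returns $\game$ on the nose rather than merely an equivalent game --- which is precisely why I include a conjunct for every pair $(\theta,\loc)$, even the trivially true ones in which $\bowtie^{i,\loc}_\theta n^{i,\loc}_\theta$ is ${\ge}0$. Should one wish to avoid appealing to the internal statement inside the proof of Lemma~\ref{lemma:equifogame1}, the only real alternative is to redo its two strategy-translation arguments (normalized execution $\leftrightarrow$ play, in both directions) with the roles of the two models swapped; this is routine because both reductions are already symmetric, but it is longer, so I would take the short route through the established lemmas.
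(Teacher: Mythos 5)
Your construction of $\varphi$ is exactly the paper's: a disjunction over the acceptance tuples $\accfunction_i$ of conjuncts $\exists^{\bowtie^{i,\loc}_\theta n^{i,\loc}_\theta} y.\bigl(\theta(y)\wedge\ctype{\bound,\loc}(y)\bigr)$, one for every pair $(\theta,\loc)$, so this is essentially the same proof. The paper discharges the equality $\Win{}{}{\game}=\Win{}{}{\varphi}$ by saying the argument is ``similar to'' that of Lemma~\ref{lemma:equifogame1}, whereas your round-trip observation --- that the formula-to-game translation of that proof, applied to this $\varphi$ (which satisfies its normal-form and ``each pair at most once'' assumptions), returns $\game$ verbatim, so its already-established equivalence together with Lemma~\ref{lem:normalized} yields the claim --- is a legitimate and slightly tidier way to make that step precise.
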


\begin{proof}

Let $\G = (\Alpha, \bound, \Acc)$ be a parameterized vector game, and let $\Acc = \{\accfunction_i \mid 1 \le i \le n\}$.
As usual, let $\Loc = \{0 \dots \bound\}^\Alpha$.
For all $\loc \in \Loc$ and $1 \le i \le n$, if $\accfunction_i(\loc) = (\bowtie_\sys^i n_\sys^i, \bowtie_\env^i n_\env^i, \bowtie_\both^i n_\both^i)$, then for all $\theta \in \Types$ we let:
\[
\varphi_{i, \loc, \theta} = \exists^{\bowtie_\theta^i n_\theta^i} y. (\theta(y) \land \ctype{\bound,\loc}(y))
\]
which is a $\FOdata$ formula and then we define:
\[
\varphi = \bigvee_{i = 1}^n\; \bigwedge_{\substack{\loc \in \Loc\\\theta \in \Types}} \varphi_{i, \loc, \theta}
\]

The proof that $\Win{}{}{\game} = \Win{}{}{\varphi}$ is similar to the one from Lemma~\ref{lemma:equifogame1}.
\end{proof}

\section{Proof of Lemma~\ref{lem:cutoffgameNN0} (no cutoff for {\normalfont{$\Gameproblem{\N}{\N}{0})$}}}
\label{app:cutoffgameNN0}
Starting from configuration $\conf_0 = (k_\sys, k_\env, 0)$, System has a winning strategy if $k_\sys \ge k_\env$: first send one System process from $\loc_0$ to $\loclet{a}$ (to satisfy $\accfunction_1$), wait until Environment sends one token from $\loc_0$ to $\loclet{b}$ (but not more than one otherwise $\accfunction_2$ would be satisfied), then send the first process to $\loclet{a^2}$ (to satisfy $\accfunction_3$), and wait until Environment does the same to $\loclet{b^2}$ (the only way to falsify $\accfunction_3$). 
Then repeat from the beginning until Environment has all her tokens in $\loclet{b^2}$, which will happen since one process of each type is moved from $\loc_0$ at each time and because we supposed that there are more System processes than Environment processes.
Finally, send all the remaining System tokens in $\loc_0$ directly to $\loclet{a^2}$, which satisfies $\accfunction_4$ and Environment has no more possible move.

It is also easy to see that the strategy described here is the only possible winning strategy.
Therefore, if $k_\env > k_\sys$, a configuration will occur with 0 System processes and at least 1 Environment process in $\loc_0$, which is losing for System.
Note that this game could easily be adapted to give a game where System wins when she has at least $k$ times the number of processes of Environment, for any $k \in \N$.

\section{Proof of Theorem~\ref{thm:game00N} ({\normalfont{$\Gameproblem{\Zero}{\Zero}{\N}$}} is undecidable)}
\label{app:undec-00N}

\fbox{$\Rightarrow$}
Suppose there is an accepting run $\rho : (\tcminit, 0, 0) \vdash_{\tcmt_1} (\tcmq_1, \nu_1^1, \nu_2^1) \vdash_{\tcmt_2} \dots \vdash_{\tcmt_n} (\tcmq_n, \nu_1^n, \nu_2^n)$ for $\tcm$, and fix some $k \ge 3n+1$.
Without loss of generality, we can assume that the configurations $\gamma_0, \dots, \gamma_n$ visited in $\rho$ are pairwise different.
The positional strategy $\strat$ for System that faithfully simulates $\rho$ is formally defined as follows.
In the following, a transition $(q,\op,q') \in \tcmT$ is written $q \xrightarrow{\op} q'$.
\paragraph{Initialization.}
Let $\conf_0$ be the initial $\game$-configuration.

\begin{itemize}
\item If $\tcmt_1 = \tcminit \xrightarrow{\tcmcounter_i++} \tcmq_1$, we let $\upd_1$ be defined by $\upd_1(\loc_0,\loclet{\tcmq_0})=1$, $\upd_1(\loc_0,\loclet{\tcmt_1})=1$,
$\upd_1(\loc_0,\loclet{a_i})=1$, and $\upd_1(\loc,\loc')=0$ for all other $\loc,\loc'\in\Loc$.

\item If $\tcmt_1 = \tcminit \xrightarrow{\tcmcounter_i == 0} \tcmq_1$, we let $\upd_1$ be defined by $\upd_1(\loc_0,\loclet{\tcmq_0})=1$, $\upd_1(\loc_0,\loclet{\tcmt_1})=1$,
and $\upd_1(\loc,\loc')=0$ for all other $\loc,\loc'\in\Loc$.
\end{itemize}

We then let $f(\conf_0)=\upd_1$.

\paragraph{Simulation of a new transition.}
For $0< j<n$, for any $k$-configuration $\conf\in \confset((\tcmq_j,\nu^j_1,\nu^j_2))$, we let $f(\conf)=\upd_{j+1}$, with $\upd_{j+1}$ defined as follows.
\begin{itemize}
\item If $\tcmt_{j+1} = \tcmq_j \xrightarrow{\tcmcounter_i++} \tcmq_{j+1}$, then $\upd_{j+1}(\loc_0,\loclet{\tcmt_{j+1}})=1$, $\upd_{j+1}(\loc_0, \loclet{a_i})=1$, and $\upd_{j+1}(\loc,\loc')=0$
for all other $\loc,\loc'\in\Loc$.
\item If $\tcmt_{j+1} = \tcmq_j \xrightarrow{\tcmcounter_i--} \tcmq_{j+1}$, then $\upd_{j+1}(\loc_0, \loclet{\tcmt_{j+1}})= 1$, $\upd_{j+1}(\loclet{a_i^2 b^2}, \loclet{a_i^3 b^2}) =1$, and $\upd_{j+1}(\loc,\loc')=0$
for all other $\loc,\loc'\in\Loc$.

\item If $\tcmt_{j+1} = \tcmq_j \xrightarrow{\tcmcounter_i==0} \tcmq_{j+1}$, then $\upd_{j+1}(\loc_0, \loclet{\tcmt_{j+1}}) = 1$, and $\upd_{j+1}(\loc,\loc')=0$
for all other $\loc,\loc'\in\Loc$.
\end{itemize}
Note that if $\conf \in \confset((\tcmq_j,\nu^j_1,\nu^j_2))$ then $\conf \notin \confset((\tcmq_{j'},\nu^{j'}_1,\nu^{j'}_2))$ for $j' \neq j$ as we assumed that $\gamma_j \neq \gamma_{j'}$, therefore $\strat(\conf)$ is well-defined.

\paragraph{Second step of the simulation of a transition.}
Let $\conf$ be a $k$-configuration such that $\conf(\loclet{\tcmq b})=1$ for some $\tcmq\in\tcmQ$, $\conf(\loclet{\tcmt b})=1$ for some $\tcmt\in\tcmT$, $\conf(\loclet{a_i^2b^2})\geq 0$, 
$\conf(\loclet{a_i^4b^4})\geq 0$ for $i=1,2$, 
$\conf(\loclet{\tcmt^2b^2})\geq 0$ for all $\tcmt\in\tcmT$, $\conf(\loclet{\tcmq^2b^2})\geq 0$ for all $\tcmq\in\tcmQ$, $\conf(\loc_0)>0$
and $\conf(\loc)=0$ for all other $\loc\in\Loc$. We define $f(\conf)=\upd$ with $\upd$ defined as follows.
\begin{itemize}
\item If $\tcmt=\tcmq\xrightarrow{c_i++} \tcmq'$  and $\conf$ is such that $\conf(\loclet{a_ib})=1$. Then  $\upd(\loclet{\tcmq b},\loclet{\tcmq^2b})=1$, 
$\upd(\loclet{\tcmt b}, \loclet{\tcmt^2b})=1$, 
$\upd(\loclet{a_ib},\loclet{a_i^2b})=1$, 
$\upd(\loc_0,\loclet{q'})=1$ and $\upd(\loc,\loc')=0$ for all other $\loc,\loc'\in\Loc$.

\item If $\tcmt=\tcmq\xrightarrow{c_i--} \tcmq'$ $\conf$ is such that $\conf(\loclet{a_i^3b^3})=1$, then $\upd$ is defined by $\upd(\loclet{\tcmq b}, \loclet{\tcmq^2b})=1$, $\upd(\loclet{\tcmt b}, \loclet{\tcmt^2b})=1$, $\upd(\loclet{a_i^3b^3},\loclet{a_i^4b^3})=1$, 
$\upd(\loc_0,\loclet{q'})=1$ and $\upd(\loc,\loc')=0$ for all other $\loc,\loc'\in\Loc$.

\item If $\tcmt =\tcmq \xrightarrow{\tcmcounter_i==0} \tcmq'$, $\upd(\loclet{\tcmq b}, \loclet{\tcmq^2 b})=1$, $\upd(\loclet{\tcmt b}, \loclet{\tcmt^2 b})=1$, 
$\upd(\loc_0, \loclet{\tcmq'}) = 1$ and $\upd(\loc,\loc')=0$ for all other $\loc,\loc'\in\Loc$.
\end{itemize}

If $\play$ is a partial play ending in $\conf \in \confset(\gamma_n)$, then $\strat(\conf)$ is the update function $\upd$ such that $\upd(\loclet{\tcmq_n}, \loclet{\tcmq_n^2}) =1$ and $\upd(\loc,\loc')=0$ for all other $\loc,\loc'\in\Loc$.

And for any other configuration, $\strat$ is undefined.

We show that $\strat$ is winning by contradiction: suppose there is a winning strategy $\strat_\env$ for Environment. 
Let $\pi = \conf_0\upd'_0 \conf'_0 \upd_1\conf_1 \dots$ be the maximal play compatible with $\strat$ and $\strat_\env$. We show the following by recursion: 
\begin{lemma}
For all $0 < j \le n$, $\conf_{2j} \in \confset(\gamma_j)$ and $\conf_{2j}(\loc_0) \ge k - (3j+1)$
\end{lemma}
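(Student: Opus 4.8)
The plan is to prove the lemma by induction on $j$, tracking the maximal play $\pi = \conf_0\upd'_0 \conf'_0 \upd_1\conf_1 \dots$ compatible with $\strat$ and $\strat_\env$ one move at a time. The guiding observation is that, since $\strat_\env$ is assumed winning for Environment, $\pi$ is \emph{not} winning for System; hence, whenever the play reaches a configuration $\conf'$ produced by a move of System with $\conf' \models \Acc$, Environment cannot be stuck there (otherwise the maximal play would end winning for System) and must move to some $\conf'' \not\models \Acc$. I would then show, using the shape of $\Acc$ from Table~\ref{tab:game00N}, that the \emph{only} configuration reachable from $\conf'$ by one environment transition that fails $\Acc$ is the one prescribed by the simulation; combined with the fact that $\strat$ is positional and defined exactly on the configurations that occur along a faithful simulation of $\rho$, this pins down $\pi$ as that simulation.

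For the base case $j = 1$: from $\conf_0 = \conf_{(0,0,k)}$, System plays $\strat(\conf_0) = \upd_1$ of the Initialization clause, reaching $\conf'_0$ with $\conf'_0 \models \accCond{\tcmt_1}$ (Condition~(b); $\tcmt_1$ is an increment or a zero-test, since both counters are $0$ in $\gamma_0$), where at most three tokens have left $\loc_0$ (to $\loclet{\tcminit}$, $\loclet{\tcmt_1}$, and possibly $\loclet{a_i}$). I would then argue that Environment's only legal move is to the intended configuration $\conf_1$: any other environment transition either places a token in a location of $\Loc_{\sys < \env}$ (caught by Condition~(d)), or produces one of the partial configurations listed in Condition~(e), or otherwise lands in an accepting configuration by the design of $\Acc$ — in all these cases the target satisfies $\Acc$ and is not a legal environment move. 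System then plays $\strat(\conf_1)$ of the ``Second step'' clause, reaching $\conf'_1 \models \accCond{(\tcminit,\tcmt_1,\tcmq_1)}$ (Condition~(c)) using at most one more token from $\loc_0$ (for $\loclet{\tcmq_1}$), and the analogous forcing argument (via Conditions~(f) and (d)) makes Environment move to the intended $\conf_2$, which by construction lies in $\confset(\gamma_1)$. In total at most $3\cdot 1 + 1 = 4$ tokens have left $\loc_0$, so $\conf_2(\loc_0) \ge k - 4$. Well-definedness of $\strat$ at every step uses that $\gamma_0,\dots,\gamma_n$ are pairwise distinct.

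For the inductive step, suppose $1 \le j < n$, $\conf_{2j} \in \confset(\gamma_j)$ and $\conf_{2j}(\loc_0) \ge k - (3j+1)$. Since $\gamma_j$ is not halting, no configuration of $\confset(\gamma_j)$ satisfies $\Acc$ (its only occupied location outside $\Locok$ is $\loclet{\tcmq_j}$, which meets no clause of $\Acc$), so System must move, and $\strat(\conf_{2j}) = \upd_{j+1}$ is defined because $\conf_{2j}(\loc_0) \ge k - (3j+1) \ge k - 3n + 2 \ge 3$, using $k \ge 3n+1$ and $j \le n-1$. I would then repeat the two-move forcing argument with the clauses $\accCond{(\tcmq_j,\tcmt_{j+1})}$, $\accConde{(\tcmq_j,\tcmt_{j+1})}$, $\accCond{(\tcmq_j,\tcmt_{j+1},\tcmq_{j+1})}$, $\accConde{(\tcmq_j,\tcmt_{j+1},\tcmq_{j+1})}$ of Conditions~(a),(e),(c),(f): System reaches $\conf'_{2j} \models \accCond{(\tcmq_j,\tcmt_{j+1})}$, Environment is forced to the intended $\conf_{2j+1}$, System reaches $\conf'_{2j+1} \models \accCond{(\tcmq_j,\tcmt_{j+1},\tcmq_{j+1})}$, and Environment is forced to $\conf_{2j+2} \in \confset(\gamma_{j+1})$. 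Over these two System moves at most three tokens leave $\loc_0$ (one for $\loclet{\tcmt_{j+1}}$, one for $\loclet{a_i}$ if $\tcmt_{j+1}$ is an increment — a decrement consumes a token from $\loclet{a_i^2 b^2}$ instead — and one for $\loclet{\tcmq_{j+1}}$), so $\conf_{2j+2}(\loc_0) \ge \conf_{2j}(\loc_0) - 3 \ge k - (3(j+1)+1)$.

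The hard part will be the ``forcing'' claim invoked at each of the two environment moves per round, namely that every environment transition out of the current System-configuration other than the intended one reaches a configuration satisfying $\Acc$; this is a finite but tedious case distinction on the possible environment transitions, checked against Conditions~(d)–(f) and the residual ``all other reachable configurations are winning for System'' clauses, and it is exactly what the bookkeeping in Table~\ref{tab:game00N} was set up to support. With the lemma established, the $\Rightarrow$ direction of Theorem~\ref{thm:game00N} follows immediately: for $j = n$ the play reaches $\conf_{2n} \in \confset(\gamma_n)$ with $\gamma_n \in F$, System plays the move sending the token in $\loclet{\tcmfinal}$ to $\loclet{\tcmfinal^2}$, reaching a configuration in $\accCond{F}$, and from there every environment move stays in $\Acc$ (by the three patterns of $\accCond{F}$ and Condition~(d)); hence the maximal play $\pi$ is winning for System, contradicting that $\strat_\env$ was winning for Environment.
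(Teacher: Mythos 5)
Your proposal follows essentially the same route as the paper's own proof: induction on $j$ along the maximal play compatible with $\strat$ and $\strat_\env$, using that Environment cannot stay in an accepting configuration (else the maximal play is winning for System) together with the forcing argument that every deviating environment transition lands in $\Acc$ via conditions (d)--(f), well-definedness of $\strat$ from the pairwise-distinct $\gamma_j$, and the same token count of at most $3$ (resp.\ $4$ in the base case) departures from $\loc_0$ per simulated transition, yielding $\conf_{2j}(\loc_0) \ge k-(3j+1)$. The one part you defer --- the exhaustive check, transition type by transition type, that each off-script environment move satisfies some clause of $\Acc$ --- is exactly the case analysis that makes up the bulk of the paper's proof, and it goes through as you describe.
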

Intuitively, this lemma states that $\play$ correctly simulates $\rho$ and that there are always enough process in $\loc_0$ for System to do his transitions. 
\begin{proof}
We prove it by induction on $j$.

\underline{Base step ($\conf_2 \in \confset(\gamma_1)$):}
$\conf_0$ is the initial ($k$-)configuration. Suppose that $\tcmt_1 = \tcminit \xrightarrow{\tcmcounter_1++} \tcmq_1$, then by definition of $\strat$, 
$\conf'_0(\loclet{\tcminit})=1$, $\conf'_0(\loclet{\tcmt_1})=1$, $\conf'_0(\loclet{a_1})=1$, $\conf'(\loc_0)=k-3> 1$, and $\conf'_0(\loc)=0$ for all other $\loc\in\Loc$. 
Since $\conf'_0\models \accCond{\tcmt_1}$, $\strat_\env(\conf'_0)=\upd_1$ is defined, otherwise it is not winning. Then,
\begin{itemize}
\item If $\upd_1(\loc_0,\loclet{b^m})\geq 1$ for $m\geq 1$, then $\conf_1\models\accCond{\loclet{b^m}}$. 
\item If $\upd_1(\loclet{\tcmq_0},\loclet{\tcmq_0b})=0$, or if $\upd_1(\loclet{\tcmt_1}, \loclet{\tcmt_1b})=0$ or if $\upd_1(\loclet{a_1}, \loclet{a_1b})=0$, then 
$\conf_1\models\accCond{(\tcmq_0,\tcmt_1,\env)}$.
\item If $\upd_1(\loclet{\tcmq_0},\loclet{\tcmq_0b^m})=1$, (respectively if $\upd_1(\loclet{\tcmt_1}, \loclet{\tcmt_1b^m})=1$ or if $\upd_1(\loclet{a_1},\\ \loclet{a_1b^m})=1$, 
for $m>1$), then $\conf_1\models\accCond{\loclet{\tcmq_0b^m}}$ (respectively $\conf_1\models\accCond{\loclet{\tcmt_1b^m}}$, $\conf_1\models\accCond{\loclet{a_1b^m}}$).
\end{itemize}
Hence $\upd_1(\loclet{\tcminit}, \loclet{\tcminit b})=\upd_1(\loclet{\tcmt_1},\loclet{\tcmt_1b})=\upd_1(\loclet{a_1},\loclet{a_1b})=1$, and for all other $\loc\in\Loc$,
$\upd_1(\loc_0,\loc)=0$ and
$\conf_1(\loclet{\tcminit b}) =\conf_1( \loclet{\tcmt_1 b})=\conf_1(\loclet{a_1 b})= 1$, $\conf_1(\loc_0) =k-3$ and $\conf_1(\loc)=0$ for all other $\loc\in\Loc$.

Following the definition of $\strat$, $\conf'_1(\loclet{\tcminit^2 b})=\conf'_1(\loclet{\tcmt_1^2 b})=\conf'_1(\loclet{a_1^2b})=\conf'_1(\loclet{\tcmq_1})\\ =1$, 
$\conf'_1(\loc_0)=k-4> 0$, and $\conf'_1(\loc)=0$ for all other $\loc\in\Loc$. 
Since, $\conf'_1\models\accCond{\tcminit, \tcmt_1,\tcmq_1}$, $\strat_\env=\upd_2$ is defined.

Again we look at all possible transitions for Environment:
\begin{itemize}
\item As before, if $\upd_2(\loc_0,\loclet{b^m})\geq 1$ for $m\geq 1$, then $\conf_2\models\accCond{\loclet{b^m}}$. 
\item If $\upd_2(\loclet{\tcmq_0^2b},\loclet{\tcmq_0^2b^2})=0$, or if $\upd_2(\loclet{\tcmt_1^2b}, \loclet{\tcmt_1^2b^2})=0$ or if $\upd_2(\loclet{a_1^2b}, \loclet{a_1^2b^2})=0$, then 
$\conf_2\models\accCond{(\tcmq_0,\tcmt_1,\tcmq_1,\env)}$.
\item If $\upd_2(\loclet{\tcmq_0^2b},\loclet{\tcmq_0^2b^m})=1$, (respectively if $\upd_2(\loclet{\tcmt_1^2b}, \loclet{\tcmt_1^2b^m})=1$ or if $\upd_2(\loclet{a_1^2b},\\ \loclet{a_1^2b^m})=1$, 
for $m>2$), then $\conf_2\models\accCond{\loclet{\tcmq_0b^m}}$ (respectively $\conf_2\models\accCond{\loclet{\tcmt_1b^m}}$, $\conf_2\models\accCond{\loclet{a_1b^m}}$).
\item Finally, if $\upd_2(\loclet{\tcmq_1},\loclet{\tcmq_1b})=1$ then $\conf_2\models\accCond{(\tcminit,\tcmt_1,\tcmq_1,\env)}$ and if $\upd_2(\loclet{\tcmq_1},\loclet{\tcmq_1b^m})\\=1$, for $m>2$,
then $\conf_2\models\accCond{\loclet{\tcmq_1b^m}}$.
\end{itemize}
Thus, necessarily, $\conf_2(\loclet{\tcminit^2 b^2})=\conf_2(\loclet{\tcmt_1^2 b^2})=\conf_2(\loclet{a_1^2 b^2})=\conf_2(\loclet{\tcmq_1})=1$, $\conf_2(\loc_0)=k-4$, and
$\conf_2(\loc')=0$ for all other $\loc'\in\Loc$. Hence, $\conf_2\in \confset(\tcmq_1, 1, 0)$, and $\conf_2(\loc_0) \ge k-(3*1+1) = k-4$.

If $\tcmt_1=\tcminit\xrightarrow{c_2++}\tcmq_1$, the proof is identical, but with $a_2$ replacing $a_1$. If now 
$\tcmt_1=\tcminit\xrightarrow{c_i==0}\tcmq_1$, the proof goes along the same lines, without difficulty.

\underline{Induction step:}
Let $0< j<n$ and $\gamma_j = (\tcmq_j, \nu_1^j, \nu_2^j)$, and suppose that $\conf_{2j} \in \confset(\gamma_j)$ and $\conf_{2j}(\loc_0) \ge k - (3j + 1)\geq 3$. 
There are six cases depending on the type of $\tcmt_{j+1}$. Without loss of generality, we consider here only the three cases involving $\tcmcounter_1$.

If $\tcmt_{j+1} = \tcmq_j \xrightarrow{\tcmcounter_1++} \tcmq_{j+1}$ then $\gamma_{j+1} = (\tcmq_{j+1}, \nu_1^j+1, \nu_2^j)$. Following $\strat$, 
we obtain  that $\conf_{2j}'(\loclet{\tcmt_{j+1}})=1$, $\conf_{2j}'(\loclet{a_1})=1$, $\conf_{2j}'(\loc_0)=\conf_{2j}(\loc_0)-2$
and $\conf_{2j}'(\loc)=(\conf_{2j})(\loc)$ for all other $\loc\in\Loc$. 
With the same arguments as in the base case, the only possibility is that $\strat_\env(\conf_{2j}')=\upd_{2j+1}$ with 
$\upd_{2j+1}(\loclet{\tcmq_j}, \loclet{\tcmq_j b}) =1$, $\upd_{2j+1}(\loclet{\tcmt_{j+1}},\\ \loclet{\tcmt_{j+1} b})=1$, $\upd_{2j+1}(\loclet{a_1}, \loclet{a_1 b}) =1$ and $\upd_{2j+1}(\loc,\loc')=0$ for all other $\loc,\loc'\in\Loc$, yielding the configuration $\conf_{2j+1}(\loclet{\tcmq_j b})=1$, $\conf_{2j+1}(\loclet{a_1 b})=1$,
$\conf_{2j+1}(\loclet{\tcmt_{j+1} b})=1$, $\conf_{2j+1}(\loclet{\tcmq_j})=\conf_{2j+1}(\loclet{a_1})=\conf_{2j+1}(\loclet{\tcmt_{j+1}})=0$ and $\conf_{2j+1}(\loc)=
\conf'_{2j}(\loc)$ for all other $\loc\in\Loc$.
By definition of $\strat$, the action of System leads to $\conf_{2j+1}'$ defined by $\conf_{2j+1}'(\loclet{\tcmq_j^2 b})=\conf_{2j+1}'(\loclet{\tcmt_{j+1}^2 b})=\conf_{2j+1}'(\loclet{a_1^2 b}=\conf_{2j+1}'(\loclet{\tcmq_{j+1}})=1$, $\conf_{2j+1}'(\loclet{\tcmq_j b})=\conf_{2j+1}'(\loclet{\tcmt_{j+1} b})=\conf_{2j+1}'(\loclet{a_1 b}=0$,
$\conf_{2j+1}'(\loc_0)=\conf_{2j+1}(\loc_0)-1=\conf_{2j}(\loc_0)-3$, and $\conf_{2j+1}'(\loc)=0$ for all other $\loc\in\Loc$.
Finally, again as in the base case, we necessarily have $\strat_\env=\upd_{2j+2}$ with $\upd_{2j+2}(\loclet{\tcmq_j^2 b}, \loclet{\tcmq_j^2 b^2}) =1$, 
$\upd_{2j+2}(\loclet{\tcmt_{j+1}^2 b}, \loclet{\tcmt_{j+1}^2 b^2}) =1$, $\upd_{2j+2}(\loclet{a_1^2 b}, \loclet{a_1^2 b^2}) = 1$.
Hence, $\conf_{2j+2}(\loclet{\tcmq_j^2 b^2})=\conf_{2j}(\loclet{\tcmq_j^2 b^2})+1$, $\conf_{2j+2}(\loclet{\tcmt_j^2 b^2}) =\conf_{2j}(\loclet{\tcmt_j^2 b^2})+1$, $\conf_{2j+2}(\loclet{a_1^2 b^2}) =\conf_{2j}(\loclet{a_1^2 b^2})+1$, $\conf_{2j+2}(\loclet{\tcmq_{j+1}}) =1$, 
$\conf_{2j+2}(\loc_0)=\conf_{2j}(\loc_0)-3$ and $\conf_{2j+1}(\loc)=\conf_{2j}(\loc)$ for all other $\loc\in\Loc$. Since $\conf_{2j} \in \confset(\gamma_j)$ and $\gamma_{j+1} = (\tcmq_{j+1}, \nu_1^j + 1, \nu_2^j)$ it is easy to verify that $\conf_{2j+2}$ is indeed in $\confset(\gamma_{j+1})$. Moreover, $\conf_{2j+2}(\loc_0) = \conf_{2j}(\loc_0) - 3 \ge k - (3(j+1) + 1)$. 

If $\tcmt_{j+1} = \tcmq_j \xrightarrow{\tcmcounter_1--} \tcmq_{j+1}$, then we know that $\nu_1^j \ge 1$. Since $\conf_{2j} \in \confset(\gamma_j)$, we deduce that $\conf_{2j}(\loclet{a_1^2 b^2}) \ge 1$.
Following $\strat$, $\conf'_{2j}(\loclet{\tcmt_{j+1}})=\conf'_{2j}(\loclet{a_1^3b^2})=1$, $\conf'_{2j}(\loc_0)=\conf_{2j}(\loc_0)-1$, $\conf'_{2j}(\loclet{a_1^2b^2})=\conf_{2j}(\loclet{a_1^2b^2})-1$ and $\conf'_{2j}(\loc)=\conf_{2j}(\loc)$ for all other $\loc\in\Loc$. To avoid reaching configurations in $\accCond{\loc}$ for some $\loc\in\Loc$, or in $\accCond{(\tcmq_j,\tcmt_{j+1},\env)}$, 
Environment necessarily updates the configuration to $\conf_{2j+1}(\loclet{\tcmt_{j+1}b})\\=\conf_{2j+1}(\loclet{a_1^3b^3})=\conf_{2j+1}(\loclet{\tcmq_{j}b})=1$, and $\conf_{2j+1}(\loc)=
\conf'_{2j}(\loc)$ for all other $\loc\in\Loc$. Again, the strategy defined for System leads to the configuration $\conf'_{2j+1}(\loclet{\tcmq_j^2b})=\conf'_{2j+1}(\loclet{\tcmt_{j+1}^2b})=
\conf'_{2j+1}(\loclet{a_1^4b^3})=\conf'_{2j+1}(\loclet{\tcmq_{j+1}})=1$, $\conf'_{2j+1}(\loclet{\tcmt_{j+1}b})=\conf'_{2j+1}\\(\loclet{a_1^3b^3})=\conf'_{2j+1}(\loclet{\tcmq_{j}b})=0$,
$\conf'_{2j+1}(\loc_0)=\conf_{2j+1}(\loc_0)-1$, and $\conf'_{2j+1}(\loc)=\conf_{2j+1}(\loc)$ for all other $\loc\in\Loc$. Finally, the only possible move for Environment is 
$\conf_{2j+2}(\loclet{\tcmq_j^2b^2})=\conf'_{2j+1}(\loclet{\tcmq_j^2b^2})+1$, $\conf_{2j+2}(\loclet{\tcmt_{j+1}^2b^2})=\conf'_{2j+1}(\loclet{\tcmt_{j+1}^2b^2})+1$, $\conf_{2j+2}(\loclet{a_1^4b^4})=\conf'_{2j+1}(\loclet{a_1^4b^4})+1$, $\conf_{2j+2}(\loclet{\tcmq_j^2b})=\conf_{2j+2}(\loclet{\tcmt_{j+1}^2b})=
\conf_{2j+2}(\\\loclet{a_1^4b^3})=0$ and $\conf_{2j+2}(\loc)=\conf'_{2j+1}(\loc)$ for all other $\loc\in\Loc$. From this, we deduce that 
$\conf_{2j+2}(\loclet{\tcmq_{j+1}})=\conf'_{2j}(\loclet{\tcmq_{j+1}})=1$, $\conf_{2j+2}(\loclet{a_2^2b^2})=\conf_{2j}(\loclet{a_2^2b^2})$, and $\conf_{2j+2}(\loclet{a_1^2b^1})=\conf_{2j}(\loclet{a_1^2b^2})-1$. Moreover, $\conf_{2j+2}(\loclet{\tcmq_j^2b^2})\geq 0$, $\conf_{2j+2}(\loclet{\tcmt_{j+1}^2b^2})\\\geq 0$, $\conf_{2j+2}(\loclet{a^4b^4})\geq 0$, $\conf_{2j+2}(\loc_0)=\conf_{2j}(\loc_0)-2\geq 0$ by induction hypothesis, and for all other 
$\loc\in\Loc$, $\conf_{2j+2}(\loc)=\conf_{2j}(\loc)$. Since $\conf_{2j}\in \confset(\gamma_j)$, this implies that $\conf_{2j+2}\in \confset(\gamma_{j+1})$, as expected. Also, 
$\conf_{2j+2}(\loc_0)\geq k-(3j+1)-2\geq k- (3(j+1)+1)$.

If $\tcmt_{j+1} = \tcmq_j \xrightarrow{\tcmcounter_1==0} \tcmq_{j+1}$, then $\nu_1^j = 0=\conf_{2j}(a_1^2 b^2)$. The proof goes along the same lines as before. Now the sequence of 
configurations is necessarily: $\conf'_{2j}(\loclet{\tcmt_{j+1}})=1$, $\conf'_{2j}(\loc_0)=\conf_{2j}(\loc_0)-1$, and $\conf'_{2j}(\loc)=\conf_{2j}(\loc)$ for all other $\loc\in\Loc$,
$\conf_{2j+1}(\loclet{\tcmt_{j+1}b})=\conf_{2j+1}(\loclet{\tcmq_jb})=1$, $\conf_{2j+1}(\loclet{\tcmt_{j+1}})=\conf_{2j+1}(\loclet{\tcmq_{j}})=0$, and $\conf_{2j+1}(\loc)=\conf'_{2j}(\loc)$
for all other $\loc\in\Loc$. Then we have $\conf'_{2j+1}(\loclet{\tcmq_{j+1}})=\conf'_{2j+1}(\loclet{\tcmt_{j+1}^2b})=\conf'_{2j+1}(\loclet{\tcmq_j^2b})=1$, $\conf'_{2j+1}(\loclet{\tcmt_{j+1}b})=
\conf'_{2j+1}(\loclet{\tcmq_{j}b})=0$, $\conf'_{2j+1}(\loc_0)=\conf_{2j+1}(\loc_0)-1$, and $\conf'_{2j+1}(\loc)=\conf_{2j+1}(\loc)$ for all other $\loc\in\Loc$. Finally, we have that $\conf_{2j+2}(\loclet{\tcmt_{j+1}^2b^2})=\conf'_{2j+1}(\loclet{\tcmt_{j+1}^2b^2})+1$, $\conf_{2j+2}(\loclet{\tcmq_j^2b^2})=\conf'_{2j+1}(\loclet{\tcmq_j^2b^2})+1$, $\conf_{2j+2}(\loclet{\tcmt_{j+1}^2b})=\conf_{2j+2}(\loclet{\tcmq_j^2b})=0$, and 
$\conf_{2j+2}(\loc)=\conf'_{2j+1}(\loc)$ for all other $\loc\in\Loc$. One can check that 
$\conf_{2j+1}(\loc_0)=\conf_{2j}(\loc_0)-2\geq k-(3j+1)-2\geq k- (3(j+1)+1)\geq 0$ and that $\conf_{2j+2}\in \confset(\gamma_{j+1})$.
\qed
\end{proof}

With this lemma, we know that $\conf_{2n}$ exists and $\conf_{2n} \in \confset(\gamma_n)$. By definition of $\strat$, $\conf_{2n}'(\loclet{\tcmq_n^2})=1$, $\conf_{2n}'(\loclet{\tcmq_n})=0$ and,
for all other $\loc\in\Loc$, $\conf_{2n}'(\loc)=\conf_{2n}(\loc)$.
But this time, there are no more possible transition for Environment:
\begin{itemize}
\item Moving the process in $\loclet{\tcmq_n^2}$ to $\loclet{\tcmq_n^2b^m}$ for some $m\geq 1$ leads to a configuration either in $\accCond{F}$ or in $\accCond{\loclet{\tcmq_n^2b^m}}$ 
if $m\geq 3$.
\item Moving any other process leads to a configuration in $\accCond{\loc}$ for some $\loc$ where $\left(\sum_{\alpha \in \Alpha_\sys} \loc(\alpha)\right) < \loc(b)$.
\end{itemize}
Therefore the play is winning for System and we get a contradiction, there is no winning strategy for Environment. Thus $\strat$ is a winning $k$-strategy for System. The same strategy $\strat$ also work for any $k' > k$, which completes the first direction of the proof.\\

\fbox{$\Leftarrow$}
Suppose that there is a constant $k\in\N$ and $\strat$ a winning $k$-strategy for System.

\begin{lemma}\label{lemma:00N-right-to-left}
For any $\strat$-compatible play $\play=\conf_0 \upd_0 \conf'_0 \upd'_0 \conf_1 \upd_1 \conf'_1 \upd'_1 \conf_2 \dots \upd_{2n} \conf_{2n}$,
there exists a run 
$\gamma_0\vdash_{t_1}\gamma_1\vdash_{t_2}\dots\gamma_n$ of  $\tcm$ such that $\conf_{2i}\in \confset(\gamma_i)$, for all $1\leq i\leq n$.
\end{lemma}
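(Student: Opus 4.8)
The plan is to prove, by induction on $j$ ranging over $\{0,1,\dots,n\}$, the strengthened statement that there are $\tcm$-configurations $\gamma_0,\dots,\gamma_j$ and transitions $t_1,\dots,t_j\in\tcmT$ with $\gamma_0=(\tcminit,0,0)$, $\gamma_{i-1}\vdash_{t_i}\gamma_i$ for all $1\le i\le j$, and $\conf_{2j}\in\confset(\gamma_j)$ whenever $j\ge 1$; moreover, each of the three $\game$-configurations lying strictly between $\conf_{2(j-1)}$ and $\conf_{2j}$ is forced to be exactly the intermediate configuration the construction prescribes for $t_j$ (so these are uniquely determined, and $\conf_{2(j-1)}$ carries a token on $\loclet{q_{j-1}^2 b^2}$ when $j-1\ge 1$). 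The engine of the proof is that, as $\pi$ is $\strat$-compatible, every System transition in $\pi$ is the value of $\strat$ on the preceding prefix, hence is applicable and leads to a configuration in $\Acc$ (part of the definition of a strategy), whereas every Environment transition leads to a configuration \emph{not} in $\Acc$ (part of the definition of a play); and a System transition can only add letters of $\Alpha_\sys$ while an Environment transition can only add the letter $b$. These constraints pin the moves of both players down move by move.

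For the base case $j=0$ one checks that $\conf_0=\conf_{(0,0,k)}$ satisfies no condition of $\Acc$, so $\strat$ must move, and that the only accepting configuration reachable from $\conf_0$ by adding $\Alpha_\sys$-letters and followed by a legal Environment reply lies in a family $\accCond{t}$ with $t=(\tcminit,\op,q')\in\tcmT$ (Condition (b) of Table~\ref{tab:game00N}); any other accepting System target (in particular the move pushing one token to $\loclet{\tcmfinal^2}$) leaves Environment with no legal reply, ending $\pi$ at once, which is impossible when $n\ge 1$. Hence $\pi$ begins by choosing an initial transition $t_1$, and $\gamma_0=(\tcminit,0,0)$. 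For the inductive step, assume $\conf_{2j}\in\confset(\gamma_j)$ with $\gamma_j=(q,\nu_1,\nu_2)$. First $\conf_{2j}\not\models\Acc$ (compare the encoding against every family of conditions of Table~\ref{tab:game00N} and against $\accCond F$), so $\strat$ moves; a case analysis of which members of $\Acc$ are reachable from $\conf_{2j}$ by a System move — the token on $\loclet{q}$, the garbage accumulated on $\Locok$-locations, and the unreachability of $b$-containing locations for System together eliminate all families except those of Condition (a), the only exception being the play-terminating move to $\loclet{\tcmfinal^2}$ when $q=\tcmfinal$, excluded by $j<n$ — shows the reached configuration lies in $\accCond{(q,t)}$ for some $t=(q,\op,q')\in\tcmT$, uniquely determined up to the choice of $t$; and when $\op=\decc{i}$ (resp. $\op=\zeroc{i}$), applicability of that move forces $\conf_{2j}(\loclet{a_i^2 b^2})=\nu_i\ge 1$ (resp. $\nu_i=0$), which is precisely legality of $t$ at $\gamma_j$. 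Next, Environment must leave $\Acc$, and comparing the intended response against $\accConde{(q,t)}$, against Condition (d) (the $\accCond{\loc}$ with $\loc\in\Loc_{\sys < \env}$), and against the remaining families shows its only legal move is to add one $b$ to each of the three relevant tokens. Symmetrically, System's next move is forced by Condition (c) together with $\accCond{(q,t,q')}$, and Environment's next move by $\accConde{(q,t,q')}$ together with Condition (f); the resulting configuration is then seen to lie in $\confset(\gamma_{j+1})$ where $\gamma_{j+1}$ is the $t$-successor of $\gamma_j$. Setting $t_{j+1}=t$ closes the induction, and taking $j=n$ yields the run.

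The main obstacle is the exhaustive, though elementary, case analysis at each of the four moves inside a macro-step: for every deviating move available to the mover one must exhibit the acceptance condition that rules it out — for System, a condition showing the target is \emph{not} in $\Acc$ (so it cannot be a value of $\strat$), and for Environment, a condition showing the target \emph{is} in $\Acc$ (so it is not a legal play step, a step having to leave $\Acc$). This is exactly the bookkeeping already carried out for the $\Rightarrow$ direction of Theorem~\ref{thm:game00N}, run backwards; the real effort lies in organizing it uniformly over the transition types of $\tcm$, which by the symmetry between the two counters reduces to the three cases $\op\in\{\incc{1},\decc{1},\zeroc{1}\}$.
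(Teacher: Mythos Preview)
Your proposal is correct and follows essentially the same approach as the paper: an induction on the number of macro-steps, with a four-move case analysis (System, Environment, System, Environment) inside each macro-step, using the structure of $\Acc$ to pin down every move. The paper carries out the same exhaustive check against the families $\accCond{\loc}$, $\accCond{t}$, $\accCond{(q,t)}$, $\accConde{(q,t)}$, $\accCond{(q,t,q')}$, $\accConde{(q,t,q')}$, $\accCond{F}$, concluding that the intermediate configurations are forced and that $\conf_{2(j+1)}\in\confset(\gamma_{j+1})$.

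One small point where you are actually more careful than the paper: you explicitly note that from $\conf_0$ (and from $\conf_{2j}$ when $q=\tcmfinal$) System could in principle reach the condition $[\loclet{\tcmfinal^2}=1,\Locok\ge 0]\in\accCond{F}$, and you rule this out using the assumed length of $\pi$ (the play continues, so Environment must have a legal reply). The paper's proof simply asserts that the only accepting configurations reachable by System from $\conf_0$ lie in $\accCond{t}$, glossing over this alternative; your treatment of it is the right fix. Similarly, your remark that the inductive hypothesis must record a token on some $\loclet{\hat q^2 b^2}$ (needed for System to hit $\accCond{(q,t)}$ at step $j\ge 1$) makes explicit something the paper leaves implicit when it writes ``$\conf_{2j}(\loclet{\tcminit^2 b^2})>0$''.
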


\begin{proof}
Let $\play=\conf_0 \upd_0 \conf'_0 \upd'_0 \conf_1 \upd_1 \conf'_1 \upd'_1 \conf_2 \dots \upd_{2n} \conf_{2n}$ be a $f$-compatible play, not necessarily maximal. 
From $\conf_0$, the only winning configurations reachable for System, without any past action of Environment are the ones in $\accCond{t}$ for $t\in \tcmT$ of the
form $\tcmt:\tcminit\xrightarrow{c_i++}\tcmq_1$ or $\tcmt:\tcminit\xrightarrow{c_i==0}\tcmq_1$. Let $\tcmt_1$ be the transition such that $\conf'_0\in\accCond{\tcmt_1}$.  For simplicity, assume that $\tcmt_1:\tcminit\xrightarrow{c_1++}\tcmq_1$, but the other cases are similar. From $\conf'_0$, there is only one configuration reachable by Environment which is not winning for System: the one where there is exactly
one process in locations $\loclet{\tcmt_1b}$, $\loclet{\tcminit b}$, $\loclet{a_1b}$. Now that the transition $\tcmt_1$ has been selected, the only winning 
configuration reachable by System is $\conf'_1$ such that $\conf'_1(\loclet{\tcmt_1^2b})=\conf'_1(\loclet{\tcminit^2b})=\conf'_1(\loclet{a_1^2b})=\conf'_1(\loclet{\tcmq_1})=1$,
$\conf'_1(\loc_0)\geq 0$, and $\conf'_1(\loc)=0$ for all other $\loc\in\Loc$. Indeed, all other winning configurations require moves of Environment to be reached, or require that Environment has never played. 
Now, the first accepting condition prevents Environment to play $b$ on any new process, or to play several $b$s on processes that have already played. 
Moreover, if she plays $b$ on the process already in the location $\loclet{\tcmq_1}$, the configuration reached is in $\accCond{(\tcminit, \tcmt_1,\tcmq_1,\env})$.
The
only possibility to leave the set of winning configurations is then to reach $\conf_2$ defined by $\conf_2(\loclet{\tcmq_1})=\conf_2(\loclet{a_1^2b^2})=
\conf_2(\loclet{\tcmt_1^2b^2})=\conf_2(\loclet{\tcminit^2b^2})=1$, $\conf_2(\loc_0)=k-3$ and $\conf_2(\loc)=0$ for all other $\loc\in\Loc$. Hence,
$\conf_2$ is valid and $\mconf(\conf_2)=(\tcmq_1,1,0)=\gamma_1$. Moreover, $\gamma_0\vdash_{\tcmt_1}\gamma_1$.

Let now $j<n$ and suppose that we have built $\gamma_0\vdash_{\tcmt_1}\gamma_1\dots\vdash_{\tcmt_j}\gamma_j$ with $\gamma_i=(q_i,\nu_1^i, \nu_2^i)=\mconf(\conf_{2i})$ for
all $1\leq i\leq j$. From the valid configuration $\conf_{2i}$ such that $\conf_{2i}(\loclet{{\tcminit^2}b^2})>0$, the only winning configurations reachable by System 
are the ones in $\accCond{(\tcmq_j,\tcmt)}$ for some $\tcmt$ starting in $\tcmq_j$. Let $\tcmt_{j+1}$ be the transition such that 
$\conf'_{2i}\in\accCond{(\tcmq_j,\tcmt_{j+1})}$. Assume for example that $\tcmt_{j+1}:\tcmq_j\xrightarrow{c_1--}\tcmq_{j+1}$. Then $\conf'_{2j}(\loclet{\tcmt_{j+1}})=
\conf'_{2j}(\loclet{a_1^3b^2})=1$, $\conf'_{2j}(\loc_0)=\conf_{2j}(\loc_0)-1$, $\conf'_{2j}(\loclet{a_1^2b^2})=\conf_{2j}(\loclet{a_1^2b^2})-1$, and $\conf'_{2j}(\loc)=\conf_{2j}(\loc)$ for all other $\loc\in\Loc$, with $\conf_{2j}(\loc)\neq 0$ implies that $\loc\in\Locok$ since $\conf_{2j}$ is valid. As before, in order to reach a non winning configuration, the only possibility for Environment is to go to $\conf_{2j+1}$ such that $\conf_{2j+1}(\loclet{\tcmt_{j+1}b})=\conf_{2j+1}(\loclet{a_1^3b^3})=
\conf_{2j+1}(\loclet{\tcmq_jb})=1$, $\conf_{2j+1}(\loclet{\tcmt_{j+1}})=\conf_{2j+1}(\loclet{\tcmq_j})=\conf_{2j+1}(\loclet{a_1^3b^2})=0$, and $\conf_{2j+1}(\loc)=
\conf'_{2j}(\loc)$ for all other $\loc\in\Loc$. Again, the only winning configuration System can reach without the help of Environment is $\conf'_{2j+1}$ such that
$\conf'_{2j+1}(\loclet{\tcmt_{j+1}^2b})=\conf_{2j+1}(\loclet{a_1^4b^3})=
\conf_{2j+1}(\loclet{\tcmq_j^2b})=1$, $\conf_{2j+1}(\loclet{\tcmq_{j+1}})=1$, $\conf_{2j+1}(\loclet{\tcmq_{j+1}b})=\conf_{2j+1}(\loclet{a_1^3b^3})=0$, and $\conf_{2j+1}(\loc)=
\conf'_{2j}(\loc)$ for all other $\loc\in\Loc$. Finally, the analysis of all the winning conditions shows that Environment cannot play anything else that
$\conf_{2j+2}(\loclet{\tcmt_{j+1}^2b^2})=\conf_{2j}(\loclet{\tcmt_{j+1}^2b^2})+1$, $\conf_{2j+2}(\loclet{a_1^4b^4})=\conf_{2j}(\loclet{a_1^4b^4})+1$,
$\conf_{2j+2}(\loclet{\tcmq_j^2b^2})=\conf_{2j}(\loclet{\tcmq_j^2b^2}\\)+1$, and $\conf_{2j+2}(\loc)=\conf'_{2j+1}(\loc)$ for all other $\loc\in\Loc$. In particular,
$\conf_{2j+2}(\loclet{\tcmq_{j+1}})=1$, $\conf_{2j+2}(\loclet{a_1^2b^2})=\conf_{2j}(\loclet{a_1^2b^2})-1$, $\conf_{2j+2}(\loclet{a_2^2b^2})=\conf_{2j}(\loclet{a_2^2b^2})$. Hence, $\conf_{2j+2}$ is valid, and $\mconf(\conf_{2j+2})=(\tcmq_{j+1}, \nu_1^j-1, \nu_2^j)=\gamma_{j+1}$, 
with $\gamma_j\vdash_{\tcmt_{j+1}}\gamma_{j+1}$.
\qed
\end{proof}

Assume now that there is no accepting run of $\tcm$ and consider a maximal $\strat$-compatible play $\pi$. Since $\pi$ is winning, it ends in a configuration
reached by System, so it is of the form $\pi=\conf_0\upd_0\dots\conf_{2n}\dots\conf'_{m}$ with 
$m\in\{2n,2n+1\}$, for some $n\in\N$, and $\conf'_{m}\models\Acc$. By Lemma~\ref{lemma:00N-right-to-left}, we have the corresponding run 
$\gamma_0\vdash_{\tcmt_1}\gamma_1\vdash_{\tcmt_2}\dots\vdash_{\tcmt_n}\gamma_n$, with $\conf_{2n}\in \confset(\gamma_n)$ and $\gamma_n$ not a halting
configuration. An analysis of the possible moves of System in that case shows that $\conf'_{2n}(\loclet{\tcmt})=1$ for some transition $\tcmt\in\tcmT$.
But from such a configuration, Environment can easily reach a non winning configuration, by playing $b$ on every location where the number of $b$ is 
strictly smaller
than the number of letters of System. Again, System moves to configuration $\conf'_{2n+1}$, which is winning. According to the precise definition of 
$\conf'_{2n+1}$, there is only one possibility for System: $\conf'_{2n+1}(\loclet{t}^2b)=1$, $\conf'_{2n+1}(\loclet{\tcmq'})=1$ for some $\tcmq'\in\tcmQ$,
 and possibly $\conf'_{2n+1}(\loclet{a_i^2b})=1$ or $\conf'_{2n+1}(\loclet{a_i^4b^3})=1$. In any case, Environment can still reach a non winning configuration
 by playing $b$ on all these locations. Then, either the play is not maximal, or it is not winning, both of which contradict our hypotheses. Hence, there is an 
 accepting run of $\tcm$.

This concludes the proof that $\Gameproblem{\Zero}{\Zero}{\N}$ is undecidable.

\end{document}